\newcommand{\Isf}{\ensuremath{\mathsf{I}}\xspace}
\newcommand{\xmark}{\ding{55}}
\newcommand{\cmark}{\ding{51}}%
\newcommand{\NO}{\textcolor{red}{\xmark}}
\newcommand{\YES}{\textcolor{green!60!black}{\cmark}}
\title{Deterministic Parikh Automata on Infinite Words} 
\author{Mario Grobler}{University of Bremen, Bremen, Germany}{grobler@uni-bremen.de}{https://orcid.org/0000-0001-8103-6440}{}
\author{Sebastian Siebertz}{University of Bremen, Bremen, Germany}{siebertz@uni-bremen.de}{https://orcid.org/0000-0002-6347-1198}{}
\authorrunning{M. Grobler and S. Siebertz} 
\keywords{Parikh automata, infinite words, determinism, model checking} 
\renewcommand{\epsilon}{\varepsilon}
\renewcommand{\phi}{\varphi}
\newcommand{\ie}{i.\,e.\xspace}
\newcommand{\eg}{e.\,g.\xspace}
\newcommand{\Amc}{\ensuremath{\mathcal{A}}\xspace}
\newcommand{\Bmc}{\ensuremath{\mathcal{B}}\xspace}
\newcommand{\Fmc}{\ensuremath{\mathcal{F}}\xspace}
\newcommand{\Kmc}{\ensuremath{\mathcal{K}}\xspace}
\newcommand{\Mmc}{\ensuremath{\mathcal{M}}\xspace}
\newcommand{\Omc}{\ensuremath{\mathcal{O}}\xspace}
\newcommand{\Pmc}{\ensuremath{\mathcal{P}}\xspace}
\newcommand{\Nbb}{\ensuremath{\mathbb{N}}\xspace}
\newcommand{\Nbbinfty}{\ensuremath{\Nbb_\infty}\xspace}
\newcommand{\Zbb}{\ensuremath{\mathbb{Z}}\xspace}
\newcommand{\bbf}{\ensuremath{\mathbf{b}}\xspace}
\newcommand{\ebf}{\ensuremath{\mathbf{e}}\xspace}
\newcommand{\ibf}{\ensuremath{\mathbf{i}}\xspace}
\newcommand{\pbf}{\ensuremath{\mathbf{p}}\xspace}
\newcommand{\ubf}{\ensuremath{\mathbf{u}}\xspace}
\newcommand{\vbf}{\ensuremath{\mathbf{v}}\xspace}
\newcommand{\0}{\ensuremath{\mathbf{0}}\xspace}
\newcommand{\1}{\ensuremath{\mathbf{1}}\xspace}
\begin{document}

\maketitle

\begin{abstract}
    Various variants of Parikh automata on infinite words have recently been introduced in the literature. 
    However, with some exceptions only their non-deterministic versions have been considered. 
    In this paper we study the deterministic versions of all variants of Parikh automata on infinite words that have not yet been studied.
    We compare the expressiveness of the deterministic models and investigate their closure properties and decision problems with applications to model checking. 
    The model of deterministic limit Parikh automata turns out to be most interesting, as it is the only deterministic Parikh model generalizing the $\omega$-regular languages, the only deterministic Parikh model closed under the Boolean operations and the only deterministic Parikh model for which all common decision problems are decidable. 
\end{abstract}
\section{Introduction}
Finite automata operating on infinite words find many applications in the world of formal languages, logic, formal verification, games, and many more.
In many scenarios, non-determinism adds expressiveness or succinctness to their deterministic counterparts. However, this often comes at the price of important decision problems becoming hard to solve, or even undecidable~\cite{ClarkeHandbook,thomas2002automata}.
This is not only true for the most common model of finite automata operating on infinite words, namely Büchi automata (which recognize the $\omega$-regular languages), but also for various extensions leaving the $\omega$-regular realm.
These include Parikh automata~(PA) on infinite words, which gained a lot of attention just recently~\cite{grobler2023remarks, infiniteZimmermann, msobapa}. Such automata are equipped with a constant number of counters (positive integers), and the counter values can be checked to satisfy some linear constraints.
Several acceptance conditions for Parikh \mbox{automata} on infinite words have been considered in the literature, including deterministic and non-deterministic reachability~PA, safety~PA, Büchi~PA and co-Büchi~PA \cite{infiniteZimmermann}, as well as non-deterministic limit~PA, reachability-regular~PA, and strong and weak reset~PA \cite{grobler2023remarks}.
For all of these models, except safety~PA and co-Büchi~PA, testing emptiness is \coNP-complete and universality is undecidable \cite{grobler2023remarks, infiniteZimmermann}. In contrast, for safety~PA and co-Büchi~PA, both problems are undecidable; however, requiring determinism yields a \coNP-complete universality problem~\cite{infiniteZimmermann}.

This motivates the study of the deterministic variants of the remaining models, namely deterministic limit~PA, reachability-regular~PA, strong reset~PA, and weak reset~PA.
In this paper we investigate their expressiveness, closure properties, and common decision problems.
Grobler et al.~\cite{grobler2023remarks} have shown that almost all non-deterministic variants of the aforementioned models (the exception being safety~PA and co-Büchi~PA again) form a hierarchy: \begin{align*}
    \text{reachability PA } & \subseteq \text{ reachability-regular~PA } = \text{ limit~PA} 
    \\ 
    & \subseteq \text{ Büchi~PA } \subseteq  
    \text{ weak reset~PA } = \text{ strong reset~PA.}
\end{align*}

\pagebreak
\noindent First, we show that this is not the case for their deterministic variants. While 

\vspace{-3mm}
\[\text{deterministic strong reset~PA $\subseteq$ deterministic weak reset~PA}\] 
and 
\begin{align*}
    \text{
deterministic} & \text{ reachability~PA} \\ &\subseteq \text{ deterministic reachability-regular~PA} \\ & \subseteq \text{deterministic weak reset~PA} 
\end{align*}
still holds, all other models become pairwise incomparable.
Furthermore, we show that among all studied deterministic models only deter\-ministic limit~PA generalize Büchi automata in the sense that they recognize all $\omega$-regular languages.

Deterministic limit~PA also shine in the light of closure properties: as we show, among all studied~PA operating on infinite words (the deterministic variants as well as the non-deterministic ones) they are the only ones closed under union, intersection and complement.

\begin{table}[h]
\begin{center}
\begin{tabular}{l|ccc}
     & $\cup$ & $\cap$ & $\bar{\phantom{a}}$ \\[0.5mm]
     \hline\\[-2.5mm]
    deterministic limit~PA& \YES & \YES & \YES \\[1mm]
    deterministic reachability-regular~PA & \NO & \NO & \NO \\[1mm]
    deterministic weak reset~PA& \NO & \NO & \NO \\[1mm]
    deterministic strong reset~PA & \NO & \NO & \NO \\[1mm]
\end{tabular}

\smallskip
\caption{\centering Closure properties.}
\label{tab:closure}
\end{center}
\end{table}

This benefit also yields decidable decision problems. In contrast to the other models that were studied in~\cite{grobler2023remarks,infiniteZimmermann}, emptiness and universality are decidable for deterministic limit~PA. We show that also strong reset~PA benefit from determinism: although having bad closure properties their universality problems becomes decidable. However, as we show, for deterministic reachability-regular~PA and deterministic weak reset~PA the universality problem remains undecidable.

\begin{table}[h]
\begin{center}
\begin{tabular}{l|ccc}
      & $L = \varnothing?$ & $uv^\omega \in L?$ & $L = \Sigma^\omega?$ \\[0.5mm] \hline\\[-2.5mm]
    deterministic limit~PA & \coNP-complete & \NP-complete & decidable, $\Pi_2^\P$-hard  \\[1mm]
    deterministic reachability-regular PA & \coNP-complete & \NP-complete & undecidable  \\[1mm]
    deterministic weak reset~PA & \coNP-complete & \NP-complete & undecidable \\[1mm]
    deterministic strong reset~PA & \coNP-complete & \NP-complete & $\Pi_2^\P$-complete \\[1mm]\end{tabular}

    \smallskip
\caption{\centering Decision problems.}
\label{tab:decision}
\end{center}
\end{table}

Finally, we study their intersection emptiness and inclusion problems, as these are important problems in order to study model checking problems, the core problems in the field of formal verification. 
In the model checking problems we are given a system~$\Kmc$ as a Kripke structure (a safety automaton) or a~PA and a specification $\Amc$ as a~PA. 
The question whether at least one computation of a Kripke structure satisfies the specification (which we call existential safety model checking and boils down to solving intersection-emptiness) has been studied in~\cite{grobler2023remarks} and is motivated by the detection of faulty computations. 
The authors of~\cite{grobler2023remarks} show that this problem is $\coNP$-complete for (non-deterministic) reset~PA and for all models that are weaker.
Similarly, the question whether all computations of a~PA satisfy the specification (which we call universal~PA model checking and boils down to solving inclusion) has been studied in~\cite{infiniteZimmermann}. 
The authors of~\cite{infiniteZimmermann} show that this problem is $\coNP$-complete for deterministic safety and co-Büchi~PA, and undecidable for their non-deterministic counterparts as well as for deterministic reachability and Büchi~PA.
We study this problem as well as the universal safety model checking problem and the existential~PA model checking problem for the remaining deterministic models. 
Again, deterministic limit~PA shine as all these problems are decidable for them. For the other models, some problems are decidable and some are not.

In the following table we list our results. Some of the questions have already been studied in the literature or follow immediately from the results of~\cite{grobler2023remarks,infiniteZimmermann}. 
We fill the missing gaps and remark that these results are the technically most demanding results of this paper.

\begin{table}[h]
\begin{center}
\begin{tabular}{l|llll}
      & \multicolumn{4}{c}{Model Checking}  \\ 
      & \multicolumn{2}{c|}{Kripke} & \multicolumn{2}{c}{PA} \\ 
      & \multicolumn{1}{c|}{$\exists$} & \multicolumn{1}{c|}{$\forall$} & \multicolumn{1}{c|}{$\exists$} & \multicolumn{1}{c}{$\forall$}  \\ \hline \\[-2.5mm]
    deterministic limit~PA & $\coNP$-c.\ \cite{grobler2023remarks} &dec., $\Pi_2^\P$-hard &$\coNP$-c.&dec., $\Pi_2^\P$-hard\\[1mm]
    deterministic reachability-regular PA &$\coNP$-c.\ \cite{grobler2023remarks} &undec.&$\coNP$-c.&undec. \\[1mm]
    deterministic weak reset~PA &$\coNP$-c.\ \cite{grobler2023remarks} &undec.&undec.&undec. \\[1mm]
    deterministic strong reset~PA &$\coNP$-c.\ \cite{grobler2023remarks}& $\Pi_2^\P$-c.& undec. &$\Pi_2^\P$-c.\\[1mm]

    deterministic reachability~PA &$\coNP$-c.\ \cite{grobler2023remarks} & undec.\ \cite{infiniteZimmermann}& $\coNP$-c.\ \cite{infiniteZimmermann} &undec.\ \cite{infiniteZimmermann}\\[1mm]
    deterministic Büchi~PA &$\coNP$-c.\ \cite{grobler2023remarks}& undec.\ \cite{infiniteZimmermann}& $\coNP$-c.\ &undec.\ \cite{infiniteZimmermann} \\[1mm]
    deterministic safety~PA &undec.\ \cite{infiniteZimmermann}& $\coNP$-c.\ \cite{infiniteZimmermann}& undec.\ \cite{infiniteZimmermann} &\coNP-c.\ \cite{infiniteZimmermann}\\[1mm]
    deterministic co-Büchi~PA &undec.\ \cite{infiniteZimmermann}& $\coNP$-c.\ \cite{infiniteZimmermann}& undec.\ \cite{infiniteZimmermann} &\coNP-c.\ \cite{infiniteZimmermann} \\[1mm]
\end{tabular}

\smallskip
\caption{\centering (Un)decidability results of model checking problems.}
\label{tab:mc}
\end{center}
\end{table}

\vspace{-6mm}

We remark that the $\Pi_2^\P$-hardness results we obtain for deterministic limit PA are not tight, that is, it remains open whether these problems can be solved in $\Pi_2^\P$. However, we are able to show that if we have the guarantee that the semi-linear sets of the PA can be complemented in polynomial time, then the $\Pi_2^\P$-hard problems for det.\ limit PA and det.\ strong reset PA become $\coNP$-complete. 
We also remark that the \coNP-completeness result for deterministic Büchi PA does not follow from~\cite{infiniteZimmermann}.

\subsection{Related work}
Parikh automata on finite words were introduced by Klaedtke and Ruess in~\cite{klaedtkeruess} in order to show that an extension of weak monadic second-order logic (MSO) with cardinality constraints is decidable.~PA turn out to be equivalent to Ibarra's reversal bounded multi-counter machines~\cite{ibarra} and Greibach's blind counter machines~\cite{greibach}, which are also known as integer vector addition systems with states ($\Zbb$-$\mathsf{VASS}$)~\cite{zvass}. Furthermore, these models can be efficiently translated into each other~\cite{logspaceconversion}.
The study of Parikh automata on infinite words was initiated just recently by Guha et al.~\cite{infiniteZimmermann}, followed by~\cite{grobler2023remarks}.
Closely related to~PA on infinite words are blind counter machines on infinite words, as introduced by Fernau and Stiebe~\cite{blindcounter}. Indeed, these machines turn out to be equivalent to Büchi~PA~\cite{grobler2023remarks}. Another model of automata with counters on infinite words are Büchi~$\mathsf{VASS}$~\cite{buchiVASScarstensen, buchiVASSValk}. 
A recent result by Baumann et al. shows that their regular separability problem (that is, the question whether an $\omega$-regular language separates two $\omega$-languages recognized by Büchi~$\mathsf{VASS}$) is decidable~\cite{buechiVASSsep}.
Another recent work shows the impact of~PA on infinite words. \mbox{Herrmann} et al.~\cite{msobapa} introduce a logic subsuming counting MSO~\cite{courcelle1} and Boolean algebra with Presburger arithmetic~\cite{bapa}, and show decidability over labeled infinite binary trees by introducing Parikh-Muller tree automata (which can be seen as an extension of reachability-regular~PA to trees).
Yet another recent work by Bergsträßer et al.~\cite{ramsey} shows that Ramsey quantifiers~\cite{ramseySem} in Presburger arithmetic can be eliminated efficiently. 
Such a quantifier basically asks for the existence of an infinite clique in the graph induced by some formulas. The authors show that their results strengthen recent results on~PA on infinite words. We use these results on Ramsey quantifiers to give an explicit application, namely that intersection emptiness of two Büchi~PA is decidable, and hence their existential~PA model checking problem.

\section{Preliminaries}
\label{sec:prelims}

\subsection{Finite and infinite words}
We write $\Zbb$ for the set of all integers and $\Nbb$ for the set of non-negative integers including~$0$. Furthermore, let $\Nbbinfty = \Nbb \cup \{\infty\}$. Throughout the paper we use the variables $c,d,k,\ell,m,n,z$ to denote positive integers and we will tacitly assume this without explicit mention. Let~$\Sigma$ be an alphabet, \ie, a finite non-empty set and let~$\Sigma^*$ be the set of all finite words over $\Sigma$. 
For a word $w \in \Sigma^*$, we denote by $|w|$ the length of $w$, and by $|w|_a$ the number of occurrences of the letter $a \in \Sigma$ in~$w$. 
We write $\varepsilon$ for the empty word of length~$0$.

An \emph{infinite word} over an alphabet $\Sigma$ is a function $\alpha : \Nbb \setminus \{0\} \rightarrow \Sigma$. We often write~$\alpha_i$ instead of~$\alpha(i)$. 
Thus, we can understand an infinite word as an infinite sequence of symbols $\alpha = \alpha_1\alpha_2\alpha_3\ldots$ For $m \leq n$, we abbreviate the finite infix $\alpha_m \ldots \alpha_n$ by $\alpha[m:n]$. 
We denote by $\Sigma^\omega$ the set of all infinite words over $\Sigma$. 
We call a subset $L \subseteq \Sigma^\omega$ an \emph{$\omega$-language}. 
Moreover, for $L \subseteq \Sigma^*$, we define $L^\omega = \{w_1w_2\dots \mid w_i \in L \setminus \{\varepsilon\}\} \subseteq \Sigma^\omega$. 
We call a non-empty $\omega$-language $L \subseteq \Sigma^\omega$ \emph{ultimately periodic} if it contains an infinite word of the form $uv^\omega$ for finite words $u,v \in \Sigma^*$. We refer to such infinite words also as \emph{ultimately periodic}.

\subsection{\boldmath Regular and \texorpdfstring{$\omega$}{omega}-regular languages}

A \emph{non-deterministic finite automaton} (NFA) is a tuple $\Amc = (Q, \Sigma, q_0, \Delta, F)$, where~$Q$ is a finite set of states, $\Sigma$ is the input alphabet, $q_0 \in Q$ is the initial state, $\Delta \subseteq Q \times \Sigma \times Q$ is the set of transitions and $F \subseteq Q$ is the set of accepting states. 
We call~$\Amc$ \emph{deterministic} if for every pair $(p, a) \in Q \times \Sigma$ there is exactly one transition of the form $(p, a, q) \in \Delta$ for some $q \in Q$.
A \emph{run} of $\Amc$ on a word $w = w_1 \ldots w_n\in \Sigma^*$ is a (possibly empty) sequence of transitions $r = r_1 \ldots r_n$ with $r_i = (p_{i-1}, w_i, p_i)\in \Delta$ such that~$p_0=q_0$. 
We say $r$ is \emph{accepting} if $p_n \in F$. The empty run on~$\epsilon$ is accepting if $q_0 \in F$. We define the \emph{language recognized by~\Amc} as $L(\Amc) = \{w \in \Sigma^* \mid \text{there is an accepting run of $\Amc$ on $w$}\}$. If a language $L$ is recognized by some NFA $\Amc$, we call~$L$ \emph{regular}.

A \emph{Büchi automaton} is an NFA $\Amc = (Q, \Sigma, q_0, \Delta, F)$ that takes \mbox{infinite} words as input. 
A \emph{run} of $\Amc$ on an infinite word $\alpha_1\alpha_2\alpha_3\dots$ is an infinite sequence of transitions $r = r_1 r_2 r_3 \dots$ with $r_i = (p_{i-1}, \alpha_i, p_i) \in \Delta$ such that $p_0=q_0$. 
We say $r$ is \emph{accepting} if there are infinitely many~$i$ with $p_i \in F$. 
We define the \emph{$\omega$-language recognized by~$\Amc$} as $L_\omega(\Amc) = \{\alpha \in \Sigma^\omega \mid \text{there is an accepting run of $\Amc$ on $\alpha$}\}$. 
If an $\omega$-language $L$ is recognized by some Büchi automaton $\Amc$, we call $L$ \emph{$\omega$-regular}.
If every state of a Büchi automaton~$\Amc$ is accepting, we call $\Amc$ a \emph{safety automaton}.
Similarly, a \emph{Muller automaton} is a tuple $\Amc = (Q, \Sigma, q_0, \Delta, \Fmc)$, where $Q, \Sigma, q_0$, and $\Delta$ are defined as for Büchi automata, and $\Fmc \subseteq 2^Q$ is a collection of sets of accepting states. Runs are defined as for Büchi automata, and a run~$r$ is accepting if the sets of states that appear infinitely often in~$r$ is contained in~$\Fmc$. 
Deterministic Muller automata have the same expressiveness as non-deterministic Büchi automata~\cite{mcnaughton1966testing}. However, deterministic Büchi automata are less expressive than their non-deterministic counterpart~\cite{landweber}. 
If an $\omega$-language $L$ is recognized by some deterministic Büchi automaton, we call $L$ \emph{deterministic $\omega$-regular}.
For a (finite word) language $L \subseteq \Sigma^*$, we define $\vec{L} = \{\alpha \in \Sigma^\omega \mid \alpha[1:i] \in L \text{ for infinitely many } i\}$. An $\omega$-language~$L$ is deterministic $\omega$-regular if and only if $L = \vec{W}$ for a regular language $W$~\cite{landweber}.

\subsection{Semi-linear sets}
A \emph{linear set} (of dimension $d\geq 1$) is a set of the form $C(\bbf, P) = \{\bbf + \pbf_1z_1 + \dots + \pbf_\ell z_\ell \mid z_1, \dots, z_\ell \in \Nbb\} \subseteq \Nbb^d$, where $\bbf \in \Nbb^d$ and $P = \{\pbf_1, \dots, \pbf_\ell\} \subseteq \Nbb^d$ is a finite set of vectors for some $\ell \geq 0$. We call $\bbf$ the \emph{base vector} and $P$ the set of \emph{period vectors}.
A \emph{semi-linear set} is a finite union of linear sets.
We also consider semi-linear sets over~$\Nbbinfty^d$, that is, semi-linear sets with an additional symbol $\infty$ for infinity. As usual, addition of vectors and multiplication of a vector with a number is defined component-wise, where \mbox{$z + \infty = \infty + z = \infty + \infty =\infty$} for $z \in \Nbb$, $z \cdot \infty = \infty \cdot z = \infty$ for $z \geq 1$, and \mbox{$0 \cdot \infty = \infty \cdot 0 = 0$}.
For vectors $\ubf = (u_1, \dots, u_c)\in
\Nbbinfty^c$ and $\vbf = (v_1, \dots, v_d) \in \Nbbinfty^d$, we denote by $\ubf \cdot \vbf = (u_1, \dots, u_c, v_1, \dots, v_d) \in \Nbbinfty^{c+d}$ the \emph{concatenation of~$\ubf$ and $\vbf$}. 
We extend this definition to sets of vectors. Let $C \subseteq \Nbbinfty^c$ and $D \subseteq \Nbbinfty^d$. Then $C \cdot D = \{\ubf \cdot \vbf \mid \ubf \in C, \vbf \in D\} \subseteq \Nbbinfty^{c+d}$.
We denote by~$\0^d$ the $d$-dimensional all-zero vector, by $\1^d$ the $d$-dimensional all-one-vector, by~$\ebf^d_i$ the $d$-dimensional vector where the $i$th entry is~$1$ and all other entries are~0, and by~$\ibf^d_i$ the $d$-dimensional vector where the $i$th entry is~$\infty$ and all other entries are~0. We often drop the superscript $d$ if the dimension is clear. For all complexity results, we assume an explicit encoding of semi-linear sets in binary.

\subsection{Parikh-recognizable languages}
A \emph{Parikh automaton} (PA) is a tuple $\Amc = (Q, \Sigma, q_0, \Delta, F, C)$ where $Q$, $\Sigma$, $q_0$, and~$F$ are defined as for NFA, $\Delta \subseteq Q \times \Sigma \times \Nbb^d \times Q$, for some \mbox{$d\geq 1$}, is a finite set of \emph{labeled transitions}, and $C \subseteq \Nbb^d$ is a semi-linear set. 
We call $d$ the \emph{dimension} of $\Amc$ and interpret $d$ as a number of \emph{counters}.
Analogously to NFA, we call $\Amc$ \emph{deterministic} if for every pair $(p,a) \in Q \times \Sigma$ there is exactly one labeled transition of the form \mbox{$(p,a,\vbf,q) \in \Delta$} for some $\vbf \in \Nbb^d$ and $q \in Q$.
A \emph{run} of $\Amc$ on a word $w = w_1 \dots w_n$ is a (possibly empty) sequence of labeled transitions $r = r_1 \dots r_n$ with $r_i = (p_{i-1}, w_i, \vbf_i, p_i) \in \Delta$ such that $p_0 = q_0$. We define the \emph{extended Parikh image} of a run $r$ as $\rho(r) = \sum_{i \leq n} \vbf_i$ (where the empty sum equals~$\0$).
We call $r$ accepting if~$p_n \in F$ and \mbox{$\rho(r) \in C$},
referring to the latter condition as the \emph{Parikh condition}. 
The \emph{language recognized by~\Amc} is \mbox{$L(\Amc) = \{w \in \Sigma^*\mid$} there is an accepting run of $\Amc$ on $w\}$. 
\section{Parikh automata on infinite words}
\label{sec:models}
In this section, we recall the acceptance conditions of Parikh automata operating on infinite words that were considered before in the literature. 

\smallskip
Let $\Amc = (Q, \Sigma, q_0, \Delta, F, C)$ be a~PA. A run of $\Amc$ on an infinite word $\alpha = \alpha_1 \alpha_2 \alpha_3 \dots$ is an infinite sequence of labeled transitions $r = r_1 r_2 r_3 \dots$ with $r_i = (p_{i-1}, \alpha_i, \vbf_i, p_i)\in \Delta$ such that $p_0 = q_0$. 
The automata defined below differ only in their acceptance conditions, hence they are syntactically equivalent but semantically different (the notion of determinism translates directly). In the following, whenever we say that an automaton $\Amc$ accepts an infinite word~$\alpha$, we mean that there is an accepting run of $\Amc$ on~$\alpha$.
We begin with the models studied by Guha et al.~\cite{infiniteZimmermann}, who also studied the deterministic variants of the these models.

\begin{enumerate}
    \item The run $r$ satisfies the \emph{safety condition} if for every $i \geq 0$ we have $p_i \in F$ and $\rho(r_1 \dots r_i) \in C$. We call a~PA accepting with the safety condition a \emph{safety~PA}~\cite{infiniteZimmermann}. 
    We define the $\omega$-language recognized by a safety~PA $\Amc$ as 
    \[S_\omega(\Amc) = \{\alpha\in \Sigma^\omega\mid \Amc \text{ accepts }\alpha\}.\]

    \item The run $r$ satisfies the \emph{reachability condition} if there is an $i \geq 1$ such that $p_i \in F$ and $\rho(r_1 \dots r_i) \in C$. We say there is an \emph{accepting hit} in $r_i$. We call a~PA accepting with the reachability condition a \emph{reachability~PA}~\cite{infiniteZimmermann}. We define the $\omega$-language recognized by a reachability~PA $\Amc$ as 
    \[R_\omega(\Amc) = \{\alpha\in \Sigma^\omega\mid \Amc \text{ accepts }\alpha\}.\] 

    \item The run $r$ satisfies the \emph{Büchi condition} if there are infinitely many $i \geq 1$ such that $p_i \in F$ and $\rho(r_1 \dots r_i) \in C$. We call a~PA accepting with the Büchi condition a \emph{Büchi~PA}~\cite{infiniteZimmermann}. 
    We define the $\omega$-language recognized by a Büchi~PA $\Amc$ as 
    \[B_\omega(\Amc) = \{\alpha\in \Sigma^\omega\mid \Amc \text{ accepts }\alpha\}.\] 
    

    \item The run $r$ satisfies the \emph{co-Büchi condition} if there is $i_0$ such that for every $i \geq i_0$ we have $p_i \in F$ and \mbox{$\rho(r_1 \dots r_i) \in C$}. 
    We call a~PA accepting with the co-Büchi condition a \emph{co-Büchi PA}~\cite{infiniteZimmermann}. 
    We define the $\omega$-language recognized by a co-Büchi~PA $\Amc$ as 
    \[CB_\omega(\Amc) = \{\alpha\in \Sigma^\omega\mid \Amc\text{ accepts }\alpha\}.\]  
\end{enumerate}

\medskip
\noindent Now we recall the models introduced in~\cite{grobler2023remarks}.

\begin{enumerate}
\setcounter{enumi}{4}
    \item The run satisfies the \emph{reachability and regularity condition} if there is an $i \geq 1$ such that $p_i \in F$ and $\rho(r_1 \dots r_i) \in C$, and there are infinitely many $j \geq 1$ such that $p_j \in F$. We call a~PA accepting with the reachability and regularity condition a \emph{reachability-regular}~PA~\cite{grobler2023remarks}. We define the $\omega$-language recognized by a reachability-regular~PA $\Amc$ as \[RR_\omega(\Amc) = \{\alpha\in \Sigma^\omega\mid \Amc\text{ accepts }\alpha\}.\] 

    \item The run satisfies the \emph{limit condition} if there are infinitely many~$i \geq 1$ such that $p_i \in F$, and if additionally $\rho(r) \in C$, where the $j$th component of $\rho(r)$ is computed as follows. If there are infinitely many~$i \geq 1$ such that the $j$th component of $\vbf_i$ has a non-zero value, then the $j$th component of~$\rho(C)$ is~$\infty$. In other words, if the sum of values in a component diverges, then its value is set to $\infty$. Otherwise, the infinite sum yields a positive integer. We call a~PA accepting with the limit condition a \emph{limit~PA}~\cite{grobler2023remarks}.
    We define the $\omega$-language recognized by a limit~PA $\Amc$ as \[L_\omega(\Amc) = \{\alpha\in \Sigma^\omega\mid  \Amc\text{ accepts }\alpha\}.\] 

    \item The run satisfies the \emph{strong reset condition} if the following holds. Let $k_0 = 0$ and denote by $k_1 < k_2 < \dots$ the positions of all accepting states in~$r$. Then $r$ is accepting if the sequence $k_1, k_2, \dots$ is infinite and $\rho(r_{k_{i-1}+1} \dots r_{k_i}) \in C$ for all $i \geq 1$.
    We call a~PA accepting with the strong reset condition a \emph{strong reset~PA}~\cite{grobler2023remarks}. We define the $\omega$-language recognized by a strong reset~PA $\Amc$ as \[SR_\omega(\Amc) = \{\alpha\in \Sigma^\omega\mid \Amc\text{ accepts }\alpha\}.\] 

    \item The run satisfies the \emph{weak reset condition} if there are infinitely many reset positions $0 = k_0 < k_1 < k_2 \dots$ such that $p_{k_i} \in F$ and $\rho(r_{k_{i-1}+1} \dots r_{k_i}) \in C$ for all $i \geq 1$.
    We call a~PA accepting with the weak reset condition a \emph{weak reset~PA}~\cite{grobler2023remarks}.
    We define the $\omega$-language recognized by a weak reset~PA $\Amc$ as \[W\!R_\omega(\Amc) = \{\alpha\in \Sigma^\omega\mid \Amc\text{ accepts }\alpha\}.\] 
\end{enumerate}

Intuitively worded, whenever a strong reset~PA enters an accepting state, the Parikh condition \emph{must} be satisfied. Then the counters are reset.
Similarly, a weak reset~PA may reset the counters whenever there is an accepting hit, and they must reset infinitely often, too.
The equivalence of the (non-deterministic variants of the) two models is shown in~\cite{grobler2023remarks}. 
Furthermore, it is shown in~\cite{grobler2023remarks}  that (non-deterministic) limit~PA and (non-deterministic) reachability-regular are equivalent, as they recognize exactly $\omega$-languages of the form $\bigcup_i U_i V_i^\omega$, where the $U_i$ are (finite word) languages recognized by~PA, and the $V_i$ are regular languages.

Finally, we note that Guha et al.~\cite{infiniteZimmermann} assume that reachability~PA are complete, i.e., for every $(p,a)\in Q\times \Sigma$ there are $\vbf\in \Nbb^d$ and $q\in Q$ such that $(p,a,\vbf,q)\in \Delta$, as incompleteness allows to express additional safety conditions. 
We also make this assumption in order to avoid inconsistencies.
In fact, we can assume that all models are complete, as the other models can be completed by adding a non-accepting sink. Observe that their deterministic variants are always complete by definition.

\section{Closure Properties}
\label{sec:closure}


We now study the closure properties of the deterministic variants of the models introduced by Grobler et al.~\cite{grobler2023remarks}, that is, deterministic limit~PA, deterministic reachability-regular~PA, deterministic strong reset~PA, and deterministic weak reset~PA.

It is well known that semi-linear sets over $\Nbb^d$ are closed under complement~\cite{semilin}; see also~\cite{haase}.
Before we study deterministic limit automata we show that this is also true for semi-linear sets enriched with $\infty$. 

\begin{lemma}\label{lem:complement-semi-linear-with-infty}
    Let $C \subseteq \Nbbinfty^d$ be a semi-linear set. Then the complement $\bar{C} = \Nbbinfty^d \setminus C$ is semi-linear.
\end{lemma}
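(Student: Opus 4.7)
The plan is to partition $\Nbbinfty^d$ into $2^d$ ``infinity sectors'' and reduce to the classical closure of semi-linear sets over $\Nbb^d$ under complement. For each $S \subseteq \{1, \dots, d\}$, let $N_S = \{\xbf \in \Nbbinfty^d : x_i = \infty \text{ iff } i \in S\}$. Each $N_S$ is itself a linear set in $\Nbbinfty^d$ (with base vector $\sum_{i \in S} \ibf_i$ and period set $\{\ebf_i : i \notin S\}$), and the family $\{N_S\}_S$ partitions $\Nbbinfty^d$. Hence it suffices to show that each $N_S \setminus (C \cap N_S)$ is semi-linear, and then take their finite union to obtain $\bar C$.

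Next I would analyse how a single linear set $C(\bbf, P)$ with $P = \{\pbf_1, \dots, \pbf_\ell\}$ intersects a sector $N_S$ (the semi-linear case follows by taking finite unions). The key observation is that whether a coordinate of $\bbf + \sum_j z_j \pbf_j$ equals $\infty$ depends only on which multipliers $z_j$ are strictly positive, not on their actual values, because $1 \cdot \infty = \infty$ and $0 \cdot \infty = 0$. So for each $T \subseteq \{1, \dots, \ell\}$, substituting $z_j = 1 + z_j'$ for $j \in T$ and $z_j = 0$ otherwise yields the linear set
\[
C_T \;=\; C\!\Bigl(\bbf + \sum_{j \in T} \pbf_j,\; \{\pbf_j : j \in T\}\Bigr),
\]
and $C_T$ lies entirely in $N_{S(T)}$ where $S(T) = \{i : \bbf_i = \infty\} \cup \bigcup_{j \in T}\{i : (\pbf_j)_i = \infty\}$. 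Thus $C = \bigcup_T C_T$ and $C \cap N_S = \bigcup_{T : S(T) = S} C_T$.

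Now I would forget the $S$-coordinates of each $C_T$ with $S(T) = S$ (they are all $\infty$ by construction, and only finite arithmetic occurs in the remaining coordinates). This yields a semi-linear set $C'_S \subseteq \Nbb^{d - |S|}$. By the classical complement closure over $\Nbb^{d-|S|}$~\cite{semilin, haase}, the complement $\overline{C'_S}$ in $\Nbb^{d - |S|}$ is again semi-linear. Reinserting $\infty$ in the $S$-coordinates of each base vector and $0$ in the $S$-coordinates of each period vector gives a semi-linear description of $N_S \setminus (C \cap N_S)$ in $\Nbbinfty^d$. Finally, $\bar C = \bigcup_S \bigl(N_S \setminus (C \cap N_S)\bigr)$ is a finite union of semi-linear sets, hence semi-linear.

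The only delicate point is the case analysis underlying $C \cap N_S$: once one realises that the ``infinity pattern'' of a point of $C$ is governed purely by the zero/positive pattern of the multipliers, the problem splits cleanly along the $2^d$ sectors and everything else reduces to the well-known finite-valued result. No new combinatorial ingredient is needed.
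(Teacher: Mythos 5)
Your proof is correct, but it takes a different route from the paper's. The paper defines a global bijection $f : \Nbbinfty \rightarrow \Nbb$ with $f(\infty)=0$ and $f(i)=i+1$, extends it component-wise, proves that $C$ is semi-linear if and only if $f(C) \subseteq \Nbb^d$ is semi-linear (both directions require an explicit case analysis over subsets of coordinates and over ``compatible'' subsets of period vectors), and then transports the classical complement closure over $\Nbb^d$ back through $f^{-1}$. You instead partition $\Nbbinfty^d$ into the $2^d$ sectors $N_S$ determined by the infinity pattern, observe that each linear piece $C_T$ (obtained by fixing which multipliers are positive) lives in a single sector, project each sector onto its finite coordinates to land in $\Nbb^{d-|S|}$, complement there, and lift back. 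Both arguments rest on the same combinatorial core --- the infinity pattern of a point of a linear set depends only on the support of the multiplier tuple, which is why the decomposition over subsets $T$ of period vectors appears in both --- but your packaging is more modular: the sector projection is a bijection onto $\Nbb^{d-|S|}$, so complementation within a sector is immediate, and you avoid having to prove the converse direction of the paper's equivalence ($f(C)$ semi-linear $\Rightarrow$ $C$ semi-linear). What the paper's version buys in exchange is a single dimension-preserving correspondence $C \leftrightarrow f(C)$ that is reusable for transferring other closure properties, not just complementation. All steps of your argument check out: the $N_S$ are linear, $\bigcup_T C_T = C$, each $C_T \subseteq N_{S(T)}$, the restricted period vectors are finite by definition of $S(T)$, and the lift of $\overline{C'_S}$ is exactly $N_S \setminus (C \cap N_S)$.
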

\begin{proof}
    Let $f : \Nbbinfty \rightarrow\Nbb$ be the bijection with $f(\infty) = 0$ and $f(i) = i+1$ for $i \in \Nbb$.
    We extend $f$ to vectors \mbox{$\vbf = (v_1, \dots, v_d) \in \Nbbinfty^d$} and sets of vectors $C \subseteq \Nbbinfty^d$ component-wise: $f(v_1, \dots, v_d) = (f(v_1), \dots, f(v_d))$ and $f(C) = \{f(\vbf) \mid \vbf \in C\}$. Note that \mbox{$f(C)\subseteq \Nbb^d$}. 

    Now we observe that $f(C)$ is semi-linear if and only if $C$ is semi-linear. First assume that $C$ is semi-linear. We may assume that $C = C(\bbf, P)$ is linear, as we can carry out the following procedure for every linear set individually. Assume $\bbf = (b_1, \dots, b_d)$. We define $D_\infty(\bbf) = \{i \mid b_i = \infty\}$.
    For a set $D \subseteq \{1, \dots, d\}$ with $D_\infty(\bbf) \subseteq D$ and a vector $\vbf = (v_1, \dots, v_d) \in \Nbbinfty^d$, let $\vbf^D = (v_1^D, \dots, v_d^D)$ with $v_i^D = 0$ if $i \in D$ and $v_i^D = v_i$ if $i \notin D$. 
    Furthermore, we call a subset $P' \subseteq P$ of period vectors \emph{$D$-compatible} if for every $i \in D \setminus D_\infty(\bbf)$, the set $P'$ contains at least one vector where the $i$th component is~$\infty$, and if for every $i \notin D$, the set $P'$ contains no vector where the $i$th component is $\infty$. 
    Observe that this definition ensures that for every vector $\vbf \in P'$ we have $\vbf^D \in \Nbb^d$, that is, no component in $\vbf^D$ is $\infty$. 
    Let $\Pmc^D_\infty \subseteq 2^P$ be the collection of $D$-compatible subsets of~$P$.
    Then we have $f(C) = \bigcup_{D_\infty(\bbf) \subseteq D \subseteq \{1, \dots, d\}} \bigcup_{P' \in \Pmc^D_\infty} C(\bbf^D + \1^D + \sum_{\pbf \in P'} \pbf^D, \{\pbf^D \mid \pbf \in P'\})$,
    which is semi-linear by definition.
    
    For the other direction, we may again assume that $f(C) = C(\bbf, P)$ is linear.
    Similar to above, assume $\bbf = (b_1, \dots, b_d)$ and define $D_0(\bbf) = \{i \mid b_i = 0\}$.
    For a set $D \subseteq D_0(\bbf)$ we call a subset $P' \subseteq P$ of period vectors \emph{$D$-safe} if for every $i \in D$ the set $P'$ contains no vector where the $i$th component is not 0, and if for every $i \notin D$ the set $P'$ contains at least one vector where $i$th component is not 0.
    Let $\Pmc^D_0 \subseteq 2^P$ be the collection of $D$-safe subsets of $P$.
    Let $\ibf_D = (v_1, \dots, v_d)$ with $v_i = \infty$ if $i \in D$ and $v_i = 0$ if $i \notin D$.
    Then we have $C = \bigcup_{D \subseteq D_0(\bbf)} \bigcup_{P' \in \mathcal{P}^D_0} C(\ibf_D + \bbf - \1 + \sum_{\pbf \in P'} \pbf, P')$, which is semi-linear by definition (observe that every component in $\ibf_D + \bbf + \sum_{\pbf \in P'} \pbf$ is strictly greater 0, hence we can subtract $\1$ without getting negative; furthermore, we assume $\infty-1=\infty$).

    As semi-linear sets over $\Nbb^d$ are closed under complement, we have 
    \begin{align*}
        \text{$C$ is semi-linear} & \Leftrightarrow \text{$f(C)$ is semi-linear} \\
        & \Leftrightarrow \text{$\Nbb^d \setminus f(C)$ is semi-linear}\\
        & \Leftrightarrow \text{$f^{-1}(\Nbb^d \setminus f(C)) = \bar{C}$ is semi-linear}.\qedhere
    \end{align*}
\end{proof}

\begin{lemma}\label{lem:closure-limit}
    The class of deterministic limit~PA recognizable languages is closed under union, intersection and complement. 
\end{lemma}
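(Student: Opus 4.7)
The plan is to establish closure under intersection and complement directly, then derive closure under union via De Morgan. The feature that distinguishes deterministic limit PA from deterministic Büchi automata is that the Parikh condition on the extended Parikh image can itself encode Büchi-style infiniteness: a fresh counter incremented on every visit to $F$ takes the value $\infty$ precisely when $F$ is visited infinitely often. Combined with the fact that a deterministic PA has a unique run per input, this lets us negate the acceptance condition with no powerset construction.

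For the intersection of deterministic limit PA $\Amc_1$ and $\Amc_2$ (of dimensions $d_1, d_2$) I would use the standard Büchi product: states $Q_1 \times Q_2 \times \{1,2\}$ with the third component alternating in the usual way so that the designated accepting set is visited infinitely often iff $F_1$ and $F_2$ are both visited infinitely often in the respective projections. Transitions concatenate the counter updates (total dimension $d_1 + d_2$), and the new Parikh condition is $C_1 \cdot C_2 \subseteq \Nbbinfty^{d_1 + d_2}$, which is semi-linear because semi-linear sets are closed under Cartesian product. Since only $\Amc_i$'s transitions contribute to its own coordinates, the extended Parikh image of the product projects onto each side to exactly the extended Parikh image that $\Amc_i$ would produce on $\alpha$, so correctness is immediate.

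For the complement of $\Amc = (Q, \Sigma, q_0, \Delta, F, C)$ with $C \subseteq \Nbbinfty^d$ I would construct $\Amc'$ with the same states and transition structure, one fresh counter that is incremented by $1$ on every transition leading into $F$, new accepting set $F' = Q$ (so the Büchi part of the limit condition holds trivially), and new Parikh condition
\[
C' \;=\; \bigl(\Nbbinfty^d \cdot \Nbb\bigr)\;\cup\;\bigl(\bar C \cdot \Nbbinfty\bigr) \;\subseteq\; \Nbbinfty^{d+1}.
\]
Here $\bar C$ is semi-linear by Lemma~\ref{lem:complement-semi-linear-with-infty}, and $\Nbb$, $\Nbbinfty$, $\Nbbinfty^d$ are manifestly semi-linear, so $C'$ is semi-linear. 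The extra coordinate of the extended Parikh image of $\Amc'$'s run equals $\infty$ iff $\Amc$'s run visits $F$ infinitely often; a four-case analysis on (i) this coordinate and (ii) whether the first $d$ coordinates lie in $C$ then yields $L_\omega(\Amc') = \Sigma^\omega \setminus L_\omega(\Amc)$. Closure under union follows from $L_1 \cup L_2 = \overline{\overline{L_1} \cap \overline{L_2}}$.

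The main obstacle I expect is the complement step, since deterministic Büchi automata are famously not closed under complement and one might worry the same holds here. The resolution is precisely that limit acceptance already blends Büchi-style visits with a Parikh condition on divergence, so negating ``$F$ visited infinitely often'' can be re-expressed as the semi-linear constraint that the fresh coordinate stays in $\Nbb$ rather than equalling $\infty$. Pairing this with Lemma~\ref{lem:complement-semi-linear-with-infty} for $\bar C$ is what keeps the complement inside the class, and it is also what explains why the analogous closure does not hold for the weaker deterministic models studied in the next sections.
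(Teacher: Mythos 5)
Your proposal is correct and follows essentially the same route as the paper: both hinge on encoding ``$F$ is visited infinitely often'' into the semi-linear set via a fresh counter whose limit value is $\infty$ exactly when $F$ is hit infinitely often, then exploiting determinism (unique run) together with \Cref{lem:complement-semi-linear-with-infty} to complement, and a product construction for the binary operation. The only cosmetic differences are that the paper handles union directly in the product (disjunction of the two Parikh conditions) rather than via De Morgan, and states the complement as ``normalize so every state is accepting, then complement $C$'' rather than writing out $C'$ explicitly --- your $C'$ is exactly that complement.
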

\begin{proof}
    First observe that we can always assume that every state of a (deterministic) limit~PA is accepting, as we can check the existence of an accepting state being visited infinitely often in the semi-linear set. To achieve this, we introduce one new counter and increment it at every transition that points to an accepting state. In the semi-linear set we enforce that this counter is $\infty$.

    Hence, we can show the closure under union and intersection by a standard product construction. In case of union, we test whether at least one automaton has good counter values, and we can show the closure under intersection by testing whether both automata have good counter values.
    Closure under complement follows immediately from \Cref{lem:complement-semi-linear-with-infty}.
\end{proof}

Following the standard proof showing that the language $L_{a<\infty} = \{\alpha \in \{a,b\}^\omega \mid |\alpha|_a < \infty\}$ is not $\omega$-regular, we make 
the following observation. 


\begin{observation}
\label{lem:not_all_reg}
    There is no det.\ reach-reg.\ PA, det.\ weak reset PA or det.\ strong reset PA recognizing the $\omega$-regular language $L_{a<\infty}$.
\end{observation}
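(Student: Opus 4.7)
The plan is to mimic the classical argument that $L_{a<\infty}$ is not recognized by any deterministic Büchi automaton. Assume for contradiction that a deterministic $\Amc$ of one of the three types recognizes $L_{a<\infty}$. Since $b^\omega \in L_{a<\infty}$, the unique run of $\Amc$ on $b^\omega$ is accepting, and the goal is to construct an $\omega$-word $\alpha = b^{n_1} a b^{n_2} a b^{n_3} a \dots$ with infinitely many $a$'s that is still accepted by $\Amc$, contradicting $L(\Amc) = L_{a<\infty}$. The indices $n_i$ are chosen inductively, exploiting that any finite prefix $w$ of $\alpha$ has only finitely many $a$'s, so $wb^\omega \in L_{a<\infty}$ is accepted and its run supplies the structural information needed for the next step.

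For deterministic reachability-regular PA, I would fix a position $i^*$ of an accepting hit in the run on $b^\omega$ and pick $n_1 \geq i^*$ with $p_{n_1} \in F$, which is possible by the regularity condition. The runs of $\alpha$ and of $b^\omega$ agree on the first $n_1$ positions, preserving the accepting hit at $i^*$; inductively, since $wb^\omega$ is accepted its run has infinitely many $F$-visits by regularity, and $n_{k+1}$ is taken as the first such offset. For deterministic strong reset PA the same scheme works and the Parikh condition is automatic because every $F$-visit is a reset: take $n_1$ as the position of the first $F$-visit on $b^\omega$, so that $[1, n_1]$ has Parikh image in $C$; after inserting an $a$, the word $b^{n_1} a b^\omega$ is accepted and by the strong reset condition every one of its $F$-visits provides a block with Parikh image in $C$, so letting $n_2$ be the offset of the first $F$-visit after position $n_1 + 1$ and iterating yields an $\alpha$ whose only $F$-visits are the chosen positions and all of whose blocks between them lie in $C$.

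The main obstacle is the deterministic weak reset PA case, because the naive greedy construction breaks: weak reset only requires some valid sub-sequence of $F$-visits to have block Parikh images in $C$, so $wb^\omega$ being accepted does not guarantee that the first $F$-visit after our inserted $a$ has block Parikh image in $C$ measured from our last designated reset. My plan is to replace the greedy construction by the ultimately periodic one $\alpha = b^N (a b^M)^\omega$, using pigeonhole together with the eventual periodicity of the deterministic $b$-run. Fix a valid reset sequence $0 < k_1 < k_2 < \dots$ for the run on $b^\omega$; by pigeonhole on $F$, some $p^* \in F$ occurs as $p_{k_j}$ for infinitely many $j$, and by eventual periodicity the pattern of reset-states between consecutive $p^*$-resets is periodic, yielding a fixed tuple $\sigma_1, \dots, \sigma_m \in C$ of block Parikh images inside one super-block. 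I then pick $N$ such that position $N$ is a $p^*$-reset of $b^\omega$ (so that $[1, N]$ decomposes into valid blocks) and $M$ such that $a b^M$ is a cycle returning to $p^*$, via a case analysis on whether the $b$-cycle of $\delta(p^*, a)$ coincides with the $b$-cycle of $q_0$ (iterating through a finite chain of $b$-cycles by pigeonhole if they differ). The main technical step, and the hardest part of the proof, is to exhibit a valid reset sub-sequence inside each $a b^M$ period: this is done by transporting the decomposition $\sigma_1, \dots, \sigma_m$ from the $b^\omega$-run to the $a b^M$-cycle and, by enlarging $M$ by sufficiently many full cycle lengths, absorbing the fixed shift $\vbf_a$ into the period vectors of the semi-linear set $C$, which is where semi-linearity of $C$ is essential.
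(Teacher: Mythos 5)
Your arguments for deterministic reachability-regular PA and deterministic strong reset PA are correct and are essentially the intended adaptation of the classical deterministic-Büchi argument that the paper alludes to (the paper states this as an observation without an explicit proof): determinism propagates the accepting hit, respectively the validity of every block between consecutive $F$-visits, from the runs on the approximations $wb^\omega$ to the run on the limit word. You are also right that the weak reset case is the delicate one, since acceptance there only asserts the existence of \emph{some} valid reset decomposition, which need not pass through the positions you designate.

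However, your proposed repair for the weak reset case has a genuine gap at its final and decisive step. A minor issue first: the reset positions of the chosen valid decomposition of the run on $b^\omega$ need not be eventually periodic just because the run is, so ``the pattern of reset-states between consecutive $p^*$-resets is periodic'' is unjustified; you can instead fix a single super-block between two $p^*$-resets in the periodic part of the run as your template, which suffices. The real problem is the claim that the shift $\vbf_a$ contributed by the inserted $a$-transition can be ``absorbed into the period vectors of $C$''. There is no reason that $\sigma_i + \vbf_a + m\zeta \in C$ for some $m \geq 0$ (where $\zeta$ is the Parikh image of the $b$-cycle): $C$ is an arbitrary semi-linear set and can avoid this entire arithmetic progression. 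Tellingly, your weak-reset argument never uses the hypothesis that \Amc accepts any word containing an $a$, yet that hypothesis is indispensable. For instance, the one-state deterministic weak reset PA that increments its single counter on $a$, leaves it unchanged on $b$, and has $C = \{0\}$ accepts exactly $\{b^\omega\}$; your construction applies to it verbatim (with $p^*$ its unique state and all $\sigma_i = 0$) and would wrongly conclude that it accepts $b^N(ab^M)^\omega$. A correct proof must extract, from the acceptance of words such as $b^Nab^\omega$ and $b^Nab^Mab^\omega$, valid decompositions of segments that do contain an $a$, and then splice these with the decomposition of the $b$-prefix at positions where the respective decompositions provably agree --- which is exactly the coherence problem you identified for the greedy construction and which your periodic construction does not actually resolve.
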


Observe however, that the complement $L_{a=\infty} = \{\alpha \in \{a,b\}^\omega \mid |\alpha|_a = \infty\}$ of $L_{a<\infty}$ is recognized by all of these models.

\begin{lemma}\label{lem:non-closure-reach-reg}
    The class of deterministic reach-reg.\ PA recognizable languages is not closed under union, intersection or complement. 
\end{lemma}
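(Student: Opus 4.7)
I address the three Boolean operations separately. For complement, the argument is direct from \Cref{lem:not_all_reg}: the $\omega$-regular language $L_{a=\infty}$ is recognized by a deterministic Büchi automaton and hence, via the trivial one-dimensional semi-linear condition $C=\Nbb$, by a deterministic reachability-regular PA, while its complement $L_{a<\infty}$ is excluded from the class by the observation. This single pair witnesses non-closure under complement.

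For union, I work over $\{a,b,c\}$ and take the deterministic reach-reg languages $L_\#=\{a^nb^nc^\omega:n\ge 1\}$ (a deterministic PA that counts $a$'s and $b$'s and checks $n=m$ on entering the $c$-loop) and $L_{a=\infty}$ (a deterministic Büchi automaton). The claim is that $L_\#\cup L_{a=\infty}$ is not deterministic reach-reg. Suppose $\mathcal{A}$ recognizes it; on input $a^\omega\in L_{a=\infty}$ its unique run visits some accepting state $q^*$ at positions $n_1<n_2<\ldots$ forming an arithmetic progression of period $p_a$, and the semi-linear set of Parikh-satisfying indices is ultimately periodic. Let $p_b$ be the period of the $b$-cycle from $q^*$. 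A pigeon-hole refinement (combined with a coprimality / pumping argument on the two cycle periods) yields indices $j,j'$ so that the runs of $\mathcal{A}$ on $a^{n_j}b^{n_j}c^\omega\in L_\#$ and on $a^{n_{j'}}b^{n_{j'}+1}c^\omega\notin L_\#\cup L_{a=\infty}$ enter the same state at the first $c$-transition. From this shared state the $c^\omega$-tail must then be simultaneously infinitely accepting (to accept the former input) and only finitely accepting (to reject the latter), a contradiction.

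For intersection, a related counterexample is needed since the naively disjoint pair from the union case has empty intersection. The plan is to take two deterministic reach-reg languages of the form $L_i=L_\#\cup U_i$, where the $U_i$ are deterministic Büchi languages chosen so that $U_1\cap U_2$ is non-trivial and disjoint from $L_\#$, and so that each $L_i$ is recognized by a deterministic PA that routes on the first (or first few) input letters into either the $L_\#$-counter branch or the $U_i$-Büchi branch. The intersection $L_\#\cup(U_1\cap U_2)$ then mixes a Parikh obligation with a Büchi obligation in the same way as in the union case, and the splicing argument, now applied to an accepted word in $U_1\cap U_2$ and its splicing with a $c^\omega$-tail, rules out deterministic reach-reg recognizability.

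The main obstacle, in both the union and the intersection case, is executing the splicing and cycle-synchronization step rigorously. The single global Parikh condition of a deterministic reachability-regular PA is incompatible with simultaneously witnessing Parikh equality at a fixed accepting position (as demanded by $L_\#$) and witnessing infinitely many accepting visits at arbitrary counter values (as demanded by a Büchi condition); the pigeon-hole step over the cycle periods $p_a,p_b$ of $\mathcal{A}$, together with the semi-linear periodicity of Parikh-satisfying indices, is where this incompatibility is turned into a bona fide contradiction rather than a mere inconsistency between cases.
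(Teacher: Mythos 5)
Your complement argument coincides with the paper's (via \Cref{lem:not_all_reg}), and your union argument is sound in approach: it uses a different witness pair ($\{a^nb^nc^\omega \mid n\geq 1\}\cup L_{a=\infty}$ rather than the paper's two ``balanced-prefix, then infinitely often'' languages), but rests on the same mechanism as the paper's proof, namely that determinism forces the single Parikh hit to occur at a fixed position $i_0$ of the common prefix shared with the accepted word $a^\omega$, after which acceptance of any word with prefix $a^{i_0}$ degenerates to a purely $\omega$-regular condition that a cycle pump in the $b$-block refutes. (The arithmetic-progression and coprimality machinery is unnecessary; one accepting-hit position $i_0$ on $a^\omega$ plus a standard pump inside a single $b^n$-block suffices.)

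The intersection argument, however, has a genuine gap. Writing $L_{\#}=\{a^nb^nc^\omega\mid n\geq 1\}$, you propose witnesses $L_i=L_{\#}\cup U_i$ with deterministic Büchi languages $U_i$ that a deterministic automaton can route to by the first few input letters, so that $L_1\cap L_2=L_{\#}\cup(U_1\cap U_2)$. But deterministic Büchi languages are closed under intersection, and every deterministic Büchi language is deterministic reach-reg.\ recognizable (label all transitions with \0 and use a branch-indicator counter). Hence exactly the routing that makes each $L_i$ recognizable also makes $L_{\#}\cup(U_1\cap U_2)$ recognizable: branch on the first letters into the $L_{\#}$ counter gadget or into the deterministic Büchi automaton for $U_1\cap U_2$, and take the union of the two prefix-tagged semi-linear conditions. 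You are caught between two failure modes: if the $U_i$ are prefix-separated from $L_{\#}$, the intersection stays in the class; if they overlap in prefixes (as $L_{a=\infty}$ does), then already $L_{\#}\cup U_i$ leaves the class by your own union argument, so the $L_i$ are not valid witnesses. The obstruction for intersection cannot come from a Büchi obligation at all; it must come from two independent Parikh obligations whose conjunction would force a single reachability check to certify two counter equalities at two (possibly different) positions. This is what the paper's witnesses $L_1=\{\alpha \mid |\alpha[1:i]|_a=|\alpha[1:i]|_b \text{ for some } i\}$ and $L_2=\{\alpha \mid |\alpha[1:i]|_a=|\alpha[1:i]|_c \text{ for some } i\}$ achieve, via a cycle-swapping and depumping argument on the word $a(a^nb^n)^{n+1}c^{n(n+1)+1}a^\omega$; your intersection case needs to be replaced by an argument of this kind.
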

\begin{proof}
    First we show non-closure under union.
    Let 
    $$L_1 = \{u c \alpha \mid u \in \{a,b,c\}^*, |u|_a = |u|_b, |\alpha|_c = \infty\}$$ and 
    $$L_2 = \{v a \beta \mid v \in \{a,b,c\}^*, |v|_b = |v|_c, |\beta|_a = \infty\}.$$
    
    Both languages are deterministic reach-reg.~PA recognizable, as witnessed by the automaton in \Cref{fig:limitunion} and the fact that $L_2$ can be obtained from $L_1$ by shuffling symbols. 
    We show that the language $L_1 \mathop{\cup} L_2$ is not deterministic reach-reg.~PA recognizable.
    Assume there is an $n$-state deterministic reach-reg.~PA $\Amc$ recognizing $L_1 \cup L_2$.
    Then there is a unique accepting run $r_1 r_2 \dots$ of $\Amc$ on $abc^\omega$. In particular, at some point the automaton verifies the Parikh condition, say after using the transition $r_i$. Let $m = \max\{n+1, i\}$ and consider the infinite word $abc^m b^{m-1} a^\omega$
    with the unique accepting run $r'_1 r'_2 \dots$ of $\Amc$. Due to determinism, we have $r_j = r'_j$ for all $j \leq m+2$; in particular, the automaton verifies the Parikh condition within this run prefix. As $m-1 \geq n$, the automaton visits a state, say $q$ twice while reading~$b^{m-1}$. Hence, we can pump this $q$-cycle and obtain an accepting run of~$\Amc$ on $abc^m b^{m-1+k} a^\omega$ for some $k > 0$, a contradiction. 
    
    For the non-closure under intersection, define $L_1 = \{\alpha \mid |\alpha[1:i]|_a = |\alpha[1:i]|_b \text{ for some }i \}$ and $L_2 = \{\alpha \mid |\alpha[1:i]|_a = |\alpha[1:i]|_c \text{ for some }i \}$.
    Suppose there is an $n$-state deterministic reach-reg.\ PA recognizing $L_1 \cap L_2$. Let $\alpha = a(a^n b^n)^{n+1} c^{n(n+1) + 1} a^\omega$. The unique run $r$ of $\Amc$ on $\alpha$ is not accepting, as $\alpha$ has no balanced $a$-$b$ prefix.
    Observe that~$\Amc$ visits at least one state twice while reading a $b^n$-block.
    Furthermore, there is a state, say~$q$, such that~$\Amc$ visits~$q$ twice while reading two different $b^n$-blocks, as there are $n+1$ different $b^n$-blocks.
    Hence, we can swap the latter~$q$-cycle to the front and obtain an infinite word, say $\alpha'$ in $L_1 \cap L_2$, with a unique accepting run $r'$ of $\Amc$. This run verifies the Parikh condition at some point. We distinguish two cases.
    If $r'$ verifies the Parikh condition before reading the first $c$, we can depump the $c^{n(n+1)+1}$-block and obtain an accepting run on an infinite word without a balanced $a$-$c$-prefix, a contradiction.
    Hence assume that~$r'$ verifies the Parikh condition after reading at least one $c$, say at position $k$. However, then we have $\rho(r[1:k]) = \rho(r'[1:k])$, and hence $\Amc$ also accepts $\alpha$, a contradiction.

    The non-closure under complement follows immediately from \Cref{lem:not_all_reg}.
    \begin{figure}
	\centering
	\begin{tikzpicture}[->,>=stealth',shorten >=1pt,auto,node distance=3.5cm, semithick]
	\tikzstyle{every state}=[minimum size=1cm]

 	\node[state, initial, initial text={}] (q0) {$q_0$};	
	\node[state, accepting] (q1) [right of=q0] {$q_1$};
	
	\path
	(q0) edge [loop above, align=center] node {$a, \begin{pmatrix}1\\0\end{pmatrix};$\ \ $b, \begin{pmatrix}0\\1\end{pmatrix}$} (q0)
	(q0) edge [bend left]  node {$c, \0$} (q1)
	(q1) edge [loop above] node {$c, \0$} (q1)
	(q1) edge [bend left, align=center] node {$a, \begin{pmatrix}1\\0\end{pmatrix}$;\ \ $b, \begin{pmatrix}0\\1\end{pmatrix}$} (q0)	
	;
	\end{tikzpicture}
    \caption{The deterministic reach-reg.\ ~PA with $C = C(\0, \{\1\})$ for $L_1 = \{u c \alpha \mid u \in \{a,b,c\}^*, |u|_a = |u|_b, |\alpha|_c = \infty\}$.}
    \label{fig:limitunion}
\end{figure}
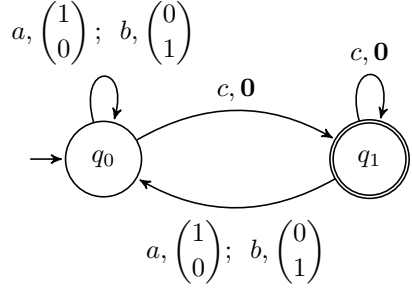
\end{proof}

\begin{lemma}\label{lem:non-closure-weak-reset}
    The class of deterministic weak reset~PA recognizable languages is not closed under union, intersection or complement. 
\end{lemma}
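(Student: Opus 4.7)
Proof plan. The claim has three parts, which I address in turn.

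Non-closure under complement follows immediately from \Cref{lem:not_all_reg}: the $\omega$-regular language $L_{a=\infty}$ is recognized by the trivial single-state deterministic weak reset PA whose segment Parikh condition requires at least one $a$ (so every reset witnesses at least one $a$, forcing infinitely many $a$'s overall), but its complement $L_{a<\infty}$ is not deterministic weak reset PA recognizable by \Cref{lem:not_all_reg}.

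For non-closure under union, I plan to reuse the witness languages from the proof of \Cref{lem:non-closure-reach-reg}, namely
\[L_1 = \{uc\alpha \mid u \in \{a,b,c\}^*,\ |u|_a = |u|_b,\ |\alpha|_c = \infty\}\]
and
\[L_2 = \{va\beta \mid v \in \{a,b,c\}^*,\ |v|_b = |v|_c,\ |\beta|_a = \infty\}.\]
Both are deterministic weak reset PA recognizable via a single-state automaton. For $L_1$, I use counters $(c_a,c_b,c_c)$ that increment on $a$, $b$, $c$ respectively, and the semi-linear set $C_1 = \{(k,k,m)\mid k\geq 0,\ m\geq 1\}$. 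Valid reset positions correspond exactly to endpoints of segments that have balanced $a,b$ counts and at least one $c$, and one verifies directly that an infinite such reset sequence exists iff the input admits a factorization $uc\alpha$ as required; the construction for $L_2$ is symmetric. To show $L_1 \cup L_2$ is not deterministic weak reset PA recognizable, I plan to adapt the pumping argument of \Cref{lem:non-closure-reach-reg}. Assume an $n$-state deterministic weak reset PA \Amc recognizes $L_1 \cup L_2$, and consider its runs on $abc^\omega \in L_1$ and on $abc^m b^{m-1} a^\omega \in L_2$ for $m$ much larger than $n$; by determinism, the runs coincide on the first $m+2$ symbols. A pigeonhole on the $b^{m-1}$-block, combined with the eventual periodicity of the run on the $a^\omega$-tail, yields a pumpable $b$-cycle; pumping produces an infinite word of the form $abc^m b^{m-1+k} a^\omega$ for some $k > 0$, which lies outside $L_1 \cup L_2$, and I argue via the semi-linear structure of \Amc's Parikh set that the pumped run still admits an infinite reset sequence whose segment Parikhs lie in $C$, giving the desired contradiction.

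For non-closure under intersection, I plan to use analogues of the intersection witnesses of \Cref{lem:non-closure-reach-reg}, augmented with a suffix constraint that guarantees infinitely many resets (e.g.\ a trailing $c^\omega$-type condition) so that each witness becomes deterministic weak reset PA recognizable while preserving the essence of the intersection. Non-recognizability of the intersection then goes through an analogous pumping argument.

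The main obstacle is the pumping step for weak reset, which is substantially harder than for reachability-regular PA: since the Parikh condition must be met on every inter-reset segment rather than only once, pumping a $b$-cycle alters the Parikh of any segment containing the pumped region, and a plain state-level pigeonhole no longer suffices. Overcoming this requires combining the state repetition with the semi-linear structure of $C$, e.g.\ by picking $m$ large enough that two same-state reset positions lie inside the $b^{m-1}$-block—forcing a pure $b$-segment whose Parikh $(0,\Delta,0)$ lies in $C$ and can be freely repeated—or, failing that, exploiting a period direction of the linear component of $C$ containing the bridging segment's Parikh to guarantee that some pump count keeps acceptance. This is where I expect the technically most demanding part of the proof to lie.
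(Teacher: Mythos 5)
Your complement argument is correct and matches the paper's. The other two parts have genuine gaps.

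For union, you have correctly identified the obstacle yourself: the pumping argument of \Cref{lem:non-closure-reach-reg} exploits that a reach-reg.\ PA verifies its Parikh condition \emph{once}, after which pumping a state cycle is harmless; for a weak reset PA every inter-reset segment must satisfy the Parikh condition, so the pumped cycle perturbs some segment and acceptance is no longer guaranteed. Your proposed repairs (forcing two same-state reset positions inside the $b$-block, or finding a period direction of $C$ aligned with the cycle) are speculative and not carried out, so the core of the union part is missing. There is also a concrete error in your recognizability claim: the single-state automaton with $C_1=\{(k,k,m)\mid m\ge 1\}$ does not recognize $L_1$ — for instance $(ca)^\omega\in L_1$ (take $u=\varepsilon$), but it admits no factorization into segments each balanced in $a,b$ and containing a $c$, since the word has $a$'s but no $b$'s. (The languages \emph{are} det.\ weak reset recognizable, but via the translation of \Cref{lem:reachRegToWeak}, not your automaton.) The paper sidesteps all of this by choosing different witnesses: $L_{a=b}$ and $L_{a=c}$, whose union is shown in \Cref{lem:bukkiweak} not to be det.\ weak reset recognizable by a pigeonhole on \emph{reset positions} — two resets $i<j$ in the same state on $(ab)^\omega$ let one swap the prefix $\alpha[1:i]$ for $\alpha[1:j]$ in front of a suffix tailored to $|\alpha[1:i]|_a$, producing an accepted word outside the union. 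That argument is built around the reset structure rather than a transition-level pumping, which is exactly what makes it go through.

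For intersection your plan is too vague to assess and again leans on the unresolved pumping step. The paper instead uses $L_{a=b}\cap L_{2a=b}$ and a cardinality argument: this intersection is non-empty but contains no ultimately periodic word, whereas every non-empty weak reset PA language does, so no pumping is needed at all. I would recommend adopting the paper's witnesses for both the union and the intersection parts; your complement paragraph can stand as is.
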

\begin{proof}
    We begin with the non-closure under union. The $\omega$-languages
    $L_{a=b}$ and $L_{a=c}$ (as defined in \Cref{lem:bukkiweak}) are both deterministic weak reset~PA recognizable. As shown in \Cref{lem:bukkiweak}, their union is not. 

    The argument for the non-closure under intersection is the same as for the non-deterministic setting~\cite{grobler2023remarks}; see also~\cite{blindcounter} and \cite{infiniteZimmermann}. Let $L_{2a = b} = \{\alpha \mid 2|\alpha[1:i]|_a = |\alpha[1:i]|_b \text{ for $\infty$ many }i \}$. Then $L_{a=b} \cap L_{2a = b}$ is not even ultimately periodic, and hence not recognized by any weak reset~PA, as they only recognize ultimately periodic $\omega$-languages~\cite{grobler2023remarks}.

    The non-closure under complement is an immediate consequence of~\Cref{lem:not_all_reg}.
    \end{proof}

\begin{lemma}\label{lem:non-closure-strong-reset}
    The class of deterministic strong reset~PA recognizable languages is not closed under union, intersection or complement. 
\end{lemma}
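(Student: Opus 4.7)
The plan is to prove the three non-closure results separately, following the blueprint of Lemma~\ref{lem:non-closure-weak-reset}. For complement, the non-closure is immediate from Observation~\ref{lem:not_all_reg}: the $\omega$-regular language $L_{a=\infty}$ is deterministic strong reset PA recognizable (a two-state PA whose accepting state is entered on each $a$, with trivial Parikh condition $C=\Nbb$), while $L_{a<\infty}$ is not; closure under complement would contradict the observation.

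For intersection, the plan is to exhibit two deterministic strong reset PA recognizable languages whose intersection is non-empty but contains no ultimately periodic infinite word, which by~\cite{grobler2023remarks} prevents recognition by any strong reset PA. The languages $L_{a=b}, L_{2a=b}$ from Lemma~\ref{lem:non-closure-weak-reset} are not directly usable, because the strong reset condition requires the Parikh constraint at \emph{every} accepting visit and the one-state construction available for weak reset fails here. The idea is to introduce explicit marker letters: over $\{a,b,c,d\}$, set $L_1 = \{\alpha : |\alpha|_c = \infty \text{ and every inter-$c$ segment has } |a|=|b|\geq 1\}$ and $L_2 = \{\alpha : |\alpha|_d = \infty \text{ and every inter-$d$ segment has } 2|a|=|b|\geq 1\}$. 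Each is recognized by a small deterministic strong reset PA whose unique accepting state is entered on the relevant marker, with a semi-linear set of the form $\{(z,z,\ast,\ast) : z\geq 1\}$ or $\{(z,2z,\ast,\ast) : z\geq 1\}$ on the $a$- and $b$-counters. To show $L_1\cap L_2$ contains no ultimately periodic word, suppose $uv^\omega\in L_1\cap L_2$; summing the Parikh vectors of all inter-$c$ segments contained in one period of $v$ forces $|v|_a=|v|_b$, while the analogous sum for inter-$d$ segments forces $2|v|_a=|v|_b$, together giving $|v|_a=|v|_b=0$. This contradicts the $|a|\geq 1$ requirement on an inter-$c$ segment lying inside one period of $v$ (which exists since $v$ must contain a $c$). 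Non-emptiness of $L_1\cap L_2$ is witnessed by an inductive block construction with geometrically growing block sizes that maintains both cumulative balance invariants.

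The union case is the main obstacle. The plan is to take two languages sharing a single marker $c$ but imposing incompatible per-segment Parikh constraints. Concretely, over $\{a,b,c\}$, let $L_1 = \{\alpha : |\alpha|_c=\infty \text{ and every inter-$c$ segment has } |a|=|b|\geq 1\}$ and $L_2$ be defined analogously with $2|a|=|b|\geq 1$; both are straightforwardly deterministic strong reset recognizable. Assuming a hypothetical deterministic strong reset PA $\Amc$ with $N$ states for $L_1\cup L_2$, the plan is to pigeonhole along its runs on $(abc)^\omega\in L_1$ and $(abbc)^\omega\in L_2$ to extract block-based cycle lengths $p_1, p_2$, and then pigeonhole once more on the states reached after prefixes of the alternating word $((abc)^{p_1}(abbc)^{p_2})^k$ to extract a period $P$ after which $\Amc$ re-enters the same state. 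Choosing block sizes to be appropriate multiples of $p_1, p_2, P$ will ensure that every segment in $\Amc$'s run on $((abc)^{p_1}(abbc)^{p_2})^\omega$ lies entirely within one block and hence replicates a segment of the accepting run on $(abc)^\omega$ or $(abbc)^\omega$, so every segment Parikh vector lies in $C$ and $\Amc$ accepts. But the alternating word contains segments of both types and therefore lies in neither $L_1$ nor $L_2$, yielding the desired contradiction. The hardest technical point is ruling out segments that straddle a block boundary, for which the combined pigeonhole alignment of $\Amc$'s accepting-state pattern with the block structure will be crucial.
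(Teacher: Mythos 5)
Your complement argument coincides with the paper's, and your intersection argument is a correct (if more elaborate) variant of the paper's: both reduce to the fact that strong reset~PA only recognize ultimately periodic $\omega$-languages, but the paper simply takes $L_1 = \{a^nb^n \mid n>0\}^\omega$ and $L_2 = \{a\}\{b^na^{2n}\mid n>0\}$, whose intersection is the single non-ultimately-periodic word $aba^2b^2a^4b^4\cdots$, so no non-emptiness construction and no period-alignment bookkeeping are needed. (Minor point there: ``summing the Parikh vectors of all inter-$c$ segments contained in one period of $v$'' presupposes that $c$-positions align with period boundaries; the standard fix is to pick two $c$-positions in the periodic part that are congruent modulo $|v|$, so that the word between them is a power of a rotation of $v$.)

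The union argument, however, has a genuine gap at exactly the step you flag as hard, and I do not see how to close it along the route you describe. Determinism only guarantees that two runs on the same word coincide when they start \emph{from the same state}. In $((abc)^{p_1}(abbc)^{p_2})^\omega$, each $(abbc)^{p_2}$ block is entered in whatever state $\Amc$ reaches after the preceding $(abc)^{p_1}$ block, which is a priori unrelated to any state occurring in the accepting run of $\Amc$ on $(abbc)^\omega$ (that run starts from $q_0$). So the run inside such a block need not ``replicate a segment of the accepting run on $(abbc)^\omega$'', its reset positions need not fall where you want them, and the collected vectors need not lie in $C$ --- you have not shown that $\Amc$ accepts the alternating word, which is the whole contradiction. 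There is also a structural obstacle to this choice of languages: once a prefix contains one complete balanced inter-$c$ segment with an $a$, every extension in $L_1\cup L_2$ must stay in $L_1$, so the ``switch to the other language after a repeated state'' trick (which works for the prefix-based languages in \Cref{lem:bukkiweak}) is not available here, and any proof must genuinely force $\Amc$ to accept a mixed word. The paper sidesteps all of this by pairing the reset-style language $L = \{c^*a^nc^*b^n \mid n>0\}^\omega$ with the very permissive $L_{c=\infty}$: then $a^nc^\omega$ lies in the union for every $n$, a pigeonhole over $n$ yields two words $a^{n_1}c^\omega$ and $a^{n_2}c^\omega$ whose runs first reset (after reading some $c$'s) in the same state, and splicing the continuation $b^{n_1}(ab)^\omega$ --- which is accepted after the $a^{n_1}$-prefix because $a^{n_1}c^{i_1-n_1}b^{n_1}(ab)^\omega\in L$ --- onto the $a^{n_2}$-prefix produces an accepted word outside $L\cup L_{c=\infty}$. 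You should either adopt such a pair, where one language absorbs all the probe words, or supply a genuinely new argument for your pair; as written, the union case is not proved.
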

\begin{proof}
    We begin with the non-closure under union. Let $L = \{c^* a^n c^* b^n \mid n > 0\}^\omega$ and $L_{c=\infty} = \{\alpha \mid |\alpha|_c = \infty\}$.
    Observe that $a^n c^\omega \in L \cup L_{c=\infty}$ for every $n \geq 0$.
    Assume there is a deterministic strong reset~PA recognizing $L \cup L_{c=\infty}$. Let $n_1 \neq n_2$ be such that the unique accepting runs of $\Amc$ on $\alpha_1 = a^{n_1} c^\omega$ resp.\ $\alpha_2 = a^{n_2} c^\omega$ reset in the same state the first time they reset after reading at least one~$c$, say after reading $\alpha_1[1:i_1]$ resp.\ $\alpha_2[1:i_2]$ (with $i_1 > n_1$ and $i_2 > n_2$).
    As $a^{n_1} c^{i_1-n_1} b^{n_1} (ab)^\omega$ is also accepted by $\Amc$, the infinite word $a^{n_2} c^{i_2 - n_2} b^{n_1} (ab)^\omega$ is also accepted by $\Amc$, a contradiction.

    To show the non-closure under intersection, we use an argument similar to the non-deterministic setting. Let \mbox{$L_1 = \{a^n b^n \mid n > 0\}^\omega$} and $L_2 = \{a\} \{b^n a^{2n} \mid n > 0\}$. Then $L_1 \cap L_2$ contains only one infinite word, namely $ab a^2 b^2 a^4 b^4 \dots$. Hence $L_1 \cap L_2$ is not ultimately periodic and hence not accepted by any strong reset~PA~\cite{grobler2023remarks}.
    
    The non-closure under complement again follows from~\Cref{lem:not_all_reg}.
\end{proof}

\section{Expressiveness}
\label{sec:expressiveness}

In this section we study the expressiveness of deterministic~PA on infinite words for those models whose deterministic variants were not studied before in the literature.

First we remark that deterministic reach-reg.\ PA, deterministic limit~PA, deterministic strong reset~PA and deterministic weak reset~PA are strictly weaker than their non-deterministic counterparts. This follows immediately from their different closure properties: reach-reg.\ PA, weak reset~PA (and hence strong reset~PA) are closed under union, and limit~PA are not closed under complement~\cite{grobler2023remarks}. The authors of~\cite{grobler2023remarks} do not explicitly mention that limit~PA are not closed under complement. This however follows from their characterization of limit PA and the fact (non-deterministic finite word)~PA are not closed under complement~\cite{klaedtkeruess}.
Hence, from the results of the previous section we obtain the following corollary.

\begin{corollary}
The following strict inclusions hold.
\begin{itemize}
    \item Deterministic reach-reg.\ PA $\subsetneq$ Reach-reg.\ PA.
    \item Deterministic limit PA $\subsetneq$ Limit PA.
    \item Deterministic strong reset PA $\subsetneq$ Reset PA.
    \item Deterministic weak reset PA $\subsetneq$ Reset PA.
\end{itemize}
\end{corollary}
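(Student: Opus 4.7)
The plan is straightforward since the statement is essentially the contrapositive of the closure-property results already established in \Cref{sec:closure}: the non-strict inclusion in each bullet is immediate (a deterministic PA is syntactically a non-deterministic one and the acceptance conditions are defined identically), so only strictness needs to be argued, and in each case I would do so by exhibiting a closure property that fails to transfer between the two classes.

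For the reach-reg., weak reset, and strong reset bullets I would use the same template. By~\cite{grobler2023remarks} the non-deterministic versions of each of these three models are closed under union (standard disjoint-union construction with a nondeterministic initial jump). By \Cref{lem:non-closure-reach-reg}, \Cref{lem:non-closure-weak-reset}, and \Cref{lem:non-closure-strong-reset} the corresponding deterministic classes are \emph{not} closed under union. Hence if in any of these cases the deterministic class coincided with its non-deterministic counterpart, closure under union would transfer to the deterministic side, contradicting the respective lemma. Strictness follows in all three cases.

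For deterministic limit PA I would run the argument in the dual direction: \Cref{lem:closure-limit} shows that the deterministic class is closed under complement, so if it equalled the non-deterministic limit PA class, the latter would be closed under complement too. To derive a contradiction I would invoke the characterization of~\cite{grobler2023remarks} according to which non-deterministic limit PA languages are exactly those of the form $\bigcup_i U_i V_i^\omega$ with the $U_i$ recognized by finite-word PA and the $V_i$ regular. Taking a finite-word PA-recognizable $U \subseteq \Sigma^*$ whose complement is not PA-recognizable (which exists by~\cite{klaedtkeruess}) and a fresh symbol $\#$, the language $L = U \cdot \{\#\}^\omega$ is limit PA. If $\bar L$ were limit PA too, intersecting a decomposition $\bar L = \bigcup_i U'_i V'^\omega_i$ with the regular set $\Sigma^* \cdot \{\#\}^\omega$ and reading off the $\Sigma^*$-prefixes would yield a finite-word PA recognizing $\bar U$, contradicting the choice of $U$.

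The main obstacle is this last transfer step: one has to verify that restricting a hypothetical limit PA decomposition of $\bar L$ to infinite words of shape $v\#^\omega$ and projecting onto the $\Sigma^*$-prefix $v$ really yields a finite-word PA (not merely a semi-linear description), which requires checking that the period vectors over the $V'_i$-factors can be absorbed or eliminated once the tail is fixed to $\#^\omega$. The other three bullets are immediate consequences of the non-closure lemmas of \Cref{sec:closure} and require no further work.
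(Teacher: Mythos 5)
Your proposal matches the paper's proof: the paper likewise derives all four strict inclusions from the mismatch in closure properties, using closure of the non-deterministic models under union against Lemmas~\ref{lem:non-closure-reach-reg}, \ref{lem:non-closure-weak-reset}, and \ref{lem:non-closure-strong-reset}, and closure of deterministic limit PA under complement (Lemma~\ref{lem:closure-limit}) against the non-closure of non-deterministic limit PA under complement. The only difference is that the paper obtains the latter non-closure directly by citing the $\bigcup_i U_iV_i^\omega$ characterization together with the fact that finite-word PA are not closed under complement, so the projection/transfer step you flag as an obstacle is not something the paper works out either.
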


In the following,
when we show non-inclusions, we always give the strongest separation, \eg when showing that a deterministic strong reset PA cannot simulate a deterministic weak reset PA, we show that it can not even simulate a deterministic reachability PA, which is weaker than a deterministic weak reset PA. We refer to \Cref{fig:inclusions} for an overview of the results in this section. We simply write that a model is a strict/no subset of another model; by that we mean that the class of $\omega$-languages recognized by the first model is a strict/no subset of the class of $\omega$-languages recognized by the second model.

\begin{figure}
\centering
    \begin{tikzpicture}[%
      node distance=27mm,>=Latex,
      initial text="", initial where=below left,
      every state/.style={rectangle,rounded corners,draw=black,thin,fill=black!5,inner sep=1mm,minimum size=6mm},
      every edge/.style={draw=black,thin}
    ]
    \node[state] (reach) {rechability PA};
    \node[state, right = 2cm of reach, align=center]  (reachreg) {reachability-regular PA};
    \node[state, right= 2cm of reachreg] (wreset) {weak reset PA};
    
    \node[state,below = 1cm of reach] (reg) {det.\ $\omega$-regular};
    \node[state,right = 2cm of reg] (limit) {limit PA};
    \node[state,right = 2cm of limit] (sreset) {strong reset PA};

    \node[state,below = 1cm of reg] (safety) {safety PA};
    \node[state,right = 2cm of safety] (cobuchi) {co-Büchi PA};
    \node[state,right = 2cm of cobuchi] (buchi) {Büchi PA};

    \path[-{Latex}]
    (reach) edge (reachreg)
    (reachreg) edge (wreset)
    (sreset) edge (wreset)
    (reg) edge (limit)
    ;

    \end{tikzpicture}        
    \caption{Inclusion diagram of the studied \emph{deterministic models}. Arrows indicate strict inclusions while no connections mean incomparability.}
    \label{fig:inclusions}
\end{figure}
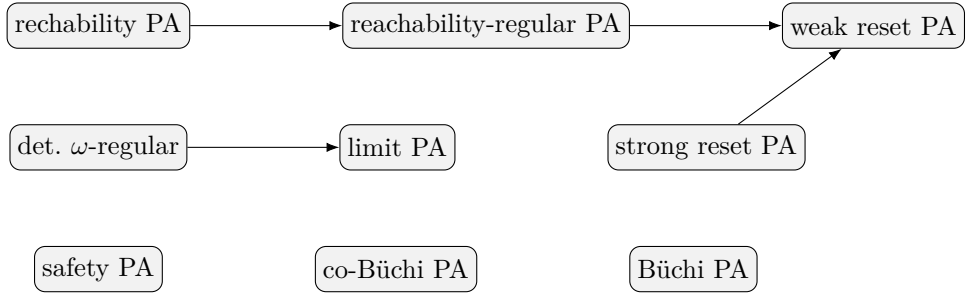

\subsection{\boldmath\texorpdfstring{$\omega$}{omega}-regular languages}

We begin by showing that every $\omega$-regular language is deterministic limit~PA recognizable. 

\begin{lemma}
$\omega$-regular $\subsetneq$ deterministic limit PA.
\end{lemma}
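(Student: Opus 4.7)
The plan is to prove the statement in two parts: first the inclusion $\omega$-regular $\subseteq$ deterministic limit PA, then strictness via an explicit non-$\omega$-regular witness.

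For the inclusion, I would start from a deterministic Muller automaton $\Amc = (Q, \Sigma, q_0, \Delta, \Fmc)$ for the given $\omega$-regular language $L$ (which exists by McNaughton's theorem). I then build a deterministic limit PA $\Amc'$ with the same state space $Q = \{q_1, \dots, q_n\}$, same initial state, dimension $n$, every state declared accepting, and labeled transitions $(p, a, \ebf_i, q_i) \in \Delta'$ whenever $(p, a, q_i) \in \Delta$. Because $\Amc'$ is deterministic and complete, every input $\alpha$ induces a unique infinite run $r$; since every state is accepting, the state-side Büchi condition of the limit acceptance is vacuous, and the $i$-th component of $\rho(r)$ equals $\infty$ precisely when $q_i$ appears infinitely often in $r$. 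Thus the Muller acceptance condition translates directly into a condition on which coordinates of $\rho(r)$ equal $\infty$.

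The semi-linear set is then assembled as follows. For each $S \in \Fmc$, define
\[
    C_S = \Bigl\{ \sum_{i \in S} \ibf_i + \sum_{j \notin S} z_j \, \ebf_j \;\Big|\; z_j \in \Nbb \Bigr\} \subseteq \Nbbinfty^n,
\]
which is a linear set with base $\sum_{i \in S} \ibf_i$ and periods $\{\ebf_j : j \notin S\}$; it is exactly the set of vectors whose $\infty$-coordinates are the indices in $S$. Setting $C = \bigcup_{S \in \Fmc} C_S$ yields a semi-linear set satisfying $\rho(r) \in C$ iff the set of states of $Q$ visited infinitely often in $r$ lies in $\Fmc$, so $L_\omega(\Amc') = L_\omega(\Amc) = L$.

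For strictness, I would exhibit $L_0 = \{a^n b^n c^\omega \mid n \geq 1\}$. A deterministic limit PA for $L_0$ has five states $q_0, q_a, q_b, q_c, q_{\mathrm{sink}}$ (with $q_{\mathrm{sink}}$ the only non-accepting state), two counters, and transitions $q_0 \xrightarrow{a, \ebf_1} q_a$, $q_a \xrightarrow{a, \ebf_1} q_a$, $q_a \xrightarrow{b, \ebf_2} q_b$, $q_b \xrightarrow{b, \ebf_2} q_b$, $q_b \xrightarrow{c, \0} q_c$, $q_c \xrightarrow{c, \0} q_c$, with all other transitions routed to $q_{\mathrm{sink}}$, and $C = C((1,1), \{(1,1)\})$. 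An accepting run must end in a $q_c$-loop (so visit an accepting state infinitely often) with the $a$- and $b$-counters finite and equal, which recognizes exactly $L_0$. To show $L_0$ is not $\omega$-regular, I would use a standard pumping argument: any Büchi automaton with $k$ states on input $a^k b^k c^\omega \in L_0$ must repeat a state within the first $k$ positions, and inserting the corresponding $a$-cycle produces an accepting run (the tail from the pumping point onward is unchanged, so the same states are visited infinitely often) on $a^{k+\ell} b^k c^\omega \notin L_0$, a contradiction.

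The only slightly subtle point will be the correctness of $C$ in the inclusion argument, namely that $C_S$ captures "$\infty$-coordinates \emph{exactly} equal to $S$" rather than merely "$\infty$ on $S$": this requires that the period vectors $\ebf_j$ for $j \notin S$ never produce an $\infty$ entry when added finitely or via the natural $\Nbb$-linear combinations, which holds by construction since $\ebf_j \in \Nbb^n$. Everything else is routine checking of the definitions.
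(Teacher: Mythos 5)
Your proposal is correct and follows essentially the same route as the paper: a deterministic Muller automaton is translated into a deterministic limit PA with one counter per state, incremented on entry, and the Muller table is encoded as a union of linear sets with base $\sum_{i\in S}\ibf_i$ and periods $\{\ebf_j : j\notin S\}$; strictness is witnessed by the same language $\{a^nb^nc^\omega\mid n>0\}$. The only difference is that you spell out the strictness witness and the pumping argument explicitly, which the paper leaves as an assertion.
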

\begin{proof}
    Let $L \subseteq \Sigma^\omega$ be $\omega$-regular and let $\Amc = (Q, \Sigma, q_0, \Delta, \Fmc)$ be a deterministic Muller automaton recognizing $L$.
    The idea is to construct an equivalent deterministic limit~PA $\Amc' = (Q, \Sigma, q_0, \Delta', Q, C)$ of dimension $|Q|$, where every state is accepting, while encoding the sets in $\Fmc$ into the semi-linear set $C$.
    Let $f : Q \rightarrow \{1, \dots, |Q|\}$ be a bijection associating every state with a counter. Hence, we define $\Delta' = \{(p, a, \ebf_{f(q)}^{|Q|}, q) \mid (p, a, q) \in \Delta\}$.
    For every $F \in \Fmc$, we define $C_F = C(\sum_{q \in F} \ibf_{f(q)}, \{\ebf_{f(q)} \mid q \notin F\})$.
    That is, for every state in $F$ we expect its counter value to be $\infty$, while we expect every other counter value to be a finite number. We choose $C = \bigcup_{F \in \Fmc}C_F$ and hence obtain an equivalent deterministic limit~PA.
    
    The strictness is witnessed by the $\omega$-language $\{a^n b^n c^\omega \mid n > 0\}$, which is obviously deterministic limit~PA recognizable, but not $\omega$-regular.
\end{proof}

\Cref{lem:not_all_reg} immediately yields the following result.

\begin{corollary}
\mbox{}\\[1mm]
    \begin{tikzpicture}
\node[anchor=west] at (-0.3,0) {$\omega$-regular};

\node at (1.6, 0) {$\not\subseteq$};

\draw [decorate, thick,
decoration = {calligraphic brace, raise = 2pt, amplitude = 4pt,mirror}] (2.1,0.8) --  (2.1,-0.8);

\node[anchor=west] at (2.2,0.65) {Deterministic reach-reg.\ PA};
\node[anchor=west] at (2.2,0.25) {Deterministic Büchi PA};
\node[anchor=west] at (2.2,-0.25) {Deterministic strong reset PA};
\node[anchor=west] at (2.2,-0.65) {Deterministic weak reset PA};
\end{tikzpicture}
\end{corollary}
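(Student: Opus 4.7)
The plan is to establish the four non-inclusions separately. For deterministic reach-reg.\ PA, deterministic strong reset PA, and deterministic weak reset PA the statement is immediate from \Cref{lem:not_all_reg}: that lemma already exhibits the $\omega$-regular language $L_{a<\infty} = \{\alpha \in \{a,b\}^\omega \mid |\alpha|_a < \infty\}$ as a witness that none of these three deterministic models recognizes every $\omega$-regular language, so nothing further is required here.

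The only missing case is that of deterministic Büchi~PA, which is not explicitly covered by \Cref{lem:not_all_reg}. My plan is to show by hand that $L_{a<\infty}$ is not recognizable by any deterministic Büchi~PA, via the classical stitching argument used to prove that $L_{a<\infty}$ is not deterministic Büchi, adapted to the Parikh setting. Assume for contradiction that a deterministic Büchi~PA $\Amc$ recognizes $L_{a<\infty}$. Since $b^\omega \in L_{a<\infty}$, the unique run of $\Amc$ on $b^\omega$ satisfies the Büchi-PA acceptance condition (state in $F$ and current extended Parikh image in $C$) at some position $i_1 \geq 1$. Set $w_1 = b^{i_1} a$. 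Inductively, I would build finite prefixes $w_1 \prec w_2 \prec \dots$ by letting $w_{n+1} = w_n b^{k_n} a$, where $k_n \geq 1$ is chosen so that the unique run of $\Amc$ on $w_n b^\omega$ meets the acceptance condition at position $|w_n|+k_n$; such a $k_n$ exists because $w_n b^\omega \in L_{a<\infty}$ forces infinitely many Büchi acceptances on that run.

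Let $\alpha = b^{i_1} a b^{k_1} a b^{k_2} a \cdots$ denote the limit word. Determinism ensures that the run of $\Amc$ on $\alpha$ agrees with the run on $w_n b^\omega$ for the first $|w_n|+k_n$ positions; hence the run on $\alpha$ satisfies the Büchi-PA acceptance condition at infinitely many positions, so $\alpha \in L(\Amc) = L_{a<\infty}$. But $\alpha$ has infinitely many occurrences of $a$, a contradiction.

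The only subtlety I anticipate is transferring the Parikh condition along the stitching: one must observe that the extended Parikh image $\rho(r_1 \dots r_i)$ depends only on the first $i$ transitions of the run, which in turn are determined by the first $i$ input symbols for a deterministic~PA. Hence the truth value of ``$p_i \in F$ and $\rho(r_1 \dots r_i) \in C$'' propagates verbatim from $w_n b^\omega$ to $\alpha$ at each position $|w_n|+k_n$. Beyond this observation, every step is routine, so I do not expect any serious obstacle.
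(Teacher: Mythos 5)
Your proposal is correct and follows essentially the same route as the paper: the corollary rests on the witness $L_{a<\infty}$ and the standard stitching argument from \Cref{lem:not_all_reg}. You rightly observe that \Cref{lem:not_all_reg} as stated omits deterministic Büchi~PA, and your explicit adaptation of the stitching argument to that case (transferring both the accepting state and the Parikh condition along run prefixes via determinism) is exactly the intended and correct way to close that small gap.
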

Observe however that these models generalize deterministic Büchi automata. This is not true for deterministic reachability PA, deterministic safety~PA nor deterministic \mbox{co-Büchi~PA}, as shown in the next lemma. 

\begin{lemma}\label{lem:safety-cobucki}
\mbox{}\\[1mm]
    \begin{tikzpicture}
\node[anchor=west] at (-.7,0) {Deterministic $\omega$-regular};

\node at (3.4, 0) {$\not\subseteq$};

\draw [decorate, thick,
decoration = {calligraphic brace, raise = 2pt, amplitude = 4pt,mirror}] (3.9,0.66) --  (3.9,-0.66);

\node[anchor=west] at (4,0.47) {Deterministic reachability PA};
\node[anchor=west] at (4,0) {Deterministic safety PA};
\node[anchor=west] at (4,-0.4) {Deterministic co-Büchi PA};
\end{tikzpicture}
\end{lemma}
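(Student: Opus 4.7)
My plan is to exhibit a single deterministic $\omega$-regular witness $L_{a=\infty} = \{\alpha \in \{a,b\}^\omega \mid |\alpha|_a = \infty\}$---recognized by the trivial two-state deterministic Büchi automaton accepting exactly in the $a$-state---and to show it is recognized by none of the three deterministic PA models via three independent arguments, one per model.

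For deterministic reachability PA, I would use that these languages are extension-closed at the accepting hit: if \Amc\ recognizes $L_{a=\infty}$, the unique run on $a^\omega$ has an accepting hit at some finite position $n$, but then by determinism the run on $a^n b^\omega$ reuses the same prefix of transitions and inherits the same accepting hit, so \Amc\ accepts $a^n b^\omega$, contradicting $a^n b^\omega \notin L_{a=\infty}$. Dually, for deterministic safety PA, the unique run on $a b^\omega \notin L_{a=\infty}$ must violate the safety condition at some first position $i$; then the run on $a b^{i-1} a^\omega \in L_{a=\infty}$ shares the same first $i$ transitions and fails at the same position, so \Amc\ rejects it, a contradiction.

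The interesting case is deterministic co-Büchi PA, where I would argue topologically. Every language recognized by a deterministic co-Büchi PA is an $F_\sigma$ subset of $\Sigma^\omega$ in the product topology, since it equals $\bigcup_{i_0 \geq 1} G_{i_0}$ where $G_{i_0} = \{\alpha : p_i \in F \text{ and } \rho(r_1 \cdots r_i) \in C \text{ for all } i \geq i_0\}$ is a countable intersection of clopen cylinder sets and hence closed. It then suffices to show $L_{a=\infty}$ is not $F_\sigma$, which I would prove by diagonalization: given any sequence of closed sets $F_n \subseteq L_{a=\infty}$, start from $w_0 = \varepsilon$ and inductively extend the current prefix $w_n$ to $w_{n+1} = w_n a b^{k_n}$ with $k_n$ large enough that the cylinder $[w_{n+1}]$ is disjoint from $F_n$; such $k_n$ exists because $w_n a b^\omega \notin L_{a=\infty} \supseteq F_n$, so some cylinder around $w_n a b^\omega$ lies in the open set $\Sigma^\omega \setminus F_n$. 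The resulting limit $\alpha$ satisfies $|\alpha|_a = \infty$ by construction yet lies in $[w_{n+1}] \subseteq \Sigma^\omega \setminus F_n$ for every $n$, contradicting $\alpha \in \bigcup_n F_n = L_{a=\infty}$.

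I expect the co-Büchi step to be the main obstacle: the topological route is clean but imports external machinery (the $F_\sigma$ characterization and the diagonalization), while a purely automata-theoretic alternative would have to analyze the interaction between the periodic $a$-loop witnessing acceptance of $a^\omega$, the eventually periodic $b$-trajectory responsible for rejecting $a^n b^\omega$, and the compatibility of their counter increments, leading to a pumping argument considerably more delicate than the reachability and safety cases.
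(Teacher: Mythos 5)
Your proof is correct, but it takes a noticeably different route from the paper. The paper splits the three cases across different witnesses and sources: the reachability case is dispatched as an immediate consequence of the characterization of deterministic reachability~PA languages as $U\Sigma^\omega$ (\Cref{lem:char-det-reach}), using the witness $a^*b^\omega$, while the safety and co-Büchi cases are delegated to the proof of Theorem~3 in~\cite{infiniteZimmermann}, which shows that even \emph{non-deterministic} safety~PA cannot recognize $\{a,b\}^\omega\setminus\{a\}^\omega$ and that no co-Büchi~PA recognizes $L_{a=\infty}$. You instead use the single witness $L_{a=\infty}$ for all three models and give self-contained arguments. Your reachability and safety arguments are exactly the prefix-determined monotonicity facts underlying the paper's characterizations (an accepting hit, resp.\ a safety violation, is inherited by every word sharing the relevant finite prefix, and determinism plus completeness guarantees the compared runs agree on that prefix), so they are sound. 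Your co-Büchi argument is genuinely different in flavor: the $F_\sigma$ upper bound on deterministic co-Büchi~PA languages is valid precisely because determinism makes each level set $\{\alpha : p_i \in F \text{ and } \rho(r_1\cdots r_i)\in C\}$ a finite union of cylinders, hence clopen, and your Baire-style diagonalization showing $L_{a=\infty}$ is not $F_\sigma$ is standard and correct. What each approach buys: the paper's citation gives the stronger non-deterministic statement for safety and co-Büchi~PA (your topological argument breaks down without unique runs, since a projection of an $F_\sigma$ set need not be $F_\sigma$), whereas your version is fully self-contained, unifies the three cases around one language, and avoids any pumping analysis of counter increments for the co-Büchi case.
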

\begin{proof}
    As an immediate consequence of \Cref{lem:char-det-reach} (proved below) we have that no deterministic reachability~PA recognizes the deterministic \mbox{$\omega$-regular} language $a^* b^\omega$.

    The two other claims follow from~\cite[Proof of Theorem 3]{infiniteZimmermann}, where the authors have shown that (even non-deterministic) safety PA do not recognize the det.\ $\omega$-regular language \mbox{$\{a,b\}^\omega \setminus \{a\}^\omega$} and that no co-Büchi~PA recognizes the deterministic $\omega$-regular language $L_{a=\infty}=\{\alpha \in \{a,b\}^\omega \mid |\alpha|_a = \infty\}$.
\end{proof}

\subsection{Deterministic Safety PA and co-Büchi PA}

As a consequence of \Cref{lem:safety-cobucki} we obtain the following corollary. 

\begin{corollary}
\mbox{}\\[1mm]
    \begin{tikzpicture}
\node[anchor=west] at (-.7,-.2) {Deterministic reach-reg.\ PA};
\node[anchor=west] at (-.7,-0.6) {Deterministic limit PA};
\node[anchor=west] at (-.7,-1.05) {Deterministic strong reset PA};
\node[anchor=west] at (-.7,-1.45) {Deterministic weak reset PA};

\draw [decorate,thick, 
decoration = {calligraphic brace, raise = 2pt, amplitude = 4pt}] (4,0) --  (4,-1.66);

\node at (4.5, -0.8) {$\not\subseteq$};

\draw [decorate, thick,
decoration = {calligraphic brace, raise = 2pt, amplitude = 4pt,mirror}] (5,-0.3) --  (5,-1.3);

\node[anchor=west] at (5,-0.6) {Deterministic safety PA};
\node[anchor=west] at (5,-1.1) {Deterministic co-Büchi PA};
\end{tikzpicture}
\end{corollary}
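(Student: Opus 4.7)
The plan is to exhibit, for each of the four models on the left-hand side, a witness language accepted by that model but not by any safety PA, and similarly a witness accepted by that model but not by any co-Büchi PA. Two concrete choices suffice for all four models simultaneously, namely the deterministic $\omega$-regular languages $L_{a=\infty} = \{\alpha \in \{a,b\}^\omega \mid |\alpha|_a = \infty\}$ and $\{a,b\}^\omega \setminus \{a\}^\omega$. By Lemma~\ref{lem:safety-cobucki} (invoking~\cite{infiniteZimmermann}), the former is recognized by no co-Büchi PA and the latter by no safety PA.

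The key step is therefore to verify that each of the four left-hand models subsumes the class of deterministic Büchi automata; since both witness languages are trivially deterministic Büchi recognizable, the corollary will then follow. Given a deterministic Büchi automaton $\Amc$, the construction is simply to view it as a PA of dimension~$1$ whose transitions all carry the counter increment~$\0$ and whose semi-linear set is $C = \{\0\}$. Then the Parikh side of the acceptance condition is trivially satisfied on every run prefix, and each of the four acceptance conditions in question collapses to the pure Büchi visit condition: for reach-reg PA both conjuncts reduce to ``infinitely many accepting visits'' (the reachability conjunct being implied); for limit PA the extended Parikh image satisfies $\rho(r) = \0 \in C$ since no component ever changes; and for the strong and weak reset PA every segment between consecutive accepting positions has Parikh image $\0 \in C$, so every accepting visit can serve as a valid reset. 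This subsumption is also informally noted in the paragraph preceding the corollary.

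Combining these two ingredients finishes the proof: each of deterministic reach-reg PA, deterministic limit PA, deterministic strong reset PA and deterministic weak reset PA recognizes the deterministic Büchi language $L_{a=\infty}$, which no co-Büchi PA recognizes, and each of them also recognizes the deterministic Büchi language $\{a,b\}^\omega \setminus \{a\}^\omega$, which no safety PA recognizes. There is no substantial obstacle here; the only small point to check is that the trivial-counter construction formally respects the convention $d \geq 1$, which is automatic with our choice of dimension $1$.
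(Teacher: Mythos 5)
Your proposal is correct and follows essentially the same route as the paper: the paper derives this corollary directly from Lemma~\ref{lem:safety-cobucki} (using the same two witness languages $L_{a=\infty}$ and $\{a,b\}^\omega \setminus \{a\}^\omega$) together with the observation, stated just before that lemma, that all four left-hand models generalize deterministic Büchi automata. Your explicit trivial-counter construction ($\0$-labels with $C=\{\0\}$) simply spells out that observation, and it is verified correctly for each of the four acceptance conditions.
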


As shown in \cite{infiniteZimmermann} also deterministic reachability PA $\not\subseteq$ deterministic safety PA and deterministic reachability PA $\not\subseteq$ deterministic co-Büchi~PA as well as deterministic Büchi PA $\not\subseteq$ deterministic safety PA and deterministic Büchi PA $\not\subseteq$ deterministic co-Büchi~PA. Furthermore, the classes of deterministic safety PA and deterministic co-Büchi PA are themselves incomparable as shown in~\cite{infiniteZimmermann}. 

Vice versa, deterministic safety PA $\not\subseteq$ non-deterministic weak reset PA and deterministic co-Büchi PA $\not\subseteq$ non-deterministic weak reset PA~\cite{grobler2023remarks}. Hence, these classes are no subclasses of any of the other studied classes. 

Overall, deterministic safety PA and deterministic co-Büchi PA are incomparable with all other studied models. 


\subsection{Deterministic Reachability PA}

We begin by characterizing the class of deterministic reachability~PA recognizable $\omega$-languages.

\begin{lemma}\label{lem:char-det-reach}
 An $\omega$-language $L$ is deterministic reachability~PA recognizable if and only if $L = U \Sigma^\omega$, where $U \subseteq \Sigma^*$ is recognized by a deterministic~PA.
\end{lemma}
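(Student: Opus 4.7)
The proof is in two directions, and in both directions the witnessing automaton is essentially the same automaton viewed through a different lens. I will exploit the fact that deterministic PA are complete by definition, so every infinite word has a unique infinite run.

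\textbf{From deterministic reachability PA to $U\Sigma^\omega$.} Given a deterministic reachability PA $\Amc = (Q,\Sigma,q_0,\Delta,F,C)$, I would \emph{reinterpret} $\Amc$ as a deterministic finite-word PA and set $U = L(\Amc)$, i.e.\ the set of finite words whose unique run ends in a state of $F$ with extended Parikh image in $C$. Then I would verify $R_\omega(\Amc) = U\Sigma^\omega$. If $\alpha\in R_\omega(\Amc)$, the reachability condition furnishes some $i\ge 1$ witnessing $p_i\in F$ and $\rho(r_1\dots r_i)\in C$, so $\alpha[1:i]\in U$ and $\alpha\in U\Sigma^\omega$. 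Conversely, if $\alpha = u\beta$ with $u\in U$, the uniqueness of the run on $\alpha$ forces its prefix of length $|u|$ to be the accepting run of $\Amc$ on $u$, which gives an accepting hit and hence $\alpha\in R_\omega(\Amc)$.

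\textbf{From $U\Sigma^\omega$ to a deterministic reachability PA.} Given a deterministic PA $\Bmc$ over finite words with $L(\Bmc)=U$, I would reinterpret $\Bmc$ as a deterministic reachability PA. Since $\Bmc$ is deterministic it is complete, so the run on any infinite word $\alpha$ is well-defined and infinite. Then $\alpha\in R_\omega(\Bmc)$ iff some prefix of $\alpha$ lies in $U$, iff $\alpha\in U\Sigma^\omega$.

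\textbf{The only subtlety} is the boundary case $\varepsilon\in U$ (forcing $U\Sigma^\omega=\Sigma^\omega$), because the reachability condition requires $i\ge 1$. I would handle this separately by exhibiting a trivial one-state deterministic reachability PA for $\Sigma^\omega$ (a single accepting state with self-loops labelled by $\0$ and $C=\{\0\}$, ensuring an accepting hit after reading the first letter). In the non-trivial case $\varepsilon\notin U$ the reinterpretation works verbatim. No other obstacle arises: determinism, completeness, and the very weak semantic commitment of the reachability condition (acceptance is fixed after a finite prefix) make both reinterpretations match.
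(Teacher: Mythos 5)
Your proposal is correct and takes essentially the same approach as the paper, whose proof is exactly this two-way reinterpretation of one and the same deterministic automaton as a finite-word PA and as a reachability PA (relying on completeness of deterministic PA). You are in fact more careful than the paper, which does not mention the $\varepsilon \in U$ boundary case at all (note the symmetric issue also arises in the forward direction when $q_0\in F$ and $\0\in C$, handled by taking $U=L(\Amc)\setminus\{\varepsilon\}$).
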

\begin{proof}
Let $\Amc$ be a deterministic reachability~PA recognizing $L$. Then we have $L(\Amc) = U$. Likewise, if $\Amc$ is a~PA recognizing~$U$, then \mbox{$R_\omega(\Amc) = L$} (recall that \Amc is complete by the definition of determinism). 
\end{proof}

We have the following strict inclusion. 

\begin{lemma}
\label{lem:reachToReachReg}
Deterministic reachability~PA $\subsetneq$ deterministic reach-reg.\ PA.
\end{lemma}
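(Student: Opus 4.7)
The plan is to establish both the inclusion and the strict separation: the inclusion by leveraging \Cref{lem:char-det-reach} to reduce the task to constructing, for every deterministic PA $\Amc$ recognizing a language $U\subseteq\Sigma^*$, a deterministic reach-reg.\ PA for $U\Sigma^\omega$; strictness by exhibiting a deterministic reach-reg.\ PA language that is not of the form $U\Sigma^\omega$.

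For the inclusion, fix $\Amc=(Q,\Sigma,q_0,\Delta,F,C)$ of dimension $d$. The naive attempt of just re-reading $\Amc$ as a reach-reg.\ PA fails, because its run on some $\alpha\in U\Sigma^\omega$ may visit $F$ only finitely often; likewise, routing $\Amc$ into an accepting sink on the first $F$-visit may miss accepting configurations that only occur at a later $F$-visit after an intervening $F$-exit. I will instead augment $\Amc$ with two additional counters $c^+,c^-$, where $c^+$ is incremented on each transition from $Q\setminus F$ into $F$ and $c^-$ on each transition from $F$ into $Q\setminus F$, both being left unchanged on all other transitions. Assuming without loss of generality that $q_0\notin F$ (if $\epsilon\in U$ then $U\Sigma^\omega=\Sigma^\omega$ is trivially recognizable, otherwise we prepend a fresh non-accepting copy of $q_0$), a one-line induction yields $c^+_i - c^-_i = [p_i \in F] \in \{0,1\}$ at every step~$i$. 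Setting $F'=Q$ makes the regularity clause of the reach-reg.\ condition trivially satisfied, and taking $C'=\{(\vbf,k+1,k) : \vbf\in C,\,k\in\Nbb\}$, which is semi-linear because each linear piece $C(\bbf,P)$ of $C$ contributes a linear piece with base $(\bbf,1,0)$ and the extra period $(\0,1,1)$, reduces the reach-reg.\ acceptance to ``$\exists i$ with $\vbf^{\Amc}_i\in C$ and $p_i\in F$'', i.e., exactly $\alpha\in U\Sigma^\omega$ by \Cref{lem:char-det-reach}.

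For strictness I take $L=(ab)^\omega$ over $\{a,b\}$. It is recognized by a two-state deterministic Büchi automaton, hence by a deterministic reach-reg.\ PA obtained by adding a single trivial counter with all transition vectors $\0$ and $C=\{0\}$. On the other hand, by \Cref{lem:char-det-reach} every deterministic reach.\ PA language has the form $U\Sigma^\omega$; but $L=U\Sigma^\omega$ with $U\neq\emptyset$ would give some $u\in U$ with $u\cdot b^\omega\in L$, contradicting the fact that $(ab)^\omega$ contains no occurrence of $bb$, while $U=\emptyset$ gives $U\Sigma^\omega=\emptyset\neq L$, a contradiction either way.

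The main obstacle is the inclusion, and specifically the conceptual step of representing the predicate ``$p_i\in F$'' inside the Parikh vector of $\Amc'$. The counters $c^+,c^-$ together simulate a $\Zbb$-valued indicator of $F$-membership whose value-$1$ level set $\{(k+1,k):k\in\Nbb\}$ is a linear subset of $\Nbb^2$; this is what allows us to declare every state of $\Amc'$ accepting, thereby trivializing the regularity clause, while still capturing the joint state-and-Parikh condition of the original reach.\ acceptance by the semi-linear set~$C'$.
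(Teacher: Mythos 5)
Your proof is correct and follows essentially the same route as the paper: both establish the inclusion by encoding the condition $p_i\in F$ into the extended Parikh image so that every state can be declared accepting (the paper uses a pair of visit/exit counters \emph{per state} to identify the current state, whereas you track $F$-membership with a single pair of counters $c^+,c^-$ --- a leaner instance of the same trick), and both derive strictness from \Cref{lem:char-det-reach} via a concrete separating language ($\{a^nb^na^\omega\mid n>0\}$ in the paper, $(ab)^\omega$ in yours). No gaps.
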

\begin{proof}
    Let $\Amc$ be a deterministic reachability~PA. We may assume that every state of $\Amc$ is accepting, as we can project the current state into the semi-linear set. To be precise, we introduce two new counters for each state of $\Amc$, counting the number of visits and exits. Then, the current state is the (unique) state with one more visit than exit, or in case that the number of visits and exits is the same for every state, then the current state is the initial state of $\Amc$.
    As these statements can be encoded into a semi-linear set, we can assume that every state is equipped with its own semi-linear set, and can hence make every state accepting (and assign the empty set if we want to simulate a non-accepting state).
    If every state is accepting, then $\Amc$ is an equivalent deterministic reach-reg~PA.

    The strictness is witnessed e.g.\ by the $\omega$-language $\{a^nb^na^\omega\mid n>0\}$, which is deterministic reach-reg.\ PA recognizable and by \Cref{lem:char-det-reach} not deterministic reachability~PA recognizable.
\end{proof}

It remains to show the following incomparability results. 

\begin{lemma}
\label{lem:reach-no-limit}
Deterministic reachability PA $\not\subseteq$ deterministic limit PA.
\end{lemma}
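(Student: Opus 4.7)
The plan is to exhibit an $\omega$-language that lies in deterministic reachability PA but not in any deterministic limit PA. By \Cref{lem:char-det-reach}, deterministic reachability PA recognize exactly the languages of the form $U\Sigma^\omega$ for deterministic PA $U$, so it suffices to pick such a $U$ whose "existence of a good prefix"–condition cannot be re-expressed as an asymptotic condition on a limit Parikh image. A natural candidate is $L = U\Sigma^\omega$ over $\Sigma = \{a,b,c\}$ with $U = \{a^n b^n c : n \geq 1\}$, which is clearly deterministic PA (two counters for the $a$- and $b$-runs, checked equal at the $c$), so $L$ is in deterministic reachability PA.

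For the lower bound I would assume for contradiction that a deterministic limit PA $\Amc = (Q,\Sigma,q_0,\Delta,F,C)$ of dimension $d$ recognizes $L$. By \Cref{lem:closure-limit} I may further assume $F = Q$, so acceptance is determined solely by whether the limit Parikh image (in $\Nbbinfty^d$) lies in $C$. I would then set up a pumping argument: for $n$ sufficiently large there is, by pigeonhole over $|Q|$, an $a$-cycle of some length $P_a$ in the run on $a^n$, and (for suitable residues) also a $b$-cycle of length $P_b$ in the run on $b^n$ from the state reached after $a^n$. Pumping these cycles produces the two-parameter family $\gamma_{n,s,t} = a^{n+sP_a} b^{n+tP_b} c \cdot c^\omega$, which (by a case analysis on where a potential prefix $a^k b^k c$ could sit) lies in $L$ iff $sP_a \leq tP_b$, while all runs share identical state sequences from the $c$-position onward and hence identical divergent-coordinate sets $S$ in the tail. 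The limit Parikh image is therefore $V + sw_A + tw_B$ for fixed $V \in \Nbbinfty^d$ (with $\infty$ on $S$) and cycle-vectors $w_A, w_B \in \Nbb^d$.

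In the "easy" subcases I would derive a contradiction directly: if $w_A$ vanishes on the non-divergent coordinates $\bar S$, then the limit Parikh images of $\gamma_{n,0,0}$ and $\gamma_{n,1,0}$ coincide but their $L$-memberships differ; symmetrically for $w_B$. The harder case is when $w_A|_{\bar S}$ and $w_B|_{\bar S}$ are both nonzero: here the preimage of $C$ under the affine map $(s,t) \mapsto V + sw_A + tw_B$ must equal exactly the half-plane $\{(s,t) : sP_a \leq tP_b\}$, which in isolation is consistent with a semi-linear $C$. To close the gap I would add a third pumping direction by varying $n$ itself along a common arithmetic progression modulo $\mathrm{lcm}(P_a, P_b)$, producing a further cycle-vector $u$ that must leave the acceptance invariant. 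Analyzing the linear dependencies among $u, w_A, w_B$ in $\Qbb^{|\bar S|}$ and comparing the induced equivalences in preimage-$(s,t,k)$-space with the $n$-independent condition $sP_a \leq tP_b$ forces either $w_A$ or $w_B$ to vanish on $\bar S$, or a linear relation of the form $\delta P_a = \varepsilon P_b$ that is incompatible with the strict inequality in some instance (e.g.\ comparing $(s,t) = (p,q)$ with a parallel but uncoupled pair).

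The main obstacle I expect is exactly this final linear-algebraic/semi-linear analysis, since any one pumping alone leaves semi-linear room for $C$, and it is the joint consistency across all three pumpings (and across all values of $n$ in the chosen progression) that yields the contradiction. As a fallback, if this direct route proves too delicate, one could replace the language $L$ by a variant such as $\{\alpha : \exists k \geq 1,\ \alpha_k = c \text{ and } |\alpha[1{:}k-1]|_a = |\alpha[1{:}k-1]|_b\}\Sigma^\omega$, whose "existential over $c$-positions" character is more visibly at odds with a limit condition, and apply the same pigeonhole-plus-pumping scheme to derive the separation.
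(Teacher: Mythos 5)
There is a genuine gap, and it starts with the choice of separating language. Your primary candidate $L = \{a^n b^n c : n \geq 1\}\Sigma^\omega$ \emph{is} deterministic limit PA recognizable: a word belongs to $L$ iff it has the shape $a^+ b^+ c\,\Sigma^\omega$ (checkable in the finite state space, with a non-accepting sink for deviating words) and its maximal initial $a$-block and $b$-block have equal length. A deterministic limit PA can record these two block lengths in two counters that are never touched again after the first $c$; the limit Parikh image then carries the two \emph{finite} values $n,m$, and the semi-linear set simply requires them equal. The point is that the existential "good prefix" of this $U$ is forced to occur at a syntactically determined position, so it collapses to a condition on the limit Parikh image. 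Consequently no contradiction can be derived for this language, and your pumping analysis must break down --- indeed, the condition you extract is $sP_a = tP_b$ (not $sP_a \leq tP_b$; a prefix $a^kb^kc$ of $a^Nb^Mc^\omega$ forces $k=N=M$), which is exactly the kind of relation a semi-linear $C$ expresses. Beyond the language choice, the core of your argument is explicitly unfinished: you acknowledge that each individual pumping "leaves semi-linear room for $C$" and that the contradiction is supposed to come from a joint analysis of three pumping directions that you only sketch. As written, this is a plan, not a proof.

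The paper avoids all of this by choosing $L = \{\alpha \mid |\alpha[1{:}i]|_a = |\alpha[1{:}i]|_b \text{ for some } i\}$ --- essentially your fallback language --- and by using a structurally different observation: the acceptance of a deterministic limit PA depends only on the limit Parikh image of the run and on which states are visited infinitely often, both of which are invariant under rearranging cycles inside the run. Taking the rejected word $a(a^nb^n)^\omega$, one finds a state $q$ repeated inside two distinct $b^n$-blocks and shifts the later $q$-cycle to the front; this creates a balanced $a$-$b$ prefix (so the new word is in $L$) without changing the limit Parikh image or the infinitely visited states (so the run stays rejecting). This order-sensitivity argument sidesteps any analysis of which semi-linear sets are realizable, which is precisely the part your proposal could not close. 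If you pursue your fallback language, you would still need an argument of this invariance type rather than the three-directional pumping you outline.
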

\begin{proof}
    We show that the deterministic reachability~PA recognizable $\omega$-language $L = \{\alpha \mid |\alpha[1:i]|_a$ $= |\alpha[1:i]|_b \text{ for some }i > 0\}$ is not deterministic limit~PA recognizable. The proof is similar to the proof of Theorem~3 of (the arXiv version of)~\cite{infiniteZimmermann}.
    Assume there is an $n$-state deterministic limit~PA $\Amc$ recognizing~$L$. Consider the unique non-accepting run of $\Amc$ on $a (a^n b^n)^\omega$.
    Observe that $\Amc$ visits at least one state twice while reading a $b^n$-block, and there are at least two of the (infinitely many) \mbox{$b^n$-blocks} such that $\Amc$ visits the same state, say $q$, twice while reading them. Hence, we can shift one such \mbox{$q$-cycle} to the front and obtain the unique run on an infinite word that is in~$L$. However, this run is still non-accepting, as the extended Parikh image and number of visits of an accepting state do not change.
\end{proof}

\begin{lemma}
\label{lem:reach-no-strong-reset}
Deterministic reachability~PA $\not\subseteq$ deterministic strong reset~PA.
\end{lemma}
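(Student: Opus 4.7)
The plan is to witness the separation by the language
\[
L = \{a^n b^n \mid n \geq 1\}\cdot \{a,b\}^\omega.
\]
By \Cref{lem:char-det-reach}, $L$ is deterministic reachability PA recognizable: the finite-word language $U = \{a^n b^n \mid n \geq 1\}$ is recognized by a two-state deterministic PA that tracks the $a^\ast b^\ast$ phase and checks $|w|_a = |w|_b$ with the semi-linear condition $C(\1,\{\1\})$.

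Suppose for contradiction that $\Amc$ is a deterministic strong reset PA with $m$ states and semi-linear set $C$ satisfying $SR_\omega(\Amc) = L$. The state $p_n^a$ reached after reading $a^n$ from the initial state is eventually periodic with some period $P_a \leq m$; restrict attention to one residue class and $n$ large. For every such $n$, the input $a^n b^\omega \in L$ is accepted, so the run has a first accepting visit at some position $k(n)$; by determinism and periodicity of $p_n^a$, the offset $j(n) := k(n)-n$ and the accepting state reached at $k(n)$ are independent of $n$ within the class. I distinguish two cases.

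\emph{Case Y (first accept in the $b$-phase):} Here $j(n) = j_0 \geq 1$ and the common accepting state is $q^\star$; let $L_{q^\star}$ denote the $\omega$-language accepted by $\Amc$ started from $q^\star$ with counters reset to $\0$. Pick $n_1 < n_2$ in the residue class, both $\geq j_0$. The run on $\beta = a^{n_1}b^{n_1}a^\omega \in L$ first accepts at position $n_1+j_0$ in state $q^\star$; after reset, $\Amc$ reads the continuation $b^{n_1-j_0}a^\omega$ from $q^\star$, and since $\beta$ is accepted this continuation must lie in $L_{q^\star}$. Consider now $\alpha = a^{n_2}b^{n_1}a^\omega$, which is \emph{not} in $L$: no prefix of the form $a^k b^k$ exists because $n_1 < n_2$, and the $a^\omega$-tail rules out any later balanced prefix. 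The run on $\alpha$ first accepts at position $n_2 + j_0$ in the same state $q^\star$, with first-segment Parikh image exactly equal to that of the accepting input $a^{n_2}b^{n_2}a^\omega \in L$ (both first segments coincide with the transition sequence on $a^{n_2}b^{j_0}$), so this first segment passes the semi-linear check in $C$. Hence the rejection of $\alpha$ must be caused by the remaining run from $q^\star$ on $b^{n_1-j_0}a^\omega$ --- but this is exactly the continuation of $\beta$, which \emph{must} lie in $L_{q^\star}$. Contradiction.

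\emph{Case X (first accept in the $a$-phase):} Now $k(n) = k^\ast$ is constant and the first accept occurs in the accepting state $q_a^\ast := p_{k^\ast}^a$. After the reset, the sub-run from $q_a^\ast$ on the continuation $a^{n-k^\ast}b^\omega$ is accepting, and I reapply the same dichotomy to it. Since $\Amc$ has finitely many states, iterating must either trigger Case Y at some depth (contradiction by the argument above) or the sequence of accepting states $q_0 = q_a^{(0)}, q_a^{(1)}, q_a^{(2)}, \ldots$ eventually cycles. In the latter subcase, the run of $\Amc$ on $a^\omega$ reproduces exactly this same sequence of accepting visits with identical first-segment Parikh images (each forced into $C$ by the Case X hypothesis at its depth), so $\Amc$ accepts $a^\omega \notin L$, again a contradiction. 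The conceptual heart of the argument, and the step I expect to require the most care, is Case Y: the invariance of $q^\star$ across the residue class together with the unbounded growth of the threshold $n - j_0$ is what precipitates the contradiction, and the one nontrivial piece of bookkeeping is verifying that $\alpha$'s first segment really passes the Parikh check in $C$, which follows by the above comparison with the accepting input $a^{n_2}b^{n_2}a^\omega$.
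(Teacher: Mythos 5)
Your proof is correct, and it reaches the contradiction by a genuinely different route than the paper, although both arguments use the same witness language $L=\{a^nb^n\mid n\geq 1\}\cdot\{a,b\}^\omega$ and both are ultimately splicing arguments at reset positions. The paper fixes a single word $a^nb^\omega$ (with $n$ the number of states of the hypothetical automaton), locates two reset positions carrying the same state inside the eventually periodic $b^\omega$-tail --- the first at offset $<n$ into the $b$-block, a later one at offset $\geq n$ --- and splices to obtain an accepting run on $a^nb^{i-n}a^\omega\notin L$; it never needs to discuss resets inside the $a$-block. You instead vary the $a$-prefix: eventual periodicity of the state reached after $a^n$ yields a common first reset state $q^\star$ across two words $a^{n_1}b^\omega$ and $a^{n_2}b^\omega$, and the cross-over word $a^{n_2}b^{n_1}a^\omega\notin L$ inherits a $C$-valid first segment from $a^{n_2}b^{n_2}a^\omega$ and a valid continuation from $a^{n_1}b^{n_1}a^\omega$. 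The price is your Case X, which the paper's choice of splice points avoids entirely: when the first reset falls inside the $a$-block you must recurse, and the clean way to close that branch is exactly the observation you make --- if Case X persists at every depth, every inter-reset segment of the run on $a^\omega$ is forced into $C$, so $a^\omega\in SR_\omega(\Amc)$; the ``eventually cycles'' remark is not actually needed for this conclusion. Two cosmetic points: the claim that $j(n)=k(n)-n$ is independent of $n$ is only true in Case Y (in Case X it is $k(n)$ that is constant), so it belongs after the case split rather than before it; and the finite-word automaton for $\{a^nb^n\mid n\geq 1\}$ needs a third (sink) state to be deterministic, i.e., complete, while still satisfying $U\Sigma^\omega=L$.
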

\begin{proof}
    We show that the deterministic reachability~PA recognizable $\omega$-language $L = \{a^n b^n \mid n \geq 1\} \cdot \{a,b\}^\omega$ is not deterministic strong reset~PA recognizable. Assume there is a deterministic strong reset~PA~$\Amc$ with $n$ states recognizing $L$. Let $\alpha = a^n b^\omega$ with unique accepting run $r = r_1 r_2 r_3 \dots$ of $\Amc$ on~$\alpha$.
    Let $f_1, f_2, \dots$ be the sequence of reset positions of~$r$ and let $i > n$ be minimal such that $i = f_{i'}$ for some $i' \geq 1$ (that is, $f_{i'}$ is the first reset position after reading a $b$).

    First observe that $i < 2n$. Assume that this is not the case. As~$\Amc$ visits at least one state twice while reading $b^n$, say state~$q$, we observe that $\Amc$ is caught in a $q \dots q$ cycle while reading $b^\omega$ due to determinism. That is, every state that is visited while reading~$b^\omega$ is already visited while reading the first $n$ many $b$s. Hence we have $i < 2n$.
    Now let $j \geq 2n$ be minimal such that $j = f_{j'}$ for some $j' > i'$ is a reset position in $r$ such that the state at position $f_{j'}$ is the same state as the one at position $f_{i'}$ (which exists by the same argument).

    Now let $\alpha' = a^n b^{j-n}a^\omega$ with unique accepting run $r' = r'_1 r'_2 r'_3 \dots$ of~$\Amc$ on $\alpha'$. 
    Observe that $\alpha[1:j] = \alpha'[1:j]$, and hence $r[1:j] = r'[1:j]$. 
    As the partial runs $r[1:i]$ and $r[1:j]$ reach the same accepting state, the run $r[1:i]r'_{j+1}r'_{j+2} \dots$ is an accepting run of~$\Amc$ on~$a^n b^{i-n} a^\omega$. 
    However, as $i-n < n$, this infinite word is not contained in $L$, a contradiction.
\end{proof}

\subsection{Deterministic Reachability-regular PA}

We begin by showing that every deterministic reach-reg.~PA (and hence every deterministic reachability~PA) can be translated into an equivalent deterministic weak reset~PA.

\begin{lemma}
\label{lem:reachRegToWeak}
Deterministic reach-reg.\ PA $\subsetneq$ deterministic weak reset~PA.
\end{lemma}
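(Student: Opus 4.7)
The plan is to prove containment via a construction and then show strictness via a witness language.

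For the containment, the main conceptual obstacle is semantic: the reach-reg condition requires the semi-linear constraint $C$ to be satisfied \emph{once}, on the total cumulative Parikh image at some accepting $F$-visit, whereas the weak reset condition requires it to hold on \emph{each} reset interval individually. The plan is to mark the very first reset interval by means of one extra counter-coordinate and to relax the per-interval constraint on every later interval. Concretely, given $\Amc = (Q, \Sigma, q_0, \Delta, F, C)$ of dimension $d$, I would introduce a fresh initial state $q_0^*$ whose outgoing transitions duplicate those of $q_0$ but tag the new $(d{+}1)$st coordinate with $1$; every transition originating in an old state keeps its original vector and tags the new coordinate with $0$. Since no transition enters $q_0^*$, the new coordinate receives its unique increment on the first transition of any run, so its per-interval contribution equals $1$ in the first reset interval and $0$ in every later one. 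Taking $F' = F$ and the semi-linear set
\[
C' \;=\; (C \times \{1\}) \cup (\Nbb^d \times \{0\})
\]
yields the desired deterministic weak reset PA $\Amc'$.

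For correctness, both directions are straightforward with this marker. If $\Amc$ accepts $\alpha$, pick an $F$-visit $j_{l^*}$ with $\rho(r[1{:}j_{l^*}]) \in C$ and set the reset sequence $k_1 = j_{l^*}, k_2 = j_{l^*+1}, \dots$ to consist of $j_{l^*}$ and all later $F$-visits. The first interval's image lies in $C \times \{1\}$ by the choice of $k_1$, and every later interval's image lies in $\Nbb^d \times \{0\}$ since its last coordinate is $0$; hence $\Amc'$ accepts. Conversely, any accepting reset sequence $k_1 < k_2 < \cdots$ of $\Amc'$ has a first interval whose last coordinate is $1$ (because $q_0^*$ is left at position $1$ and never revisited), forcing its first $d$ coordinates, which equal $\rho(r[1{:}k_1])$, into $C$; combined with $k_1 \in F$ and infinitely many further $k_i \in F$, this gives reach-reg acceptance. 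I do not expect any real obstacle in this direction --- the split-initial-state trick is precisely what distinguishes the first interval from all later ones.

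For strictness, I would use $L = \{a^n b^n : n \geq 1\}^\omega$. A deterministic weak reset PA recognizes $L$ using two counters counting $a$'s and $b$'s, the accepting set $\{q_b\}$ entered on reading a $b$, and the semi-linear set $\{(n,n) : n \geq 1\}$, so each reset interval captures one balanced block. To show $L$ is not deterministic reach-reg, assume a deterministic reach-reg PA $\Amc$ recognizes $L$, and consider $\alpha = (ab)^\omega \in L$, whose unique accepting run verifies the Parikh condition at some position $i^*$ with $p_{i^*} \in F$. Fix $u = (ab)^k$ with $2k \geq i^*$: the set of $\omega$-suffixes $\beta$ with $u\beta$ accepted by $\Amc$ is then exactly the set of $\beta$ whose run from $p_{2k}$ visits $F$ infinitely often, which is an $\omega$-regular language. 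But this set must coincide with $\{\beta : u\beta \in L\}$, which equals $L$ itself since $u$ is already a valid block-decomposition. Hence $L$ would be $\omega$-regular, contradicting a standard pumping argument on $a^N b^N (ab)^\omega$ for large $N$. The main obstacle in this part will be cleanly setting up the suffix characterization; the non-$\omega$-regularity of $L$ is routine.
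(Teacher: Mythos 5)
Your proposal is correct and matches the paper's proof essentially verbatim: the paper uses exactly the same construction (fresh initial state inheriting $q_0$'s outgoing transitions, a new coordinate set to $1$ only on transitions leaving it, and $C' = C \cdot \{1\} \cup \Nbb^d \cdot \{0\}$) and the same witness language $\{a^n b^n \mid n > 0\}^\omega$ for strictness. The only difference is that for strictness the paper simply cites that this language is not recognized by any (non-deterministic) Büchi PA, which subsumes reach-reg.\ PA, whereas you give a self-contained argument via the suffix language $\{\beta \mid (ab)^k\beta \in L(\Amc)\}$ being $\omega$-regular once the accepting hit has occurred; your argument is sound and works as a valid elementary replacement for the citation.
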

\begin{proof}
Let $\Amc = (Q, \Sigma, q_0, \Delta, F, C)$ be a det.\ reach-reg.~PA. Let $\Amc' = (Q \cup \{q_0'\}, \Sigma, q_0', \Delta', F, C')$ be a copy of $\Amc$ with a new fresh initial state $q_0'$ inheriting all outgoing transitions of $q_0$ (observe that this modification preserves determinism). We add one new counter that is incremented at every transition leaving $q_0'$, and not modified otherwise, that is, 
\begin{align*}
\Delta' =&\ \{(p, a, \vbf \cdot 0, q) \mid (p, a, \vbf, q) \in \Delta\} 
\cup \{(q_0', a, \vbf \cdot 1, q) \mid (q_0, a, \vbf, q) \in \Delta\}.
\end{align*}

We choose $C' = C \cdot \{1\} \cup \Nbb^d \cdot \{0\}$ and obtain an equivalent weak reset~PA $\Amc'$.

The strictness is witnessed by the $\omega$-language \mbox{$\{a^n b^n \mid n > 0\}^\omega$}, which is obviously deterministic weak reset~PA-recognizable, but not even recognized by (non-deterministic) Büchi~PA, which are more expressive than reachability-regular~PA~\cite{grobler2023remarks, infiniteZimmermann}.
\end{proof}

\subsection{Deterministic Strong Reset~PA}

\begin{lemma}
Deterministic strong reset~PA $\subsetneq$ deterministic weak reset PA.
\end{lemma}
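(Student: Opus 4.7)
My plan is to establish the inclusion and strictness separately. For the inclusion, given a deterministic strong reset PA $\Amc = (Q, \Sigma, q_0, \Delta, F, C)$ of dimension $d$, I would introduce one extra counter that is incremented precisely on transitions entering an accepting state. Assuming without loss of generality that $q_0 \notin F$ (otherwise add a fresh non-accepting initial state inheriting the outgoing transitions of $q_0$, as in the proof of \Cref{lem:reachRegToWeak}, which preserves determinism), I would build $\Amc' = (Q, \Sigma, q_0, \Delta', F, C')$ of dimension $d+1$, where
\[
\Delta' = \{(p, a, \vbf \cdot v', q) \mid (p, a, \vbf, q) \in \Delta\}
\]
with $v' = 1$ if $q \in F$ and $v' = 0$ otherwise, and $C' = C \cdot \{1\}$. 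To see that $SR_\omega(\Amc) = W\!R_\omega(\Amc')$, observe that in any candidate weak reset acceptance with reset positions $0 = k_0 < k_1 < k_2 < \dots$, the last coordinate of $\rho(r_{k_{i-1}+1} \dots r_{k_i})$ counts exactly the number of accepting states entered inside the index range $k_{i-1}+1, \dots, k_i$. Since $p_{k_i} \in F$ already contributes $1$ and $C'$ forces the total to be exactly $1$, no other accepting state may lie in that window. Hence the reset positions of $\Amc'$ are forced to be precisely the sequence of all accepting-state positions of the underlying run of $\Amc$, and the first $d$ coordinates land in $C$ for every $i$ if and only if $\Amc$ satisfies the strong reset condition.

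For the strictness, I would invoke results already proved in this section. The language $L = \{a^n b^n \mid n \geq 1\} \cdot \{a,b\}^\omega$ is deterministic reachability PA recognizable by \Cref{lem:char-det-reach}, since the finite-word language $\{a^n b^n \mid n \geq 1\}$ is recognized by a deterministic~PA. By \Cref{lem:reachToReachReg} and \Cref{lem:reachRegToWeak}, $L$ is therefore deterministic weak reset PA recognizable. However, \Cref{lem:reach-no-strong-reset} already established that $L$ is not deterministic strong reset PA recognizable, yielding the strict inclusion. The main obstacle in the argument is verifying that the weak reset semantics on $\Amc'$ really does pin the reset positions uniquely to the accepting-state positions of the run; this hinges on the interpretation of the extra counter together with the requirement $p_{k_i} \in F$ at each reset, and is precisely why the boundary case $q_0 \in F$ must be eliminated at the start to avoid a spurious contribution in the initial window.
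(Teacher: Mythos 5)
Your proposal is correct and follows essentially the same route as the paper: the inclusion via an extra counter whose value is forced to be exactly $1$ per reset window (thereby pinning the weak resets to every accepting-state position), and the strictness via $\{a^n b^n \mid n \geq 1\} \cdot \{a,b\}^\omega$ combined with \Cref{lem:char-det-reach}, \Cref{lem:reachToReachReg}, \Cref{lem:reachRegToWeak}, and \Cref{lem:reach-no-strong-reset}. The paper only sketches the inclusion by reference to the non-deterministic construction of \cite{grobler2023remarks}, whereas you spell it out explicitly; the extra handling of $q_0 \in F$ is harmless but not strictly needed, since the new counter is incremented only on transitions \emph{entering} accepting states and the initial state is never entered by a transition.
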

\begin{proof}
    The inclusion follows by the same argument as in the non-deterministic setting \cite{grobler2023remarks} (we use an additional counter to force the weak reset~PA to reset whenever an accepting state is visited). 

    The strictness follows from the fact that $\{a^n b^n \mid n \geq 1\} \cdot \{a,b\}^\omega$ is deterministic reachability~PA recognizable, and hence deterministic weak reset~PA recognizable (by \Cref{lem:reachToReachReg} and \Cref{lem:reachRegToWeak}), but not recognized by any deterministic strong reset~PA, as shown in \Cref{lem:reach-no-strong-reset}.
\end{proof}

\pagebreak
\begin{lemma}
Deterministic strong reset~PA $\not\subseteq$ deterministic Büchi~PA.
\end{lemma}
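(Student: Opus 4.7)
The plan is to exhibit an $\omega$-language recognized by a deterministic strong reset~PA but not by any deterministic Büchi~PA. I would take $L = \{a^n b^n c \mid n \geq 1\}^\omega$ over the alphabet $\{a, b, c\}$, where $c$ serves as a block separator.

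For the first direction, I would construct a deterministic strong reset~PA $\Amc$ for $L$ with states $q_0, q_a, q_b, q_{\mathrm{acc}}, q_{\mathrm{sink}}$ (only $q_{\mathrm{acc}}$ accepting), two counters, and nontrivial transitions $(q_0, a, (1,0), q_a)$, $(q_a, a, (1,0), q_a)$, $(q_a, b, (0,1), q_b)$, $(q_b, b, (0,1), q_b)$, $(q_b, c, (0,0), q_{\mathrm{acc}})$, $(q_{\mathrm{acc}}, a, (1,0), q_a)$, with every other state-letter pair going to $q_{\mathrm{sink}}$. With semi-linear set $C = \{(n,n) \mid n \geq 1\}$, each visit to $q_{\mathrm{acc}}$ occurs precisely at the end of a complete block $a^n b^n c$, at which point the segment Parikh image is $(n,n) \in C$ and the counters reset.

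For the second direction, I would show that $L$ is not recognized by any \emph{non-deterministic} Büchi~PA, which in particular rules out all deterministic ones. The reduction target is the fact, recalled in the strictness part of \Cref{lem:reachRegToWeak} and attributed to~\cite{grobler2023remarks, infiniteZimmermann}, that $\{a^n b^n \mid n \geq 1\}^\omega$ is not recognized by any non-deterministic Büchi~PA. Assuming a non-deterministic Büchi~PA $\Bmc$ recognized $L$, I would construct a non-deterministic Büchi~PA for $h(L) = \{a^n b^n \mid n \geq 1\}^\omega$ under the $\varepsilon$-erasing homomorphism with $h(a) = a$, $h(b) = b$, $h(c) = \varepsilon$. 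The construction keeps the $a$- and $b$-transitions of $\Bmc$ and adds all compositions of a single $c$-transition followed by an $a$- or $b$-transition; a single composition suffices because no two $c$'s are adjacent in any word of $L$. The state space is augmented with a one-bit flag that records whether an accepting visit of $\Bmc$ with cumulative Parikh image in $C$ occurred at the skipped intermediate state of a combined transition.

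The main obstacle is the faithful preservation of Büchi acceptance across the simulated $c$-transitions: because the Parikh condition is on the cumulative image and an accepting configuration of $\Bmc$ may occur precisely at the intermediate state of a combined transition, the flag must correspond to the correct cumulative Parikh value, which requires a careful product construction. A fully self-contained alternative would be a direct pumping argument on the run of a hypothetical deterministic Büchi~PA $\Amc'$ on $(a^N b^N c)^\omega$ for sufficiently large $N$: pigeonhole yields state cycles of lengths $p$ and $p'$ inside the $a^N$- and $b^N$-segments, and pumping them by unequal multiples $k_a p \neq k_b p'$ produces a word with an invalid first block whose state trajectory after the modification agrees with that of the original run. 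The technical challenge there is to choose $(k_a, k_b)$ so that the cumulative Parikh offset lies in the period lattice of a linear component of $C$ that witnesses the original run's acceptance, which is possible by pigeonhole on the finitely many linear components of $C$.
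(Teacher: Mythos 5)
Your first half is fine: the automaton for $L=\{a^nb^nc\mid n\geq 1\}^\omega$ is a correct deterministic strong reset~PA. The gap is in the second half. By inserting the separator $c$ you have left the realm of the language whose non-Büchi-PA-recognizability is actually on record, namely $\{a^nb^n\mid n>0\}^\omega$ (cited in the strictness part of \Cref{lem:reachRegToWeak}), and neither of your two proposed ways back is carried out. The homomorphism route founders exactly where you say it does: a Büchi~PA checks the \emph{cumulative} Parikh image against a single semi-linear set at every accepting visit, so an accepting hit occurring at the erased intermediate position (right after a $c$) cannot simply be re-attributed to the endpoint of the combined transition --- the cumulative image there is off by the second half's vector, counters cannot be decremented or reset, and a single global set $C$ cannot apply different tests at different states without a further idea. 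The pumping alternative has its own unresolved step: to keep infinitely many accepting hits after pumping you must ensure the added vector $k_a\vbf_a+k_b\vbf_b$ (sums of the \emph{transition labels} around the cycles, not the letter counts) lies in the $\Nbb$-span of the periods of a linear component witnessing infinitely many of the original hits, and pigeonhole on components alone does not give you that together with $k_ap\neq k_bp'$. Essentially you would be re-proving the known hardness result from scratch.

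The fix is to drop the separator. The language $\{a^nb^n\mid n>0\}^\omega$ is itself deterministic strong reset~PA recognizable: block boundaries are detected at the $b\to a$ transition, so make the state reached by reading the first $a$ of a new block the unique accepting state, label that transition $(1,0)$, and label the very first transition of the run (out of the initial state) with $\0$; then every inter-reset segment is $a^{n-1}b^{n}a$ with Parikh image $(n,n)$, and $C=C((1,1),\{(1,1)\})$ works. With this delayed-reset construction the second half of the argument is a citation, which is exactly the proof the paper gives.
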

\begin{proof}
    The argument is as in \Cref{lem:reachRegToWeak}. The $\omega$-language $\{a^n b^n \mid n > 0\}^\omega$ is deterministic strong reset~PA recognizable, but there is no Büchi~PA recognizing it~\cite{infiniteZimmermann}.
\end{proof}

 \begin{lemma}
 \label{lem:strong-reset-no-limit}
Deterministic strong reset~PA $\not\subseteq$ deterministic limit~PA.
\end{lemma}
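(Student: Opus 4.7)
The plan is to reuse the witness language $L = \{a^n b^n \mid n \geq 1\}^\omega$ that already separated deterministic strong reset~PA from deterministic Büchi~PA in the immediately preceding lemma. As observed there, $L$ is recognized by a deterministic strong reset~PA: one counter tracks the number of $a$s and another tracks the number of $b$s within the current block, the (unique) accepting state is entered exactly when a $b$ is read after at least one $a$ has been seen since the last reset, and the strong reset semantics enforces ``both counters agree and are positive'' via the semi-linear set before resetting the counters and proceeding to the next block. So the membership side of the non-inclusion is already in hand.

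For the non-membership side, I would avoid any ad hoc pumping argument and instead chain two results from the literature. From~\cite{grobler2023remarks} we have the (non-deterministic) inclusion limit~PA $\subseteq$ Büchi~PA, so any language recognized by a non-deterministic limit~PA is recognized by some non-deterministic Büchi~PA. From~\cite{infiniteZimmermann} we have that $L = \{a^n b^n \mid n \geq 1\}^\omega$ is not recognized by any (even non-deterministic) Büchi~PA. Combining the two, $L$ is not recognized by any non-deterministic limit~PA, and in particular not by any deterministic limit~PA, which is what we need.

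The main obstacle is purely bookkeeping: making sure the two cited statements really are stated in the required direction and strength, namely that the inclusion limit~PA $\subseteq$ Büchi~PA holds in the non-deterministic setting (so that non-recognizability transfers downward to \emph{deterministic} limit~PA as well), and that the non-recognizability of $L$ in~\cite{infiniteZimmermann} indeed applies to non-deterministic Büchi~PA rather than only to a weaker model. No new technical work seems necessary beyond this cross-referencing.
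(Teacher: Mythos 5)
Your proposal is correct and is essentially the paper's own argument: the paper also uses the witness $\{a^n b^n \mid n > 0\}^\omega$, which is deterministic strong reset~PA recognizable but not recognized by any Büchi~PA~\cite{infiniteZimmermann}, and then invokes the inclusion of (non-deterministic) limit~PA in Büchi~PA from~\cite{grobler2023remarks} to rule out any limit~PA, in particular a deterministic one.
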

\begin{proof}
    This follows from the previous proof as limit~PA are less expressive than Büchi~PA~\cite{grobler2023remarks}.
\end{proof}

\subsection{Deterministic Büchi~PA}
We show that $\omega$-languages recognized by deterministic Büchi~PA can be characterized in a similar way as deterministic $\omega$-regular languages.
\begin{lemma}
    An $\omega$-language $L$ is deterministic Büchi~PA recognizable if and only of $L = \vec{P}$ where $P$ is recognized by a deterministic~PA.
\end{lemma}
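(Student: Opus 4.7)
The plan is to prove both directions by observing that a deterministic Büchi PA and a deterministic (finite word) PA share exactly the same syntactic description; the two models differ only in their acceptance conditions. Thanks to determinism, both translations can be taken to be the identity on the tuple.

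For the forward direction, suppose $\Amc = (Q, \Sigma, q_0, \Delta, F, C)$ is a deterministic Büchi PA with $B_\omega(\Amc) = L$. Let $\Amc'$ denote the very same tuple interpreted as a deterministic finite word PA, and let $P = L(\Amc')$. Since $\Amc$ is deterministic (and complete), there is a unique infinite run $r = r_1 r_2 \ldots$ of $\Amc$ on every infinite word $\alpha$, and for every $i \geq 1$ the unique run of $\Amc'$ on the prefix $\alpha[1:i]$ is exactly the prefix $r_1 \ldots r_i$ together with the same extended Parikh image. Hence the Büchi acceptance condition — ``infinitely many $i \geq 1$ satisfy $p_i \in F$ and $\rho(r_1 \ldots r_i) \in C$'' — is equivalent to ``$\alpha[1:i] \in P$ for infinitely many $i$'', i.e.\ $\alpha \in \vec{P}$. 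This yields $L = \vec{P}$.

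For the backward direction, suppose $P$ is recognized by a deterministic PA $\Amc'$, and interpret the same tuple as a deterministic Büchi PA $\Amc$. The same equivalence applied in reverse gives $B_\omega(\Amc) = \vec{P}$, so $\vec{P}$ is deterministic Büchi PA recognizable.

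The main (and essentially only) subtle point is that both directions rely critically on determinism: for a non-deterministic PA the accepting runs certifying membership of the various prefixes $\alpha[1:i] \in P$ need not coincide, so they do not necessarily extend to a single accepting infinite run of the Büchi PA. Under determinism this issue disappears, and the proof reduces to a definitional unwinding analogous to the classical characterization $L = \vec{W}$ of deterministic $\omega$-regular languages by regular languages $W$ that is recalled in the preliminaries.
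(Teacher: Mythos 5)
Your proof is correct and follows essentially the same route as the paper: both interpret the identical tuple as a deterministic finite-word PA and as a deterministic Büchi PA, and use determinism to identify the unique runs on prefixes with prefixes of the unique infinite run, so that accepting hits correspond exactly to prefixes in $P$. Your writeup is in fact slightly more explicit than the paper's about why determinism is needed for the direction $\vec{P} \subseteq B_\omega(\Amc)$.
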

\begin{proof}
Let $\Amc$ be a deterministic Büchi~PA recognizing $L$ and let $\alpha \in B_\omega(\Amc)$ with accepting run $r$. As $r$ has infinitely many accepting hits by definition, we have $\alpha \in \vec{L(\Amc)}$.
Similarly, let~$\Amc$ be a deterministic~PA recognizing $P$ and let $\alpha \in \vec{P}$. As $\Amc$ is deterministic, the unique run of $\Amc$ on $\alpha$ has infinitely many accepting hits, hence we have $\alpha \in B_\omega(\Amc)$.
\end{proof}

 \begin{lemma}
 Deterministic Büchi~PA $\not\subseteq$ deterministic limit PA.
\end{lemma}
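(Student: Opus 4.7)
I plan to exhibit as the separating language
\[
L = \{\alpha \in \{a,b\}^\omega \mid |\alpha[1:i]|_a = |\alpha[1:i]|_b \text{ for infinitely many } i\}.
\]
First I would check that $L$ is recognized by a deterministic Büchi~PA: take a single (accepting) state with self-loops on $a$ and $b$ carrying counter vectors $(1,0)$ and $(0,1)$, and the semi-linear set $\{(z,z)\mid z\in\Nbb\}$. An accepting hit at position $i$ then occurs precisely when the $a$- and $b$-counts of $\alpha[1:i]$ coincide, so the Büchi condition captures exactly ``infinitely many balanced prefixes''.

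For the non-recognizability by deterministic limit~PA I would argue by contradiction. Suppose $\Bmc$ is a deterministic limit~PA with $n$ states and $d$ counters recognizing~$L$, and consider $\alpha = (a^n b^n)^\omega \in L$. The unique run~$r$ of $\Bmc$ on~$\alpha$ is accepting; since the input is periodic with period $2n$, the state sequence of~$r$ is ultimately periodic with a period $2nm$ for some $1\leq m\leq n$, starting from some position $P_0$ which I may choose to be a multiple of $2n$. Inside the $a^n$-part of the cycle starting at~$P_0$ there are $n+1$ state occurrences $q^{(0)},\dots,q^{(n)}$, so by pigeonhole $q^{(i)} = q^{(j)}$ for some $0\leq i < j \leq n$; this yields a sub-cycle that reads the purely-$a$ word $a^{j-i}$. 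Its counter contribution $\vbf_{\mathrm{sub}}$ is coordinate-wise dominated by the full cycle contribution, hence its support is contained in the $\infty$-pattern $I$ of $\rho(r)$.

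Pumping this sub-cycle $k$ times produces
\[
\alpha^{(k)} = (a^n b^n)^{P_0/(2n)} \cdot a^{n + k(j-i)} \cdot b^n \cdot (a^n b^n)^\omega.
\]
I then verify two claims. First, $\alpha^{(k)} \notin L$ for every $k \geq 1$: the extra $a$s create a permanent count-difference of $k(j-i)$ at every subsequent block boundary, and inside each later block the difference oscillates strictly between $k(j-i)$ and $n+k(j-i)$, never returning to~$0$, so only the finitely many balances of the unmodified prefix remain. Second, $\Bmc$ still accepts $\alpha^{(k)}$: its run coincides with $r$ up to position $P_0+i$, then performs $k$ extra traversals of the sub-cycle (each one returning to $q^{(i)}=q^{(j)}$), and thereafter continues identically to~$r$ from position $P_0+j+1$ onwards; hence the set of states visited infinitely often, and thus the state-Büchi component, is preserved. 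Moreover $\rho(r^{(k)}) = \rho(r) + k\vbf_{\mathrm{sub}}$ equals $\rho(r)$ in $\Nbbinfty^d$, because on $I$-coordinates one adds a finite number to~$\infty$, and on $I^c$-coordinates $\vbf_{\mathrm{sub}}$ contributes~$0$. Therefore $\rho(r^{(k)}) \in C$ and $\Bmc$ accepts $\alpha^{(k)}$, contradicting $\alpha^{(k)} \notin L$.

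The main obstacle is to locate a pumpable segment that simultaneously disturbs balance \emph{and} is invisible to $\Bmc$'s semi-linear test. These two requirements pull against each other: one needs an unbalanced sub-cycle whose counter contribution nonetheless cannot change the acceptance verdict. The key trick is that pigeonholing \emph{inside a single $a^n$-block} yields a sub-cycle reading only~$a$s (so balance is destroyed), while mere containment in the accepting cycle automatically places the sub-cycle's counter support inside the $\infty$-pattern, where adding any finite amount does nothing under the limit semantics.
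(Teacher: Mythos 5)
Your proof is correct and uses the same separating language $L_{a=b}$ as the paper, but it runs the pumping argument in the opposite direction. The paper's proof defers to \Cref{lem:reach-no-limit}: it starts from the \emph{rejected} word $a(a^nb^n)^\omega$ and \emph{rearranges} the run, shifting a $b$-cycle from a later $b^n$-block towards the front, to manufacture a word with balanced prefixes whose run is still rejecting --- the point being that a permutation of the transitions changes neither the limit Parikh image nor the set of states visited infinitely often. You instead start from the \emph{accepted} word $(a^nb^n)^\omega$ and \emph{pump} a purely-$a$ sub-cycle of the eventually periodic part of the run; your key observation that the sub-cycle's support is contained in the $\infty$-pattern of $\rho(r)$ (because its transitions recur in every period, so every coordinate it touches already diverges) is exactly what makes $\rho(r^{(k)})=\rho(r)$ under the limit semantics. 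Your direction is arguably the more robust one for a ``for infinitely many $i$'' condition: a single local edit of the rejected word only creates finitely many balanced prefixes, so the shift in the paper's argument has to be applied along infinitely many blocks before the modified word actually lands in $L_{a=b}$, whereas your single pumping site permanently unbalances every subsequent prefix and so immediately exits the language, while leaving the run's tail, its $\mathsf{Inf}$-set, and its limit Parikh image untouched. One cosmetic remark: the phrase ``continues identically to $r$ from position $P_0+j+1$'' is consistent with your insertion at $P_0+i$ only because $r[P_0+i+1:P_0+j]$ is itself one traversal of the sub-cycle; writing the new run explicitly as $r[1:P_0+j]\,(\text{sub-cycle})^k\,r[P_0+j+1{:}\infty]$ would make it immediately visible that its transition multiset is that of $r$ plus $k$ copies of the sub-cycle, which is all that the two acceptance checks need.
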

\begin{proof}
 The proof is almost identical to the proof of \Cref{lem:reach-no-limit}, but this time we consider the $\omega$-language $L_{a=b} = \{\alpha \mid |\alpha[1:i]|_a = |\alpha[1:i]|_b \text{ for $\infty$ many }i \}$. Then we can re-use the same argument as the constructed infinite word has indeed infinitely many balanced $a$-$b$ prefixes. 
\end{proof}

 \begin{lemma}
 \label{lem:bukkiweak}
  Deterministic Büchi~PA $\not\subseteq$ deterministic weak reset~PA.
\end{lemma}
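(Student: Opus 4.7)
Take as the separating language $L := L_{a=b} \cup L_{a=c}$ over $\{a,b,c\}$, where $L_{a=b} = \{\alpha \mid |\alpha[1:i]|_a = |\alpha[1:i]|_b \text{ for infinitely many } i\}$ and $L_{a=c}$ is defined analogously with $c$ in place of $b$. For the positive direction I invoke the preceding characterization of deterministic Büchi~PA languages: it suffices to exhibit a deterministic (finite-word)~PA $P$ with $L(P) = \{w : |w|_a = |w|_b \text{ or } |w|_a = |w|_c\}$, since a pigeonhole on the two disjuncts then gives $\vec{P} = L_{a=b} \cup L_{a=c}$. Take $P$ to be the one-state deterministic PA with three counters recording $|a|,|b|,|c|$ and semi-linear set $\{(x,y,z) \in \Nbb^3 : x=y\} \cup \{(x,y,z) : x=z\}$.

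For the negative direction, I argue by contradiction. Suppose a deterministic weak reset~PA $\Amc$ with $n$ states recognizes $L$, and aim to construct an infinite word $\gamma \notin L$ that $\Amc$ accepts. The target words have the form $\gamma = (ab)^{k_1}(ac)^{l_1}(ab)^{k_2}(ac)^{l_2}\cdots$ with every $k_i,l_i \geq 1$. A direct count shows that after the first $(ac)^{l_1}$-block the prefix difference $|a|-|b|$ is at least $l_1 \geq 1$ and only grows thereafter, so balanced $a$-$b$ prefixes exist only inside the initial $(ab)^{k_1}$-block; symmetrically, no balanced $a$-$c$ prefix occurs past position $0$. Hence $\gamma \notin L$.

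To show that some such $\gamma$ is accepted, I combine the reset structures from the accepting runs on $(ab)^\omega$ and $(ac)^\omega$: both are eventually state-periodic and contain, in each period, at least one reset with Parikh image in $C$. A pigeonhole on the states visited after the prefixes $(ab)^i$ yields a state $p_{ab}$ and a period $K \leq n$ such that $(ab)^K$ is a state-loop at $p_{ab}$ on the periodic tail of the $(ab)^\omega$-run which carries a reset with Parikh image in $C$; analogous $p_{ac}$ and $L \leq n$ exist for $(ac)^\omega$. A Ramsey-style pigeonhole on the states reached after interleaved prefixes of the form $(ab)^{Kj_1}(ac)^{Lj_1}\cdots(ab)^{Kj_m}(ac)^{Lj_m}$ then identifies indices $m_1 < m_2$ at which $\Amc$ revisits a common state $p^\star$; iterating the induced cycle produces the desired $\gamma$, whose run inherits infinitely many valid reset positions (one per pumped $(ab)^K$- or $(ac)^L$-segment) from the two pure accepting runs, contradicting $\gamma \notin L$.

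The main obstacle is precisely this synchronization step: the state after a mixed prefix need not lie on either of the periodic orbits traced from $q_0$, so the reset positions of $(ab)^\omega$ and $(ac)^\omega$ cannot be transplanted naively into the run on $\gamma$. The Ramsey step must pin down a single state $p^\star$ such that $(ab)^K$ from $p^\star$ is a loop whose transitions form a cyclic shift of the periodic tail of the $(ab)^\omega$-run (thereby inheriting its reset with the same Parikh image), and likewise $(ac)^L$ from $p^\star$; this may require enlarging $K$ and $L$ to common multiples of both periods so that such a simultaneous self-loop at $p^\star$ really exists, after which the rest of the argument is a routine splicing of the inherited reset blocks.
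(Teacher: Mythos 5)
Your choice of separating language ($L_{a=b}\cup L_{a=c}$) and your argument for the positive direction match the paper. The problem is the negative direction: it has a genuine gap, and you have in fact located it yourself in your final paragraph without resolving it. Your plan is to build a word $\gamma=(ab)^{k_1}(ac)^{l_1}(ab)^{k_2}\cdots\notin L$ and force $\Amc$ to accept it by ``inheriting'' reset positions from the accepting runs on the pure words $(ab)^\omega$ and $(ac)^\omega$. But a deterministic automaton started in an arbitrary state $p^\star$ reached after a mixed prefix traces a path on $(ab)^K$ that need have nothing to do with the periodic tail of the run from $q_0$ on $(ab)^\omega$: it may visit no accepting state, or visit accepting states only with Parikh images outside $C$, and no enlargement of $K$ and $L$ to common multiples creates the ``simultaneous self-loop whose transitions are a cyclic shift of both periodic tails'' that you need --- nothing forces such a state to exist. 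A Ramsey/pigeonhole step only gives you a state repetition, i.e.\ an infinite run; it does not give you infinitely many resets with Parikh image in $C$, which is what acceptance requires. Since $\gamma\notin L$, a correct automaton is perfectly entitled to reject it, and your argument never produces the needed contradiction.

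The missing idea is an \emph{indistinguishability} argument, and the weak reset semantics hands you one: at a reset position the counters are zeroed, so two resets in the same state yield literally identical configurations. The paper's proof runs $\Amc$ on $(ab)^\omega$, picks (by the infinite pigeonhole principle) two reset positions $i+1<j$ of the accepting run that land in the same state $q$, and then swaps in a single suffix: $\alpha[1:i]\,c^{|\alpha[1:i]|_a}(ac)^\omega$ lies in $L_{a=c}$ and is accepted, so the continuation from configuration $(q,\0)$ on $c^{|\alpha[1:i]|_a}(ac)^\omega$ is accepting; by determinism the same continuation is available after $\alpha[1:j]$, so $\Amc$ also accepts $\alpha[1:j]\,c^{|\alpha[1:i]|_a}(ac)^\omega$, which has strictly more $a$'s than $c$'s at every nonempty prefix and only finitely many balanced $a$-$b$ prefixes, hence is not in $L$. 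Note that this argument never needs to control the run on a mixed word beyond a single prefix of the pure run; that is exactly what makes it go through where your block-interleaving construction stalls.
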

\begin{proof}
Let $L_{a=b}$ be as in the last proof and similarly define $L_{a=c} = \{\alpha \mid |\alpha[1:i]|_a = |\alpha[1:i]|_c \text{ for $\infty$ many }i \}$.
We show that the deterministic Büchi~PA recognizable $\omega$-language $L_{a=b} \cup L_{a=c}$ is not deterministic weak reset~PA recognizable.
    Assume there is a deterministic weak reset~PA $\Amc$ recognizing $L_{a=b} \cup L_{a=c}$. Consider the unique accepting run $r$ of $\Amc$ on $\alpha = (ab)^\omega$ and let $i,j$ be two positions with $i+1 < j$ such that~$r$ resets after reading $\alpha[1:i]$ and $\alpha[1:j]$ in the same state (such a pair of positions does always exists by the infinite pigeonhole principle).
    Now consider the infinite word $\alpha[1:i] c^{|\alpha[1:i]|_a} (ac)^\omega$, which is also accepted by $\Amc$. However, this implies that $\Amc$ also accepts $\alpha[1:j] c^{|\alpha[1:i]|_a} (ac)^\omega$, as $\Amc$ is in the same (accepting) state after reading $\alpha[1:i]$ as well as $\alpha[1:j]$, but this infinite word is not contained in $L_{a=b} \cup L_{a=c}$, as $\alpha[1:j]$ contains at least one more $a$ than $\alpha[1:i]$, a contradiction.    
\end{proof}

We note however that the class of $\omega$-languages recognized by deterministic Büchi~PA with a \emph{linear} set form a subclass of the class of $\omega$-languages recognized by deterministic weak reset~PA with a linear set, as clarified in the following lemma.

\pagebreak
\begin{lemma}
    Let $\Amc$ be a deterministic Büchi~PA with a \emph{linear set}~$C(\bbf, P)$. Then there is an equivalent deterministic weak reset~PA.
\end{lemma}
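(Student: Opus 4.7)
The plan is to imitate the construction from \Cref{lem:reachRegToWeak}: given $\Amc = (Q, \Sigma, q_0, \Delta, F, C(\bbf, P))$ with period set $P = \{\pbf_1, \dots, \pbf_\ell\}$, add a fresh initial state $q_0'$ that inherits the outgoing transitions of $q_0$, and append one extra counter (making the total dimension $d+1$) that takes value $1$ exactly on transitions leaving $q_0'$ and $0$ on every other transition. This modification preserves determinism, and $q_0'$ is never revisited. For the resulting $\Amc'$ I take the semi-linear set
\[ C' \;=\; C(\bbf,\,P) \cdot \{1\} \;\cup\; C(\0,\,P) \cdot \{0\}. \]
The first segment of any run of $\Amc'$ is pinned to the first linear piece by the extra counter, while every subsequent segment must contribute a Parikh image in $C(\0, P)$, i.e., a nonnegative integer combination of the period vectors.

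For the forward inclusion $B_\omega(\Amc) \subseteq W\!R_\omega(\Amc')$, consider the unique run $r$ of $\Amc$ on an accepted $\alpha$, let $i_1 < i_2 < \dots$ enumerate its accepting hits, and write each $\rho(r_1 \dots r_{i_k}) = \bbf + \sum_j z^{(k)}_j \pbf_j$ for some $z^{(k)} \in \Nbb^\ell$. By Dickson's lemma the infinite sequence $(z^{(k)})_k$ in $\Nbb^\ell$ contains an infinite componentwise non-decreasing subsequence along indices $k_1 < k_2 < \dots$. I use the corresponding positions (shifted by one for the added initial transition) as reset positions in $\Amc'$: the first segment has Parikh image $\rho(r_1 \dots r_{i_{k_1}}) \in C(\bbf, P)$ together with extra counter $1$, each subsequent segment contributes $\sum_j (z^{(k_{m+1})}_j - z^{(k_m)}_j) \pbf_j \in C(\0, P)$ together with extra counter $0$, and every chosen position is accepting in $\Amc$.

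For the reverse inclusion, any witnessing reset sequence $0 = k_0 < k_1 < k_2 < \dots$ for $\Amc'$ has its first segment forced into $C(\bbf, P)$ (via the extra counter value~$1$) and every later segment forced into $C(\0, P)$ (via the extra counter value~$0$); telescoping gives $\rho(r'_1 \dots r'_{k_m}) \in C(\bbf, P) + C(\0, P) + \dots + C(\0, P) = C(\bbf, P)$ with $p_{k_m} \in F$, which are infinitely many Büchi hits in the corresponding run of $\Amc$. The main technical obstacle is the forward direction: naively declaring every accepting hit a reset position fails, since consecutive multiplier tuples $z^{(k)}, z^{(k+1)}$ need not be componentwise comparable, and segment differences would then only be guaranteed to lie in $\Zbb P \cap \Nbb^d$, which can strictly contain $C(\0, P)$. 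Extracting a componentwise monotone subsequence via Dickson's lemma is exactly what forces the segment contributions to be nonnegative integer combinations of the $\pbf_j$; this step is also precisely where single-linearity of $C$ is essential, as a general semi-linear $C$ would first need a pigeonhole argument to restrict to accepting hits witnessed by one fixed linear component.
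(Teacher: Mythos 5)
Your proof is correct and follows essentially the same route as the paper's: Dickson's lemma extracts an infinite monotone subsequence of accepting hits whose consecutive differences lie in $C(\0,P)$, and telescoping gives the converse inclusion. The only difference is cosmetic --- the paper first reduces to $\bbf=\0$ by encoding $\bbf$ into the state space, whereas you tag the first segment via an extra counter and the set $C(\bbf,P)\cdot\{1\}\cup C(\0,P)\cdot\{0\}$; your explicit application of Dickson's lemma to the multiplier tuples $z^{(k)}$ (rather than to the Parikh images themselves) is the correct reading of the paper's terser statement.
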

\begin{proof}
First we observe that if $\bbf = \0$, then we have $B_\omega(\Amc) = WR_\omega(\Amc)$. To see this, let $\alpha \in B_\omega(\Amc)$ with (unique) accepting run~$r$. By Dickson's Lemma~\cite{dickson}, the run $r$ contains an infinite monotone sequence $s_1 < s_2 < \dots$ of accepting hits, that is, for all $i \geq 0$ we have $\rho(r[1:s_i]) \in C(\bbf , P)$ and for all $j > i$ we have $\rho(r[s_i+1: s_j]) \in C(\bbf, P)$. Hence, the run $r$ also satisfies the weak reset condition.
For the other direction let $\alpha \in WR_\omega(\Amc)$ with (unique) accepting run $r$ and reset positions $0 = k_0 < k_1 < k_2 \dots$. As we assume $\bbf = \0$, it is immediate that $\rho(r[1:k_i]) \in C(\bbf, P)$ for all $i \geq 1$. Hence, the run~$r$ also satisfies the Büchi condition.

Finally we argue that we can always assume that $\bbf = \0$. Indeed, we can always encode $\bbf$ into the state space of $\Amc$.
\end{proof}
\section{Decision Problems and Model Checking}
\label{sec:decision}

In this section, we study the following classical decision problems for~PA on infinite words. For an overview of the results in this section we refer to \Cref{tab:decision} and \Cref{tab:mc}.

\begin{itemize}
    \item Emptiness: given a PA $\Amc$, is the $\omega$-language of $\Amc$ empty?
    \item Membership: given a PA $\Amc$ and finite words $u, v$, does $\Amc$ accept $uv^\omega$?
    \item Universality: given a PA $\Amc$, does $\Amc$ accept every infinite word?
\end{itemize}

Furthermore, we study the classical model checking problem, where we are given a system~$\Kmc$ and a specification, \eg, represented as an automaton $\Amc$, and the question is whether every run of $\Kmc$ satisfies the specification, 
\ie, we ask $L(\Kmc) \subseteq L(\Amc)$, which is true if and only if $L(\Kmc) \cap \overline{L(\Amc)} = \varnothing$.
However, as complementing is often expensive or not even possible, another approach is to specify the set of all bad runs and ask whether no run of $\Kmc$ is bad, which boils down to the question is $L(\Kmc) \cap L(\Amc) = \varnothing$?
We call the first approach \emph{universal model checking} and the latter approach \emph{existential model checking}.
In our setting we assume the specification $\Amc$ to be a~PA operating on infinite words, while the system $\Kmc$ may be given as a Kripke-structure (which can be seen as a safety automaton~\cite{ClarkeHandbook}), in which case the goal is to solve \emph{safety model checking}, or also as a~PA operating on infinite words, in which case the goal is to solve \emph{PA model checking}.
Hence, we consider four problems in total, which boil down to the following decision problems.

\begin{itemize}
    \item Inclusion: given a safety automaton or a PA $\Amc_1$, and a PA $\Amc_2$, is the $\omega$-language of $\Amc_1$ a subset of the $\omega$-language of $\Amc_2$?
    \item Intersection emptiness: given a safety automaton or a PA $\Amc_1$, and a PA $\Amc_2$, is the $\omega$-language of $\Amc_1$ disjoint from the $\omega$-language of $\Amc_2$?
\end{itemize}

The techniques employed for showing \NP-completeness for non-emptiness and membership are identical and very similar to the finite word case.
However, showing undecidability and completeness results for universality and the model checking problems require more sophisticated methods. Hence, we devote most of this section to the latter problems.

\begin{lemma}
\label{lem:emptiness}
 Emptiness for deterministic limit~PA, deterministic reach-reg.~PA, deterministic weak reset~PA and deterministic strong reset~PA is $\coNP$-complete.   
\end{lemma}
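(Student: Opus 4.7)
The plan separates the two directions. For the $\coNP$ upper bound, every deterministic PA is in particular a non-deterministic PA of the same flavor, so $\coNP$-membership of emptiness is inherited directly from the non-emptiness results for non-deterministic limit, reach-reg., and weak/strong reset PA established in~\cite{grobler2023remarks,infiniteZimmermann}.

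For the $\coNP$-hardness lower bound I would give a single polynomial-time reduction from the complement of \textsc{SubsetSum} that produces a deterministic PA of each of the four flavors. Given an instance $(a_1,\dots,a_n,t)$, build a chain of states $q_0,\dots,q_n$ over alphabet $\{0,1\}$: from $q_{i-1}$, reading $0$ goes to $q_i$ leaving counters unchanged, while reading $1$ goes to $q_i$ incrementing a dedicated subset-sum counter $x$ by $a_i$; at $q_n$, both letters self-loop. Set $F=\{q_n\}$. The automaton is deterministic by construction and polynomial in the instance size.

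For deterministic limit PA and deterministic reach-reg.\ PA, choose the linear set $C=\{(t)\}$. The counter $x$ is frozen after step $n$, so its limit value (respectively its value at the first visit to $q_n$) equals the subset sum $s$ encoded by the length-$n$ prefix; combined with the self-loop at the accepting $q_n$, which provides infinitely many $F$-visits, the automaton accepts some infinite word iff $s=t$ is achievable. For deterministic weak and strong reset PA, add a second counter $y$ that is incremented by $1$ on each self-loop transition at $q_n$ and set $C=\{(t,0),(0,1)\}$. Then the first reset at step $n$ forces $(s,0)\in C$, i.e.\ $s=t$, while every subsequent one-step reset at $q_n$ contributes exactly $(0,1)\in C$ for free; conversely, any run with $s\neq t$ fails at any candidate first-reset position $k_1\geq n$ because $(s,k_1-n)\notin C$.

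The only real subtlety is aligning the reset semantics with the reduction: for strong reset one checks that the enforced reset positions $k_i=n+i-1$ enumerate all $F$-visits and that each inter-reset increment lies in $C$; for weak reset one must rule out alternative reset schedules, which follows from the fact that the second component of any admissible vector in $C$ is either $0$ (forcing a reset exactly at step $n$) or $1$ (forcing consecutive resets thereafter). Once these routine checks are in place the reduction is uniform across all four models and yields $\coNP$-hardness, completing the proof.
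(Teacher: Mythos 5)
Your overall strategy matches the paper's: the upper bound is inherited from the \NP{} non-emptiness tests for the non-deterministic models (the paper routes all four through non-deterministic reset~PA, which generalize them), and the lower bound is a reduction from \textsc{SubsetSum} in the style of the finite-word case. The paper only sketches the hardness part, and your explicit chain gadget is correct for deterministic limit~PA, reach-reg.~PA and strong reset~PA: in the strong reset case the reset positions are \emph{forced} to be $n, n+1, n+2, \dots$, so the first block necessarily carries the vector $(s,0)$ and membership in $C=\{(t,0),(0,1)\}$ indeed pins down $s=t$.

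However, your weak reset gadget is unsound. Under the weak reset semantics the run may \emph{choose} its reset positions, and nothing forces the first reset to occur at step $n$ or to use the vector $(t,0)$. Concretely, take the input word $0^n\sigma^\omega$, whose prefix encodes the empty subset, so $s=0$. Choosing reset positions $k_1=n+1, k_2=n+2,\dots$ makes every block (including the first) contribute exactly $(0,1)\in C$, so this word is accepted regardless of whether any subset sums to $t$. Hence for every \textsc{SubsetSum} instance with $t\neq 0$ your deterministic weak reset~PA is non-empty, and the reduction does not establish $\coNP$-hardness for that model. Your remark that the second component ``forces a reset exactly at step $n$'' only applies when the first block is matched against $(t,0)$; it does not rule out matching the first block against $(0,1)$. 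The fix is routine: make the blocks distinguishable, e.g.\ let the second counter count \emph{all} transitions (not just the self-loops at $q_n$) and take $C=\{(t,n)\}\cup\{(0,1)\}$ (with $n\geq 2$), so that the first block, whose second component is $k_1\geq n\geq 2$, can only be matched against $(t,n)$, forcing $k_1=n$ and $s=t$, while all later blocks must be single self-loops. With that repair the reduction is uniform across all four models as you intended.
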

\begin{proof}[Proof (sketch)]
Containment in $\coNP$ for all models is witnessed by the fact that testing non-emptiness for non-deterministic reset~PA is in \NP, and that they generalize all of these models (this follows immediately from~\cite{grobler2023remarks}).

Hardness is very similar to the finite word case~\cite{emptynp}, as we can reduce subset sum to non-emptiness for all of these models. 
\end{proof}

\begin{lemma}
 Membership for deterministic limit~PA, deterministic reach-reg.~PA, deterministic weak reset~PA and deterministic strong reset~PA is $\NP$-complete.   
\end{lemma}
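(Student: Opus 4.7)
Since all four models are deterministic, the run of $\Amc$ on $uv^\omega$ is unique. In polynomial time I would compute the state sequence $p_0, p_1, \dots, p_{|Q|}$, where $p_k$ is the state reached after reading $uv^k$; by the pigeonhole principle there exist $0 \leq k_1 < k_2 \leq |Q|$ with $p_{k_1} = p_{k_2}$, so the run has a lasso shape: it executes the transitions of $uv^{k_1}$ as a prefix and then iterates the transitions of $v^{k_2-k_1}$ indefinitely. Let $\ubf$ and $\vbf$ be the Parikh images of the prefix and of one period. Verifying the acceptance condition then reduces to a polynomially bounded number of semi-linear set membership or intersection-non-emptiness tests, each in $\NP$. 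For limit PA, after checking in polynomial time that the period contains an accepting state, one tests membership in $C$ of the limit vector $\mathsf{lim} \in \Nbbinfty^d$ defined by $\mathsf{lim}_j = \infty$ whenever $\vbf_j > 0$ and $\mathsf{lim}_j = \ubf_j$ otherwise. For reach-reg PA, one analogously checks that the period contains an accepting state and that the (semi-linear) set of Parikh images at accepting positions along the run meets $C$. For strong and weak reset PA, one verifies the finitely many Parikh conditions on the reset intervals contributed by the prefix and by a single period iteration; by periodicity, this determines all later intervals.

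\textbf{$\NP$-hardness.} As in the finite-word case and analogously to \Cref{lem:emptiness}, we reduce from subset sum, which can be phrased as checking membership $\vbf \in C$ of a fixed nonzero vector $\vbf$ in a binary-encoded semi-linear set $C$. Given such an instance, I would construct a deterministic~PA $\Amc'$ over $\{a\}$ with two states $q_0, q_1$, a transition $(q_0, a, \vbf, q_1)$, a self-loop $(q_1, a, \0, q_1)$, and $q_1$ accepting, and take $u = v = a$. The Parikh image along the unique run equals $\vbf$ from $q_1$ onwards, so for limit and reach-reg PA we have that $\Amc'$ accepts $uv^\omega$ iff $\vbf \in C$. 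For strong and weak reset PA, replacing $C$ by $C \cup \{\0\}$ ensures that the trivially zero-Parikh reset intervals after the first visit to $q_1$ do not obstruct acceptance, again yielding acceptance of $uv^\omega$ iff $\vbf \in C$.

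The main technical subtlety is in the reset models, where the zero-increment self-loop must not spoil the reset semantics; this is resolved by the minor adjustment of adding $\0$ to $C$. All remaining steps are routine consequences of the lasso decomposition and of the fact that membership in a binary-encoded semi-linear set (including one over $\Nbbinfty^d$, cf.~\Cref{lem:complement-semi-linear-with-infty}) is in $\NP$.
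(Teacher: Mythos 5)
Your proof is correct, but it takes a more self-contained route than the paper on the upper bound. The paper simply cites the known fact that membership for the \emph{non-deterministic} counterparts of these models is in $\NP$ and inherits the bound; you instead exploit determinism directly, computing the unique lasso-shaped run on $uv^\omega$ via the pigeonhole argument on the states reached after $uv^0,\dots,uv^{|Q|}$ and reducing acceptance to polynomially many semi-linear membership/intersection tests. This buys a concrete deterministic algorithm at the cost of one subtlety you gloss over: for \emph{weak} reset~PA the reset positions are existentially quantified rather than forced by the accepting states, so periodicity of the run does \emph{not} by itself ``determine all later intervals''---you would additionally need to argue (or guess within $\NP$) that an accepting choice of reset positions can be taken eventually periodic, which is exactly the non-trivial content of the cited $\NP$-membership result for non-deterministic reset~PA. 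For the lower bound both proofs reduce from membership in a binary-encoded semi-linear set (equivalently, a variant of subset sum); you additionally spell out the two-state gadget over $\{a\}$ and correctly handle the reset models by adding $\0$ to $C$ and insisting that the target vector be nonzero, details the paper leaves implicit. Neither issue is fatal, but the weak-reset step should be patched rather than asserted.
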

\begin{proof}[Proof (sketch)]
As membership for the non-deterministic counterparts of these models is in~$\NP$~\cite{grobler2023remarks}, the upper bound follows immediately.

Hardness follows from a simple reduction from the membership problem for semi-linear sets, that is, given a semi-linear set $C \subseteq \Nbb^d$ and a vector $\vbf \in \Nbb^d$, deciding membership of~$\vbf$ in $C$. This problem is known to be $\NP$-complete, which follows immediately from the $\NP$-algorithms for integer programming~\cite{semilinUpper1, semilinUpper2} and the \NP-hardness of (a variant of) subset sum~\cite{intract, karp}; see also~\cite{haase} for a short discussion.
\end{proof}

We will now turn our attention to the universality and inclusions problems, the latter being the core of solving universal model checking.
Note that we can always reduce universality to inclusion, as an automaton $\Amc$ is universal if and only if $\Sigma^\omega$ is a subset of the $\omega$-language of $\Amc$. Hence, we show all undecidability results and lower bounds for universality and all decidability results and upper bounds for inclusion. We begin with the undecidability results.

\begin{lemma}
 Universality for det.\ reach-reg.\ PA and det.\ weak reset PA is undecidable.   
\end{lemma}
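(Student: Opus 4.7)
The plan is to derive both undecidability claims as corollaries of the fact that universality for deterministic reachability PA is undecidable, as established in \cite{infiniteZimmermann}, together with the language-preserving translations already constructed earlier in this paper.

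First I would handle deterministic reach-reg PA via Lemma~\ref{lem:reachToReachReg}, whose proof exhibits an effective construction turning any deterministic reachability PA $\Amc$ into a deterministic reach-reg PA $\Amc'$ with $RR_\omega(\Amc') = R_\omega(\Amc)$. Since $\Amc$ is universal if and only if $\Amc'$ is, this is a reduction from universality for deterministic reachability PA to universality for deterministic reach-reg PA, yielding undecidability of the latter.

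Next I would push the result one step further using Lemma~\ref{lem:reachRegToWeak}, whose proof provides an effective language-preserving translation from any deterministic reach-reg PA into a deterministic weak reset PA. Composing the two translations then gives undecidability of universality for deterministic weak reset PA.

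The one point I would verify carefully, and where the only mild subtlety lies, is that both translations preserve the recognised $\omega$-language exactly and not merely one direction of inclusion, since universality is sensitive to even a single extra accepted word. A quick inspection of the two constructions shows that both produce genuine equivalents, so no serious obstacle arises; the lemma is in essence bundled undecidability obtained by composing the known reductions along the inclusion chain \mbox{det.\ reachability PA $\subsetneq$ det.\ reach-reg PA $\subsetneq$ det.\ weak reset PA}.
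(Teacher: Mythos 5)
Your proposal matches the paper's proof exactly: the paper also reduces from the undecidability of universality for deterministic reachability PA~\cite{infiniteZimmermann} and invokes the effective, language-preserving constructions of \Cref{lem:reachToReachReg} and \Cref{lem:reachRegToWeak} to transfer undecidability to deterministic reach-reg.\ PA and deterministic weak reset PA. No gaps; your extra care about exact language preservation is warranted and is indeed satisfied by both constructions.
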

\begin{proof}
As shown in~\cite{infiniteZimmermann}, universality is already undecidable for det.\ reachability PA. As we can effectively construct equivalent det.\ reach-reg.\ PA and det.\ weak reset PA from det.\ reach.\ PA by \Cref{lem:reachToReachReg} and \Cref{lem:reachRegToWeak}, the lemma follows.
\end{proof}

Before we proceed, we show that universality for deterministic (finite word) PA is \mbox{$\Pi_2^\P$-complete}. To achieve that, we first introduce an auxiliary problem for PA and show that it is already $\Pi_2^\P$-hard even restricted to deterministic acyclic PA.
We define the \emph{irrelevance} problem for PA as follows: given a PA $\Amc = (Q, \Sigma, q_0, \Delta, F, C)$ of dimension $d$, is $\Amc$ equivalent to $\Amc^* = (Q, \Sigma, q_0, \Delta, F, \Nbb^d)$? In other words: is $C$ irrelevant for $\Amc$ in the sense that every run of $\Amc$ reaching an accepting state satisfies the Parikh condition?
In the following, we denote by $\rho(\Amc) = \{\rho(r) \mid r \text{ is an accepting run of } \Amc^*)$ and note that this set is always semi-linear as a consequence of Parikh's theorem~\cite{parikh1966context} and~\cite[Lemma 5]{klaedtkeruess}.

\begin{lemma}
\label{lem:irrelevance}
    The irrelevance problem for deterministic acyclic PA is $\Pi^\P_2$-hard.
\end{lemma}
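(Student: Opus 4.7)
The plan is to reduce from $\forall\exists$-3SAT, the standard $\Pi_2^\P$-complete problem: given a 3-CNF formula $\phi(\vec x,\vec y)$ with $|\vec x|=n$, $|\vec y|=m$, and $p$ clauses, decide whether $\forall\vec x\in\{0,1\}^n\,\exists\vec y\in\{0,1\}^m\colon\phi(\vec x,\vec y)$. The fit with irrelevance is natural: irrelevance asks, for every word $w$ accepted by $\Amc^*$, whether there exist coefficients $z_i\in\Nbb$ with $\rho(r_w)=\bbf+\sum z_i\pbf_i$, which is precisely a $\forall\exists$-pattern.

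Given $\phi$, I would construct in polynomial time a deterministic acyclic PA $\Amc$ over $\{0,1\}$ that reads one bit per $\vec x$-variable. Its $n{+}2$ states form a main path $q_0\to q_1\to\cdots\to q_n$ with $q_n$ accepting, plus a non-accepting sink for completeness, so that $L(\Amc^*)=\{0,1\}^n$. Reading bit $b$ from $q_{i-1}$ increments a dedicated counter $x_i^b$, recording $\vec w$ in the Parikh image of the unique accepting run on $w$. The final main-path transition additionally writes fixed constants---identical for both bit-values, hence compatible with determinism---into a small collection of auxiliary coordinates, anchoring the Parikh image outside the $\vec x$-part. The semi-linear set is then a single linear set $C(\bbf,P)$ with $\bbf=\0$ whose period coefficients $z_i$ serve as the existential witnesses for $\vec y$. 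I introduce periods $\pbf_{x_i}^b,\pbf_{y_i}^b$ per literal of $\phi$ and two slack periods $\pbf_{s_j},\pbf'_{s_j}$ per clause $c_j$. Equations at the $\vec x$-coordinates pin $z_{x_i}^b=[w_i=b]$; $m$ ``variable'' coordinates with target $1$ enforce $z_{y_i}^0+z_{y_i}^1=1$, so the $z_{y_i}^b$ encode a Boolean $\vec y$; and for each clause $c_j$, a ``clause'' coordinate with target $3$ yields $L_j(\vec w,\vec y)+z_{s_j}=3$ while a ``slack-bound'' coordinate with target $2$ yields $z_{s_j}+z'_{s_j}=2$, jointly implying $L_j(\vec w,\vec y)\ge 1$, where $L_j(\vec w,\vec y)$ counts the literals of $c_j$ satisfied by $(\vec w,\vec y)$. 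Hence $\rho(r_w)\in C$ iff some Boolean $\vec y$ satisfies $\phi(\vec w,\vec y)$, and $\Amc$ is equivalent to $\Amc^*$ iff the $\forall\exists$-3SAT instance is true.

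The principal obstacle is expressing the ``at least one literal is true'' inequality inside a linear set, because every period contribution in $\bbf+\sum z_i\pbf_i$ must be non-negative and the base has no power to subtract. The two-coordinate slack gadget sketched above overcomes this by pairing the capped slack $z_{s_j}\le 2$ (extracted from the $z_{s_j}+z'_{s_j}=2$ equation) with the clause equation $L_j(\vec w,\vec y)+z_{s_j}=3$, turning equality into the required lower bound $L_j(\vec w,\vec y)\ge 1$. Verifying that the number of states, dimension, number of periods, and magnitudes of all entries of $\Amc$ and $C(\bbf,P)$ are polynomial in $|\phi|$ is then routine, completing the polynomial-time reduction and establishing $\Pi_2^\P$-hardness.
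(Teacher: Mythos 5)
Your reduction is correct, but it takes a genuinely different route from the paper. The paper reduces from the inclusion problem for integer expressions (Stockmeyer--Meyer), first compiling $e_2$ into a linear set $C(\0,P_2)$ with $n\in L(e_2)$ iff $n\cdot\1\in C(\0,P_2)$, then compiling $e_1$ into a deterministic acyclic PA $\Amc_1$ with $\rho(\Amc_1)=L(e_1)$, and gluing the two so that irrelevance coincides with $L(e_1)\subseteq L(e_2)$. You instead reduce directly from $\forall\exists$-3SAT, letting the universally quantified word range over $\{0,1\}^n$ and the period coefficients $z_i$ of a single linear set play the role of the existential witness $\vec y$; your slack gadget ($L_j+z_{s_j}=3$ coupled with $z_{s_j}+z'_{s_j}=2$) correctly converts the clause inequality $L_j\ge 1$ into equalities expressible by non-negative period combinations, and the determinism/anchoring of the auxiliary coordinates on the last transition is sound. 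The two approaches buy slightly different things: the paper's route yields hardness already for one ``payload'' counter plus the padding coordinates and connects naturally to the $\coNEXP$ remark about Presburger-encoded sets (integer-expression hardness hinges on binary-encoded numbers), whereas your reduction is more self-contained and uses only constants bounded by $3$, so it shows that $\Pi_2^\P$-hardness persists even under unary encoding of the semi-linear set --- a mild strengthening of the paper's statement. One cosmetic point you share with the paper: the completeness sink (with its self-loops) is in tension with the literal reading of ``acyclic,'' but this convention is used in the paper as well and does not affect the argument.
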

\begin{proof}
 We reduce from the inclusion problem for integer expressions, which is known to be $\Pi^\P_2$-complete (assuming that all numbers are encoded in binary)~\cite{intexpr, polyhierarchy}. The set of integer expressions is defined as follows. Every $n \in \Nbb$ is an integer expression with $L(n) = \{n\}$. If $e_1$ and $e_2$ are integer expressions, then so are $e_1 + e_2$ and $e_1 \cup e_2$ where $L(e_1 + e_2) = \{u+v \mid u \in L(e_1), v \in L(e_2)\}$ and $L(e_1 \cup e_2) = L(e_1) \cup L(e_2)$. The inclusion problem is defined as follows: given integer expressions $e_1, e_2$, is $L(e_1) \subseteq L(e_2)$?

 We proceed as follows: first, we construct a linear set $C(\0, P_2)$ of dimension $d+1$ from~$e_2$ with the property that $n \in L(e_2)$ if and only if $n \cdot \1^d \in C(\0, P_2)$, where $d$ and $|P_2|$ depend linearly on the number of operators in $e_2$. Second, we construct a deterministic acyclic PA~$\Amc_1$ from~$e_1$ with the property $L(e_1) = \rho(\Amc_1)$. Finally, we construct a deterministic acyclic PA~$\Amc$ from~$\Amc_1$ such that $C(\0, P_2)$ is irrelevant for~$\Amc$ if and only if $L(e_1) \subseteq L(e_2)$.

 \begin{claim}
 \label{claim:intexToSet} 
  Given an integer expression $e$, one can compute in polynomial time a linear set~$C(\0, P)$ such that $n \in L(e)$ if and only if $n \cdot \1 \in C(\0, P)$.  
 \end{claim}
 \begin{claimproof}
  We construct $C(\0, P)$ inductively from $e$ and maintain the following invariant in each step: $n \in L(e)$ if and only if $n \cdot \1 \in C(\0, P)$ and for all $\pbf = (m, p_1 \dots, p_d) \in P$ we have $p_i = 1$ for at least one $1 \leq i \leq d$. We refer to \Cref{fig:intexToLinear} for an illustration.
  \begin{figure}
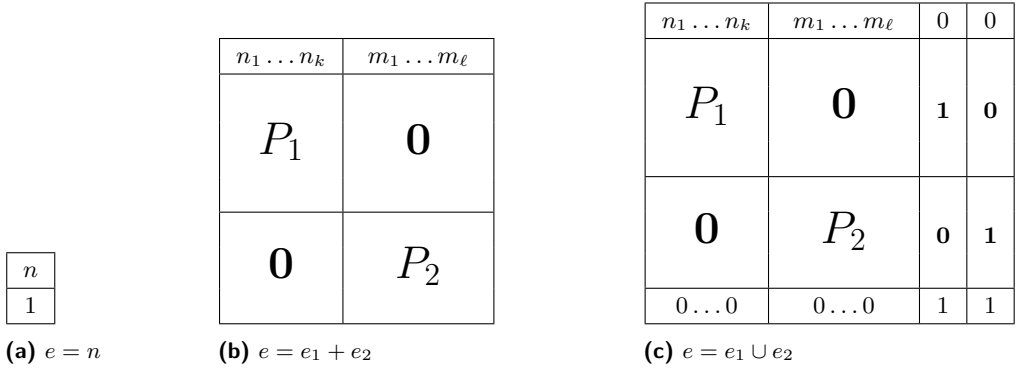

      \centering
      \begin{subfigure}[b]{0.1\textwidth}
      \begin{tabular}{|c|}
      \hline
      $n$ \\\hline $1$ \\
      \hline
      \end{tabular}
      \caption{$e = n$}
    \end{subfigure}
    \hfill
    \begin{subfigure}[b]{0.3\textwidth}
        \begin{tabular}{|c|c|}
        \hline
        $n_1 \dots n_k$ & \ $m_1 \dots m_\ell$\ {} \\\hline & \\[5pt]        
        \LARGE $P_1$ & \LARGE $\0$ \\[5pt] & \\\hline & \\[0pt]   
        \LARGE \0 & \LARGE $P_2$ \\[0pt] & \\\hline
        \end{tabular}
        \caption{$e = e_1 + e_2$}
    \end{subfigure}
    \hfill
    \begin{subfigure}[b]{0.4\textwidth}
        \begin{tabular}{|c|c|c|c|}
        \hline
        $n_1 \dots n_k$ & \ $m_1 \dots m_\ell$\ {} & 0 & 0 \\\hline & & & \\[5pt]      
        \LARGE $P_1$ & \LARGE $\0$ & \1 & \0 \\[5pt] & & &  \\\hline & & & \\[0pt]   
        \LARGE \0 & \LARGE $P_2$  & \0 & \1 \\[0pt] & & & \\\hline
        $0 \dots 0$ & $0 \dots 0$ & 1 & 1 \\\hline
        \end{tabular}
        \caption{$e = e_1 \cup e_2$}
    \end{subfigure}
      
      \caption{Illustration of the construction of a linear set from an integer expression.}
      \label{fig:intexToLinear}
  \end{figure}

 \textbf{Base Case.} If $e = n$ for $n \in \Nbb$, we choose $C(\0, P) \in \Nbb^2$ with $P = \{(n, 1)\}$. \par
 \textbf{Step.} We need to consider two cases.
 
 If $e = e_1 + e_2$, there are linear sets $C(\0, P_1) \subseteq \Nbb^{1+d_1}$ and $C(\0, P_2) + \Nbb^{1+d_2}$ for $e_1$ and~$e_2$ satisfying the invariant by assumption. We construct a linear set $C(\0, P)$ of dimension $1+d_1 + d_2$
 as follows. We pad the vectors in $P_1$ and $P_2$ with zeros to align the dimensions. Then~$P$ is the union of these vectors, \ie $P = \{n \cdot \pbf_1 \cdot \0^{d_2} \mid n \cdot \pbf_1 \in P_1\} \cup \{n \cdot \0^{d_1} \cdot \pbf_2 \mid n \cdot \pbf_2 \in P_2\}$.
 The invariant is maintained using this construction: for every integer $m + n \in L(e)$ we have $m \cdot \1^{d_1} \cdot \0^{d_2} \in C(\0, \{m \cdot \pbf_1 \cdot \0^{d_2} \mid m \cdot \pbf_1 \in P_1\})$ and $n \cdot \0^{d_1} \cdot \1^{d_2} \in C(\0, \{n \cdot \0^{d_2} \cdot \pbf_2 \mid n \cdot \pbf_2 \in P_2\})$ by assumption and construction. Hence, $(n+m) \cdot \1^{d_1 + d_2} \in C(\0, P)$. 
 Likewise, if $n \cdot \1^{d_1 + d_2} \in C(\0, P)$, we can partition the sets of used period vectors in the set of period vectors originating from $P_1$, and the set originating from $P_2$. As only the period vectors originating from $P_1$ can modify the first $d_1$ counters (ignoring the first), they yield a number $n_1 \in L(e_1)$. Similarly, the period vectors from $p_2$ yield a number $n_2 \in L(e_2)$, and hence $n = n_1 + n_2 \in L(e)$.

 If $e = e_1 \cup e_2$, there are linear sets $C(\0, P_1) \subseteq \Nbb^{1+d_1}$ and $C(\0, P_2) \subseteq \Nbb^{1+d_2}$ for $e_1$ and~$e_2$ satisfying the invariant by assumption. We construct a linear set $C(\0, P)$ of dimension $1+d_1 + d_2+1$ as
 as follows. Again, we pad the vectors in $P_1$ and $P_2$ with zeros to align the dimensions, and add an additional 0-counter. Furthermore, we consider the two vectors $\vbf_1 = 0 \cdot \1^{d_1} \cdot \0^{d_2} \cdot 1$ and $\vbf_2 = 0 \cdot \0^{d_1} \cdot \1^{d_2} \cdot 1$. 
 Then~$P$ is the union of these vectors, \ie $P = \{n \cdot \pbf_1 \cdot \0^{d_2} \cdot 0 \mid n \cdot \pbf_1 \in P_1\} \cup \{n \cdot \0^{d_1} \cdot \pbf_2 \cdot 0 \mid n \cdot \pbf_2 \in P_2\} \cup \{\vbf_1, \vbf_2\}$. The invariant is maintained using this construction: if $n \in L(e_1)$, then $n \cdot \1^{d_1 + d_2 + 1} \in C(\0, P)$ as witnessed by the following choice of period vectors. First, we can use~$\vbf_2$ to ensure that the last $d_2 + 1$ entries (including the new counter) are indeed set to one. Furthermore, the first $d_1 + 1$ entries of~$\vbf_1$ (including the first counter) are 0; hence, they do not modify the relevant counters for $n$. Hence, the containment follows from the invariant and construction. We argue analogously if $n \in L(e_2)$ using~$\vbf_1$.
 Now, if $n \cdot \1^{d_1 + d_2 + 1} \in C(\0, P)$ we observe that exactly one of the vectors~$\vbf_1$ and $\vbf_2$ must have been used, as these are the only ones that modify the last counter. If~$\vbf_1$ has been used, then no period vector originating in $P_1$ may have been used, as they all contain a one-entry by the invariant, and are hence blocked by the 1-entries in $\vbf_1$. As $\vbf_1$ does not modify the first counter, and all used period vectors originate from $P_2$, we conclude $n \in L(e_2)$. Analogously, if $\vbf_2$ has been used, we conclude $n \in L(e_1)$.

 Observe that the dimension $d$ and the size of $P$ both depend linearly on the size of $e$, hence $|C(\0, P)| \in \Omc(|e|^2)$.
 \end{claimproof}

 \begin{claim}
 \label{claim:intexToPA}
  Given an integer expression $e$, one can compute in polynomial time a deterministic acyclic PA with a single accepting state $\Amc$ such that $L(e) = \rho(\Amc)$.
 \end{claim}
 \begin{claimproof}
      \begin{figure}
         \centering
         \begin{subfigure}[b]{0.3\textwidth}
        \begin{tikzpicture}[->,>=stealth',shorten >=1pt,auto,node distance=2.5cm, semithick]
    	\tikzstyle{every state}=[minimum size=5mm]
    
     	\node[state, initial above, initial text={}] (q0) {$q_0$};	
    	\node[state, accepting] (q1) [right of=q0] {$q_1$};
    	
    	\path
    	(q0) edge  node[align=left] {$a, n$ \\ $b, n$} (q1)
    	;
    	\end{tikzpicture}
    \caption{$e = n$}
    \end{subfigure}
    \hfill
    \begin{subfigure}[b]{0.3\textwidth}
        \begin{tikzpicture}[->,>=stealth',shorten >=1pt,auto,node distance=1cm, semithick]
    	\tikzstyle{every state}=[minimum size=7mm]

     	\node[state, initial, initial text={}] (p0) {$p_0$};	
    	\node[state, accepting, dashed] (p1) [right = 6mm of p0] {$p_m$};
        \node[state, below = 18mm of p1] (q0) {$q_0$};	
    	\node[state, accepting] (q1) [left = 6mm of q0] {$q_n$};

        \node[right = -0.3mm of p0] {$\dots$};
        \node[left = -0.6mm of q0] {$\dots$};

        \node[fit = (p0) (p1), draw]{};
        \node[fit = (q0) (q1), draw]{};
    	
    	\path
        (p1) edge node[align=left] {$a, 0$ \\ $b, 0$} (q0)
    	;
    	\end{tikzpicture}
        \caption{$e = e_1 + e_2$}
    \end{subfigure}
    \hfill
    \begin{subfigure}[b]{0.3\textwidth}
        \begin{tikzpicture}[->,>=stealth',shorten >=1pt,auto,node distance=1cm, semithick]
    	\tikzstyle{every state}=[minimum size=7mm]

        \node[state, initial above, initial text={}] (q) {$q$};	
     	\node[state, above right = 8mm and 3mm of q] (p0) {$p_0$};	
    	\node[state, accepting, dashed] (p1) [right = 6mm of p0] {$p_m$};
        \node[state, below right = 8mm and 3mm of q] (q0) {$q_0$};	
    	\node[state, accepting, dashed] (q1) [right = 6mm of q0] {$q_n$};
        \node[state, accepting] (qf) [below right = 8mm and 3mm of p1] {$q_f$};

        \node[right = -0.3mm of p0] {$\dots$};
        \node[right = -0.3mm of q0] {$\dots$};

        \node[fit = (p0) (p1), draw]{};
        \node[fit = (q0) (q1), draw]{};
    	
    	\path
        (q)  edge node[align=left] {$a, 0$}                        (p0)
        (q)  edge node[align=left, below left] {$b, 0$}            (q0)
    	(p1) edge node[align=left] {$a, 0$ \\ $b, 0$}              (qf)
        (q1) edge node[align=left, below right] {$a, 0$ \\ $b, 0$} (qf)
    	;
    	\end{tikzpicture}
        \caption{$e = e_1 \cup e_2$}
    \end{subfigure}

         \caption{Illustration of the construction of a deterministic acyclic PA from an integer expression.}
         \label{fig:intexToPA}
     \end{figure}
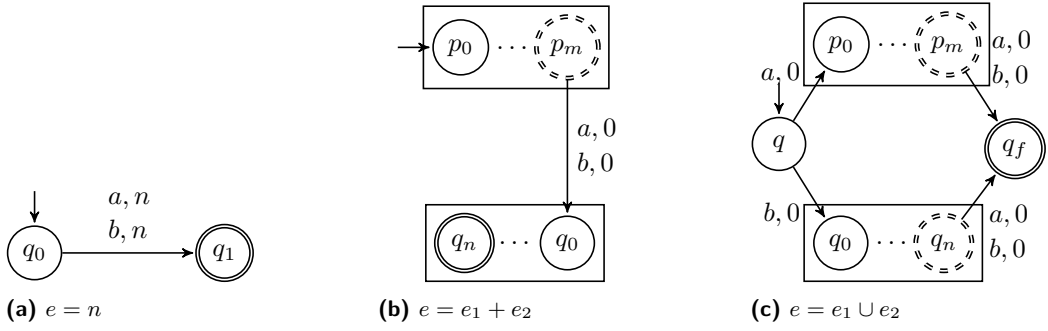
     We construct $\Amc$ over the alphabet $\{a,b\}$ of dimension $1$ with linear set $\Nbb$ inductively from $e$ and maintain the invariant in the claim in every step. We refer to \Cref{fig:intexToPA} for an illustration.

     \textbf{Base Case.} If $e = n$ for $n \in \Nbb$, the PA $\Amc$ consists of an initial state and an accepting state that are connected by an $a$-transition and a $b$-transition, both labeled with $n$.

     \textbf{Step.} We need to consider two cases.\par
     If $e = e_1 + e_2$, there are PA $\Amc_1$ and $\Amc_2$ for $e_1$ and $e_2$ satisfying the invariant by assumption. We construct a PA $\Amc$ as follows: we take the disjoint union of $\Amc_1$ and $\Amc_2$ and connect the accepting state of $\Amc_1$ with the initial state of $\Amc_2$ via an $a$-transition and a $b$-transitions, both labeled with $\0$. Finally, the accepting states of $\Amc_1$ is not accepting anymore in $\Amc$.
     
     If $e = e_1 \cup e_2$, there are PA $\Amc_1$ and $\Amc_2$ for $e_1$ and $e_2$ satisfying the invariant by assumption. We construct a PA $\Amc$ as follows: we take the disjoint union of $\Amc_1$ and $\Amc_2$ and add a fresh initial state, say $q$, and a fresh accepting state, say $q_f$. Then, we connect $q$ with the initial state of $\Amc_1$ with an $a$-transition labeled with $\0$, and we connect $q$ with the initial state of $\Amc_2$ with a $b$-transition labeled with $\0$. Similarly, we connect the accepting state of $\Amc_1$ as well as the accepting state of $\Amc_2$ with $q_f$ via $a$-transitions and $b$-transitions, all labeled with $\0$. Finally, the accepting states of $\Amc_1$ and $\Amc_2$ are not accepting anymore in $\Amc$.
 \end{claimproof}

We are now ready to prove \Cref{lem:irrelevance}. Let $\Amc_1$ be the PA for $e_1$ as constructed in the proof of \Cref{claim:intexToPA}. Similarly, let $C(\0, P_2)$ be the linear set of dimension $1+d$ for $e_2$ as constructed in the proof of \Cref{claim:intexToSet}.
Now we construct the deterministic acyclic PA $\Amc$ as follows. We start with $\Amc_1$ and pad all vectors with zeros, that is, we replace every (one-dimensional) vector $n$ in $\Amc_1$ by $n \cdot \0^d$.
Then, we add a fresh accepting state and connect it from the accepting state of $\Amc_1$ with an $a$-transition and $b$-transition, both labeled with $0 \cdot \1^d$.
Finally, the accepting state of $\Amc_1$ is not accepting anymore in $\Amc$, and we choose $C(\0, P_2)$ as the linear set of $\Amc$.
Observe that the properties of $\Amc_1$ and $C(\0, P_2)$ ensure that $C(\0, P_2)$ is irrelevant for $\Amc$ if and only if $L(e_1) \subseteq L(e_2)$. This concludes the proof. 
\end{proof}

\begin{remark*}
Recall that we assume a binary and explicit encoding of semi-linear sets. In terms of expressiveness we can equally assume that they are given as Presburger formulas. However, this drastically changes the complexity: if the semi-linear sets are given as Presburger formulas, the problem becomes $\coNEXP$-complete, as we can interreduce the problem with the $\forall^*\exists^*$-fragment of Presburger arithmetic, which is $\coNEXP$-complete~\cite{haaseNexp}.
\end{remark*}

Observe that we can easily reduce the previous problem to the universality problem for deterministic PA.
\begin{corollary}
\label{cor:uniHardnessFinite}
 Universality for deterministic PA is $\Pi_2^\P$-hard.
\end{corollary}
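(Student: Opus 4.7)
The plan is to give a polynomial-time reduction from the irrelevance problem for deterministic (acyclic) PA, which was just shown to be $\Pi_2^\P$-hard, to the universality problem for deterministic PA. The observation driving the reduction is that irrelevance is itself a universal statement---``every run reaching an accepting state satisfies the Parikh condition''---so it fits naturally into the shape of a universality instance.

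Given a deterministic PA $\Amc = (Q, \Sigma, q_0, \Delta, F, C)$ of dimension $d$, I will construct a deterministic PA $\Amc'$ over the extended alphabet $\Sigma \cup \{\$\}$, where $\$$ is a fresh endmarker. The states of $\Amc'$ are $Q \cup \{q_\text{safe}, q_\text{bad}\}$, and \emph{all} of them are declared accepting. On letters in $\Sigma$ from states of $Q$, $\Amc'$ copies the transitions of $\Amc$, padding the counter update with a fresh $0$-coordinate. On $\$$ from a state $q \in Q$, $\Amc'$ branches deterministically: to $q_\text{safe}$ if $q \notin F$ and to $q_\text{bad}$ if $q \in F$; the latter transition sets the fresh counter to $1$, the former leaves it at $0$. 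Both sinks self-loop on every letter of $\Sigma \cup \{\$\}$ without touching any counter. The semi-linear set is $C' = (\Nbb^d \times \{0\}) \cup (C \times \{1\})$, which is semi-linear of polynomial size since it is obtained by padding base vectors of $\Nbb^d$ and $C$ with a fixed last coordinate and the corresponding period vectors with a $0$.

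The final step is the correctness check. An input containing no $\$$ leaves the fresh counter at $0$ and is accepted via the disjunct $\Nbb^d \times \{0\}$. An input of the form $w\$y$ with $w \in \Sigma^*$ drives $\Amc'$ into a sink immediately after the first $\$$, and the $d$ original counters are frozen at the Parikh image $\vbf(w)$ of the $\Amc$-run on $w$. If that run ends outside $F$, the fresh counter is $0$ and acceptance is automatic; if it ends in $F$, the fresh counter is $1$ and acceptance reduces to $\vbf(w) \in C$. Therefore $\Amc'$ is non-universal iff some $w \in \Sigma^*$ drives $\Amc$ into $F$ with Parikh image outside $C$, i.e., iff $\Amc$ is not irrelevant, and $\Pi_2^\P$-hardness transfers from irrelevance to universality.

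I do not expect any real obstacle: determinism and completeness of $\Amc'$ are forced by the case split on membership in $F$, and the construction is clearly polynomial in $|\Amc|$. The only point deserving a brief sanity check is that $C'$ is genuinely a semi-linear set in $\Nbb^{d+1}$, but this is routine via the padding described above.
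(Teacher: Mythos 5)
Your reduction is correct, and it follows the same overall route as the paper: reduce from the irrelevance problem of \Cref{lem:irrelevance} by making every state accepting and adding a ``free pass'' disjunct to the semi-linear set for runs that need not be checked. The difference lies in how the information ``did the run end in $F$?'' reaches the semi-linear set. The paper stays over the original alphabet $\{a,b\}$ and exploits the specific structure of the acyclic PA built in \Cref{lem:irrelevance}: there, the transition into the unique accepting state adds $0\cdot\1^d$, so vectors collected at the accepting state are distinguishable from those collected elsewhere (whose last $d$ coordinates are $\0$), and the union $C(\0,P)\cup(\Nbb\cdot\{\0\})$ does the case split implicitly. You instead work with an arbitrary deterministic PA, extend the alphabet by an endmarker $\$$, and record membership of the pre-$\$$ state in $F$ via a fresh flag counter, with $C'=(\Nbb^d\times\{0\})\cup(C\times\{1\})$. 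Your version buys generality (it reduces irrelevance for \emph{any} deterministic PA, not just the particular instances from the hardness proof) at the cost of enlarging the alphabet; the paper's version keeps the alphabet binary but is tied to the shape of its own gadget. Both preserve determinism and completeness and are clearly polynomial, so either establishes the corollary.
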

\begin{proof}
Let $\Amc$ be the deterministic acyclic PA with linear set $C(\0, P)$ as constructed in the previous proof. 
We construct a deterministic $\Amc'$ such that $C(\0, P)$ is irrelevant for $\Amc$ if and only if $\Amc'$ is universal.
To achieve that, we start with $\Amc$ and make every state accepting. 
Furthermore, we add an $a$-loop and a $b$-loop to the accepting state of $\Amc$, both labeled with $\0$. Finally, the semi-linear set of $\Amc'$ is $C(\0, P) \cup (\Nbb \cdot \{\0\})$.
\end{proof}

Now we show that universality for deterministic PA is in $\Pi_2^\P$, yielding completeness for universality and irrelevance by the reduction presented in the proof of the previous corollary.

\begin{lemma}
\label{lem:uniFinite}
    Universality for deterministic (finite word) PA is $\Pi_2^\P$-complete.
\end{lemma}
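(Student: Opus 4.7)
Hardness is Corollary~\ref{cor:uniHardnessFinite}, so the remaining task is to place universality in $\Pi_2^\P$, equivalently to show non-universality is in $\Sigma_2^\P$. Fix $\Amc = (Q, \Sigma, q_0, \Delta, F, C)$ deterministic of dimension $d$. A word $w$ witnesses non-universality exactly when the unique run $r$ of $\Amc$ on $w$ either fails to reach $F$ or ends in $F$ with $\rho(r) \notin C$. The first case is non-universality of the underlying DFA and is decidable in polynomial time, so the interesting task is to exhibit an accepting run of the underlying NFA whose extended Parikh image lies outside $C$.

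The $\Sigma_2^\P$-algorithm I propose guesses such a run compactly and then verifies the Parikh side condition with a coNP oracle. Using the standard polynomial-size existential Presburger description of the Parikh image of an NFA (a ``flow'' or ``token'' formulation, with variables for transition multiplicities and linear constraints enforcing flow conservation and connectivity), there is a polynomial-size quantifier-free formula $\phi_{\Amc}(\vbf, \vec{y})$ such that $\vbf \in \rho(\Amc^*)$ iff $\exists \vec{y} \in \Nbb^m: \phi_{\Amc}(\vbf, \vec{y})$. Non-universality then amounts to satisfiability of
\[\exists \vbf, \vec{y}:\ \phi_{\Amc}(\vbf, \vec{y}) \,\wedge\, \vbf \notin C,\]
whose second conjunct is a polynomial-size $\Pi_1$-Presburger formula in $\vbf$, so the whole sentence is a $\Sigma_2$-Presburger sentence of polynomial size in $|\Amc|$ and $|C|$.

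The crux is a \emph{small witness} lemma: if this sentence is satisfiable, then there exist $\vbf, \vec{y}$ of polynomial bit-length witnessing it. This is the classical small-model property for bounded-alternation Presburger arithmetic, surveyed in~\cite{haase}. With it, the algorithm existentially guesses $\vbf, \vec{y}$ of polynomial bit-length, evaluates $\phi_{\Amc}(\vbf, \vec{y})$ in polynomial time, and invokes the coNP oracle to certify $\vbf \notin C$, which is the complement of the NP-complete semi-linear membership problem. I expect the small-model step to be the main technical obstacle; should a direct appeal feel too heavy, a fallback is to complement $C$ inside $\Nbb^d$ via Ginsburg--Spanier, guess a single linear component of $\overline{C}$ together with its base and periods in polynomial bits, and reduce the whole question to $\Sigma_1$-Presburger satisfiability, for which Papadimitriou's NP bound gives polynomial-bit witnesses directly. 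Either way, non-universality lies in $\Sigma_2^\P$, so universality lies in $\Pi_2^\P$, matching the hardness bound.
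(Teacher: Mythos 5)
Your argument has the same overall shape as the paper's proof: reduce non-universality to the question $\rho(\Amc) \not\subseteq C$, describe $\rho(\Amc)$ by a polynomial-size existential Presburger (flow) formula, guess a small witness vector together with a valuation of the flow variables, and certify $\vbf \notin C$ with a $\coNP$ check. The gap sits exactly where you flag it. There is no ``classical small-model property for bounded-alternation Presburger arithmetic'' that yields polynomial-bit witnesses for the outer existential block of a $\Sigma_2$ sentence: the $\exists^*\forall^*$-fragment is $\NEXP$-complete (dually, the $\forall^*\exists^*$-fragment is $\coNEXP$-complete, which is precisely why the remark following \Cref{lem:irrelevance} notes that irrelevance jumps to $\coNEXP$ when semi-linear sets are given as formulas), so in general the outer witnesses require exponentially many bits and your main route would only place non-universality in $\NEXP$. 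Your fallback has the same problem in disguise: to ``guess a single linear component of $\overline{C}$ in polynomial bits'' you need a quantitative bound on the number of periods and the norms of the linear components of the complement of a semi-linear set, and you supply no such bound; Ginsburg--Spanier only gives semi-linearity, not size control.

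What closes the gap in the paper's proof are two specific quantitative facts about explicitly represented semi-linear sets, applied to this particular instance rather than to arbitrary $\Sigma_2$ sentences: (i) every semi-linear set, in particular $\rho(\Amc)$, can be rewritten as a union of linear sets each having only $O(d\log(d\|\cdot\|))$ period vectors and small norms, so a candidate linear subset of $\rho(\Amc)$ can be guessed in polynomially many bits (without verifying the inclusion at that point); and (ii) Huynh's theorem that a non-empty set-theoretic difference of two explicitly given semi-linear sets contains a vector of bit size polynomial in their descriptions. Applying (ii) to the guessed linear set and $C$ produces the polynomial-size vector $\vbf$, after which your final verification steps (guessing a valuation of the quantified flow variables to confirm $\vbf \in \rho(\Amc)$, and a $\coNP$ test for $\vbf \notin C$) do go through. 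Without (i) and (ii), or an equivalent substitute, the $\Sigma_2^\P$ upper bound is not established; the rest of your proposal, including the hardness direction via \Cref{cor:uniHardnessFinite} and the polynomial-time handling of the underlying DFA, is fine.
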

\begin{proof}
We show that the problem is contained in $\Pi_2^\P$ by showing that its complement is contained in $\Sigma_2^\P$. We make use of the following observation. As we can assume that every state is accepting (by encoding accepting states into the semi-linear set), the question whether a det.\ PA $\Amc$ with semi-linear set $C$ is not universal boils down to the question $\rho(\Amc) \not\subseteq C$?
If both sets are given explicitly (again assuming binary encoding), Huynh showed that this question can be decided in $\Sigma_2^\P$~\cite{huynhUpperSimplified, huynhUpper}.
However, we cannot explicitly compute the set $\rho(\Amc)$ as its size can be exponentially large in the size of~$\Amc$~\cite{parikhComplexity}. To circumvent this problem, we guess a small linear subset of $\rho(\Amc)$ that witnesses non-inclusion.

In the first step, we compute an existential Presburger formula, say $\varphi(v_1, \dots, v_d) = \exists x_1 \dots x_m \psi$ for the set $\rho(\Amc)$. This can be done in linear time using the results in~\cite{schwentikHorn}; we also refer to~\cite[Proposition III.2]{emptynp} for a short discussion on the construction.

In the second step, we use the result in~\cite[Corollary II.2]{zvassnz} essentially stating that every semi-linear set can be written as a finite union of linear sets of small bit size. To be precise, the result states that every semi-linear set $C \subseteq \Nbb^d$ can be written as $\bigcup_i C(\bbf_i, P_i)$ with $|P_i| \leq 2d \log(4d \|C\|)$, where $\|C\|$ denotes the largest absolute value that appears in a base or period vector in any linear set in the union of $C$. Hence, we guess a linear set $C(\bbf_i, P_i) \subseteq \rho(\Amc)$ of small bit size. Note however, that we do not verify at this point whether $C(\bbf_i, P_i)$ is indeed a subset of $\rho(\Amc)$.

In the third step, we use Huynh's results~\cite{huynhUpperSimplified, huynhUpper} to solve $C(\bbf_i, P_i) \not\subseteq C$ in $\Sigma_2^\P$. As one main ingredient to obtain containment $\Sigma_2^\P$, Huynh showed that every non-empty (set-theoretic) difference of two semi-linear sets contains a vector of polynomial bit size. Hence let $\vbf \in C(\bbf_i, P_i) \setminus C$ be such a vector whose bit size is bounded polynomially in $C(\bbf_i, P_i)$ and~$C$, and hence in the input size.

Finally, we need to verify that $\vbf$ is indeed a member of $\rho(\Amc)$ (recall that we did not verify that $C(\bbf_i, P_i)$ is indeed a subset of $C_\Amc$). In order to do so, we guess a valuation of the quantified variables $x_1, \dots, x_n$ (again of polynomial bit size) of $\varphi$ and verify that $\varphi(\vbf)$ is indeed satisfied under this valuation.

Overall, we conclude that non-universality for det.\ PA is in $\Sigma_2^\P$, yielding the desired result.
\end{proof}

From \Cref{lem:irrelevance} we can also conclude $\Pi_2^\P$-hardness for universality for deterministic limit and deterministic strong reset PA using a simplified variant of the reduction in \Cref{cor:uniHardnessFinite}.

\begin{corollary}
\label{cor:universalityLimit}
Universality and inclusion for deterministic limit~PA and deterministic strong reset~PA is $\Pi_2^\P$-hard.
\end{corollary}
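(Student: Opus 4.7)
The plan is to reduce the irrelevance problem for deterministic acyclic PA, shown $\Pi_2^\P$-hard in Lemma~\ref{lem:irrelevance}, to universality for the two infinite-word models. From the deterministic acyclic PA $\Amc$ with linear set $C(\0, P_2)$ of dimension $1+d$ produced there, I will build two automata $\Amc'_L$ (a deterministic limit PA) and $\Amc'_{SR}$ (a deterministic strong reset PA) such that universality of $\Amc'$ is equivalent to irrelevance of $C(\0, P_2)$ for $\Amc$. Inclusion hardness will then follow immediately by taking the ``left'' automaton to be a trivial safety automaton recognizing $\Sigma^\omega$, so that $L \subseteq L(\Amc')$ reduces to universality of $\Amc'$.

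For deterministic limit PA I will mirror the finite-word construction of Corollary~\ref{cor:uniHardnessFinite}: declare every state of $\Amc$ accepting, attach $a$- and $b$-self-loops labeled $\0$ at the unique accepting state (call it $q_f$) of $\Amc$, and (since determinism requires completeness) add a sink state with $\0$-self-loops to absorb all missing transitions, also labeled $\0$. The semi-linear set will be $C(\0, P_2) \cup (\Nbb \cdot \{\0^d\})$. Because every added loop carries the zero vector, no counter ever diverges and the limit of the counter vector on any infinite run is finite. By the shape of $\Amc$ in Lemma~\ref{lem:irrelevance}, a run reaching $q_f$ has limit $(n, 1, \ldots, 1)$ where $n \in L(e_1)$, which lies in $C(\0, P_2)$ iff $n \in L(e_2)$, while any run that stabilizes at the sink has limit $(m, 0, \ldots, 0) \in \Nbb \cdot \{\0^d\}$. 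Hence $\Amc'_L$ is universal iff $L(e_1) \subseteq L(e_2)$, which is exactly irrelevance.

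For deterministic strong reset PA I will keep the same underlying transition structure but declare \emph{only} $q_f$ and the sink accepting, since declaring every state accepting would force each single-transition Parikh image to lie in the semi-linear set, which is clearly too restrictive. After the run reaches $q_f$ or the sink, every subsequent reset segment is a single $\0$-self-loop whose Parikh image $\0$ lies in $\Nbb \cdot \{\0^d\}$, so only the first reset segment is nontrivial; its Parikh image equals the counter vector at the first accepting visit, which is $(n, 1, \ldots, 1)$ in the $q_f$-case and $(m, 0, \ldots, 0)$ in the sink-case. With the same semi-linear set, universality iff irrelevance then follows by the same case analysis.

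The main technical point is the clean separation provided by the two summands of the semi-linear set: $C(\0, P_2)$ handles runs reaching $q_f$ and $\Nbb \cdot \{\0^d\}$ absorbs all off-track runs. This works because of the structural guarantee from Lemma~\ref{lem:irrelevance} that the last $d$ coordinates of the counter vector remain $0$ throughout the $\Amc_1$-part of $\Amc$ and are set to $1$ only by the final transition into $q_f$. With that in hand, the reduction is a direct lift of the finite-word argument to both infinite-word acceptance conditions, and no further technical obstacles are expected.
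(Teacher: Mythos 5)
Your proposal is correct and follows essentially the same route as the paper, which only sketches this corollary as ``a simplified variant of the reduction in Corollary~\ref{cor:uniHardnessFinite}'': you lift that finite-word reduction by adding $\0$-labeled self-loops at the unique accepting state, keeping the semi-linear set $C(\0,P_2)\cup(\Nbb\cdot\{\0^d\})$, and correctly adapting the accepting-state choice (all states accepting for the limit condition, only $q_f$ for the strong reset condition, which is exactly the needed ``simplification''). The only cosmetic point is that the sink state is unreachable, since every non-final state of the acyclic PA from Lemma~\ref{lem:irrelevance} already carries both an $a$- and a $b$-transition, so that case of your analysis is vacuous.
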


Now we focus on the decidability results and upper bounds. Let $\Amc_1$ and $\Amc_2$ be two (deterministic limit) PA. Note that $L_\omega(\Amc_1) \subseteq L_\omega(\Amc_2)$ holds if and only if \mbox{$L_\omega(\Amc_1) \cap \overline{L_\omega(\Amc_2)} =\varnothing$}. As det.\ limit PA are effectively closed under complement and intersection, and emptiness is $\coNP$-complete (and hence decidable) for them, we obtain the following result.
\begin{corollary}
Universality and inclusion for deterministic limit PA are decidable.    
\end{corollary}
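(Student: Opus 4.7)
The plan is to reduce both universality and inclusion to the emptiness problem, which by \Cref{lem:emptiness} is $\coNP$-complete, and therefore decidable, for deterministic limit~PA. The whole argument rides on the closure properties from \Cref{lem:closure-limit}, so the first step is to confirm that the constructions underlying that lemma are effective.

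For universality, given a deterministic limit~PA $\Amc$ over $\Sigma$, I would first apply \Cref{lem:closure-limit} to effectively construct a deterministic limit~PA $\Bmc$ with $L_\omega(\Bmc) = \Sigma^\omega \setminus L_\omega(\Amc)$. Then $\Amc$ is universal iff $L_\omega(\Bmc) = \varnothing$, which is decidable by \Cref{lem:emptiness}. Note that the complement construction itself rests on \Cref{lem:complement-semi-linear-with-infty}, whose proof provides an explicit procedure: transport $C \subseteq \Nbbinfty^d$ to $f(C) \subseteq \Nbb^d$ via the bijection $f$, complement using the standard effective complementation of semi-linear sets over $\Nbb^d$, and pull back through $f^{-1}$. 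So the move to $\Bmc$ is effective.

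For inclusion, I would use the identity $L_\omega(\Amc_1) \subseteq L_\omega(\Amc_2)$ iff $L_\omega(\Amc_1) \cap \bigl(\Sigma^\omega \setminus L_\omega(\Amc_2)\bigr) = \varnothing$. First, construct a deterministic limit~PA $\Bmc$ recognizing $\Sigma^\omega \setminus L_\omega(\Amc_2)$ as above. Next, apply the intersection construction from \Cref{lem:closure-limit} (a standard product construction, combined in the semi-linear set component by a direct sum) to obtain a deterministic limit~PA $\Cmc$ with $L_\omega(\Cmc) = L_\omega(\Amc_1) \cap L_\omega(\Bmc)$. Finally, decide $L_\omega(\Cmc) = \varnothing$ using \Cref{lem:emptiness}.

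There is no substantive obstacle: \Cref{lem:closure-limit} and \Cref{lem:complement-semi-linear-with-infty} already contain the technical content, and the emptiness decision procedure is available off the shelf. The only lightweight verification needed is that each construction in the chain is computable (as opposed to merely existential), which is immediate from the explicit nature of the proofs of those lemmas. I therefore expect the whole argument to fit in a few lines, as already sketched in the paragraph preceding the corollary.
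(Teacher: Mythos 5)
Your proposal is correct and follows exactly the paper's argument: reduce inclusion (and universality as the special case $\Sigma^\omega \subseteq L_\omega(\Amc)$) to emptiness via the effective closure under complement and intersection from \Cref{lem:closure-limit}, then invoke \Cref{lem:emptiness}. The paper states this in the paragraph immediately preceding the corollary, so no further comparison is needed.
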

Unfortunately, we do not obtain tight bounds and conjecture that these problem are $\Pi_2^\P$-complete for them. 
\setcounter{conjecture}{0}
\begin{conjecture}
 Universality and inclusion for deterministic limit PA are $\Pi_2^\P$-complete.    
\end{conjecture}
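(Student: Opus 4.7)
The plan is to match the $\Pi_2^\P$ lower bound of \Cref{cor:universalityLimit} by showing that non-universality of deterministic limit PA lies in $\Sigma_2^\P$, extending the finite-word argument of \Cref{lem:uniFinite} to the limit setting. For the inclusion variant, I would first reduce to universality: given deterministic limit PA $\Amc_1$ and $\Amc_2$, the effective closure under intersection and complement (\Cref{lem:closure-limit}) yields a deterministic limit PA for $L_\omega(\Amc_1) \setminus L_\omega(\Amc_2)$, and non-inclusion is non-emptiness of this automaton. Since the complement construction of \Cref{lem:complement-semi-linear-with-infty} may incur an exponential blow-up in the semi-linear set, one must avoid building it explicitly and instead work with succinct membership queries against $C_1 \setminus C_2$.

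For universality itself, the key observation is that, as $\Amc$ is deterministic, every infinite word $\alpha$ induces a unique run whose recurrent behavior is captured by a \emph{shape} $(S, I)$, where $S \subseteq Q$ is the SCC of states visited infinitely often and $I \subseteq \{1,\dots,d\}$ is the set of counters whose total increments diverge. The run is accepting exactly when $S \cap F \neq \varnothing$ and the extended Parikh image $\rho(r) \in \Nbbinfty^d$ (with $\infty$ on positions $I$ and finite values elsewhere) lies in $C$. Non-universality therefore asks for some $\alpha$ inducing a shape $(S, I)$ with $S \cap F = \varnothing$ or $\rho(r) \notin C$.

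The proposed $\Sigma_2^\P$-procedure nondeterministically guesses a shape $(S, I)$ (polynomial size), and when $S \cap F \neq \varnothing$ a candidate finite-part vector $\vbf \in \Nbb^{d-|I|}$ of polynomial bit size. Using the Verma--Seidl--Schwentick linear-time Parikh-image construction, I would build a polynomial-size existential Presburger formula asserting that $\vbf$ is realizable as the finite part of $\rho(r)$ for some run of $\Amc$ of shape $(S, I)$; this formula combines Parikh-image constraints on the prefix leading into $S$ with linear constraints on in-$S$ cycle multiplicities enforcing that the chosen cycles exercise every state of $S$ and every counter in $I$ while leaving counters outside $I$ untouched by any repeatedly-used cycle. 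The bijection $f\colon \Nbbinfty \to \Nbb$ of \Cref{lem:complement-semi-linear-with-infty} translates the check that the resulting extended vector lies outside $C$ into a non-membership question over $\Nbb^d$, which is in \coNP via Huynh's small-witness bounds used already in the proof of \Cref{lem:uniFinite}.

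The main obstacle, and the reason the conjecture remains open, is the shape-encoding step. The statement ``counters in $I$ diverge while counters outside $I$ stay bounded'' is inherently an $\omega$-property of the run and translates awkwardly into a polynomial-size existential Presburger constraint on cycle multiplicities, since such a formula only describes \emph{finite} behavior. One natural route is to establish a small-witness theorem for deterministic limit PA, showing that every shape that can arise is already witnessed by an ultimately periodic $\alpha = uv^\omega$ whose period $v$ drives exactly the counters in $I$ and traverses exactly the states in $S$. Proving such a small-witness property with the input-determined cycle structure of a deterministic automaton---in particular, ruling out genuinely aperiodic witnesses in which several in-$S$ cycles must be interleaved in a non-eventually-periodic pattern---appears to be the technically delicate point that needs to be resolved.
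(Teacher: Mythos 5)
This statement is stated in the paper as an open \emph{conjecture}, not a theorem: the paper only establishes the $\Pi_2^\P$ lower bound (via the reduction from irrelevance in \Cref{lem:irrelevance} and \Cref{cor:universalityLimit}) and decidability (via effective closure under complement and intersection together with $\coNP$-complete emptiness), explicitly leaving the matching upper bound open. So there is no proof in the paper to compare against, and your proposal does not close the gap either --- as you yourself acknowledge in your final paragraph. The missing piece is exactly the one you name: a $\Sigma_2^\P$ procedure for non-universality needs to certify, with a polynomial-size witness, that some infinite word realizes a given shape $(S,I)$ and a given finite part $\vbf$ of the extended Parikh image, and the divergence/boundedness dichotomy on counters is an $\omega$-property that does not obviously reduce to a polynomial-size existential Presburger constraint on cycle multiplicities. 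Until that realizability step (or an equivalent small-witness theorem for the complement language, whose semi-linear set may blow up exponentially under \Cref{lem:complement-semi-linear-with-infty}) is proved, the argument is a plan rather than a proof.

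That said, your plan is consistent with the evidence the paper does provide. The paper shows that when the semi-linear set of the second automaton can be complemented in polynomial time, universality and inclusion for deterministic limit PA drop to $\coNP$-complete (\Cref{lem:limitcoNP}), and it attributes the $\Pi_2^\P$ lower bound precisely to the cost of implicitly complementing semi-linear sets; your idea of avoiding explicit complementation and instead using Huynh's small-witness bounds for non-membership mirrors the strategy the paper successfully carries out for finite-word PA universality (\Cref{lem:uniFinite}) and for deterministic strong reset PA inclusion. Two technical cautions if you pursue this: first, make sure the translation of ``$\rho(r)\notin C$'' through the bijection $f$ of \Cref{lem:complement-semi-linear-with-infty} is done as a non-membership query against the \emph{original} $C$ restricted to the guessed $\infty$-pattern $I$, rather than against $f(C)$ computed explicitly, since the paper's conversion can multiply the number of linear sets by $2^{|P|}$; second, for a deterministic automaton the run on an ultimately periodic word is eventually periodic, so the ``genuinely aperiodic witnesses'' you worry about cannot occur for ultimately periodic inputs --- the real question is whether a short ultimately periodic witness of non-acceptance always exists, which is where the bit-size bounds, not the combinatorics of interleaved cycles, are the obstacle.
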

\setcounter{conjecture}{33}
However, we make the following observation. The relatively high $\Pi_2^\P$ lower bound of universality for det.\ limit PA and det.\ strong reset PA (and also of irrelevance for finite word PA)
can be explained by the cost of (implicitly) complementing semi-linear sets. In fact, if we have the guarantee that the semi-linear set of the second PA can be complemented in polynomial time, the universality and inclusion problems become $\coNP$-complete.
We start with det.\ limit PA.

\begin{lemma}
\label{lem:limitcoNP}
Let $\Amc_1$ and $\Amc_2$ be deterministic limit PA with the guarantee that the semi-linear set of $\Amc_2$ can be complemented in polynomial time. Then the following questions are $\coNP$-complete.
\begin{enumerate}
    \item Is $L_\omega(\Amc_2) = \Sigma^\omega$?
    \item Is $L_\omega(\Amc_1) \subseteq L_\omega(\Amc_2)$?
\end{enumerate}
\end{lemma}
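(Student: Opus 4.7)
The plan is to settle both questions by reducing question (2) to emptiness of a single deterministic limit PA, and to observe that question (1) is the special case in which $\Amc_1$ accepts $\Sigma^\omega$. For the upper bound in (2), I would first construct a deterministic limit PA $\overline{\Amc_2}$ recognising $\Sigma^\omega\setminus L_\omega(\Amc_2)$ by following the proof of \Cref{lem:closure-limit} step by step: normalise $\Amc_2$ so that every state is accepting (one auxiliary counter, polynomial overhead) and then complement its semi-linear set over $\Nbbinfty^d$ via \Cref{lem:complement-semi-linear-with-infty}. Under the hypothesis on $\Amc_2$ this second step is polynomial, so $\overline{\Amc_2}$ has polynomial size. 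Next I would take the standard synchronous product $\Amc_1\times\overline{\Amc_2}$, again via the proof of \Cref{lem:closure-limit}, obtaining a deterministic limit PA of polynomial size that recognises $L_\omega(\Amc_1)\cap\overline{L_\omega(\Amc_2)}$. Finally, $L_\omega(\Amc_1)\subseteq L_\omega(\Amc_2)$ iff this intersection is empty, and emptiness is in $\coNP$ by \Cref{lem:emptiness}. Question (1) is the specialisation where $\Amc_1$ is the one-state automaton accepting $\Sigma^\omega$.

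For $\coNP$-hardness of (1) I would reduce from $3$-\textsc{Unsat}. Given a $3$-CNF $\varphi$ on $n$ variables with clauses $C_1,\dots,C_m$, I build a deterministic limit PA $\Amc_\varphi$ over $\{0,1,\#\}$ with $m+1$ counters. While reading a length-$n$ binary prefix it increments counter $c_i$ by the number of literals of $C_i$ satisfied by the current bit; after reading $\#$ (the ``well-formed'' branch) it sets a flag counter $c_0$ to $0$ and moves to an accepting self-looping sink; on any malformed input it routes to a separate always-accepting sink keeping $c_0=1$. The semi-linear set is
\[
\{\mathbf{v}:v_0=1\}\ \cup\ \bigcup_{i=1}^{m}\{\mathbf{v}:v_0=0\wedge v_i=0\},
\]
a union of $m{+}1$ linear sets of polynomial size, whose complement is the single linear set $\{\mathbf{v}:v_0=0\wedge v_i\ge 1\text{ for all }i\}$, also of polynomial size; hence $\Amc_\varphi$ falls inside the promised class. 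By construction $\Amc_\varphi$ is universal iff no length-$n$ assignment satisfies every clause, i.e.\ iff $\varphi$ is unsatisfiable. Hardness of (2) follows by picking $\Amc_1$ to be the trivial universal automaton.

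The main obstacle I expect is the complementation step. The proof of \Cref{lem:complement-semi-linear-with-infty} enumerates all subsets $D\subseteq\{1,\dots,d\}$ of coordinates that may take the value $\infty$, which in general is exponential in the dimension $d$; the hypothesis on $\Amc_2$ is precisely what rules out that blow-up and is the only place in the upper-bound argument where the assumption is used. Everything else, namely the normalisation, the product, and the call to \Cref{lem:emptiness}, is routine and polynomial. On the lower-bound side the only care needed is to exhibit a reduction whose target semi-linear set has a polynomial complement; the clause-counter encoding above is tailored for exactly that, since each coordinate is bounded by $3$ and the complement has a trivial ``all coordinates at least one'' description.
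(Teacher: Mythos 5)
Your upper bound is exactly the paper's argument: complement $\Amc_2$ (normalise so every state is accepting, then complement the semi-linear set using the polynomial-time guarantee), take the product with $\Amc_1$, and invoke \coNP-emptiness from \Cref{lem:emptiness}; question (1) is the special case with a trivial $\Amc_1$. Where you genuinely diverge is the lower bound: the paper reduces from the complement of \textsc{Partition}, building a dimension-$2$ automaton over $\{a,b\}$ whose semi-linear set is the fixed set $\{(z,z')\mid z\neq z'\}$ with complement $\{(z,z)\}$, whereas you reduce from $3$-\textsc{Unsat} with $m+1$ clause counters. Both reductions are valid and both respect the promise; the paper's has the aesthetic advantage that the semi-linear set and its complement are of constant size independent of the instance (all instance-dependence sits in the transition labels), while yours needs the extra observation that the complement of an $(m{+}1)$-dimensional union of $m{+}1$ halfspace-like pieces stays polynomial. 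One slip there: the complement of $\{\vbf: v_0=1\}\cup\bigcup_i\{\vbf: v_0=0\wedge v_i=0\}$ is \emph{not} the single linear set $\{\vbf: v_0=0\wedge v_i\ge 1\text{ for all }i\}$ --- it also contains all vectors with $v_0\ge 2$ (and, over $\Nbbinfty$, with $v_0=\infty$), e.g.\ $(2,1,\dots,1)$. The correct complement is still a polynomial-size union of linear sets, so the reduction survives, but as written your claimed complement is wrong as a set identity; either state the full complement or argue the promise directly from the corrected description.
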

\begin{proof}
    Containment in $\coNP$ of both questions follows immediately from a reduction to emptiness for det.\ limit PA, as the guarantee allows us to complement det.\ limit PA in polynomial time, and we can construct the product automaton of two det.\ limit PA in polynomial time.

    The partition problem is defined as follows: given a multiset $M$ of positive integers, is there a subset $M'$ of $M$ such that $\sum_{n \in M} n = \sum_{n \in M \setminus M'} n$?. This problem is one of the classical \NP-complete problems~\cite{intract}. We reduce from its complement.
    
    Let $M = \{n_1, \dots n_k\}$ be a multiset of positive integers. We construct a det.\ limit PA~$\Amc$ over the alphabet $\{a,b\}$ of dimension 2 as follows. The state set of $\Amc$ is $\{q_0, q_1, \dots q_k\}$ where~$q_0$ is the initial state and $q_k$ is the only accepting state. For every $1 \leq i \leq k$, there is an $a$-transition from $q_{i-1}$ to ${q_i}$ labeled with $(n_i, 0)$ as well as a $b$-transition labeled with~$(0, n_i)$. Finally, we add an $a$-loop and a $b$-loop to the accepting state $q_k$, both labeled with $\0$. The semi-linear set of $\Amc$ is 
    \[C = \{(z,z') \mid z \neq z' \} = C((1,0), \{(1,0), (1,1)\}) \cup C((0,1), \{(0,1), (1,1)\}),\]
    whose size does not depend on the size of $M$. Furthermore, the complement of $C$ is 
    \[\overline{C} = \{(z,z) \mid z \in \Nbb\} = C(\0, \{(1,1\}),\]
    and can hence be computed in polynomial time.

    Now we have that $\Amc$ is universal if and only if $M$ is a negative instance of partition. To see this, observe that every prefix $w_1 \dots w_k \in \{a,b\}^k$ of every word $\alpha \in L_\omega(\Amc)$ represents a subset $M'$ of $M$ with $n_i \in M'$ if and only if $w_i = a$, and hence $n_i \in M \setminus M'$ if and only if~$w_i = b$. The semi-linear set $C$ of $\Amc$ states that $M'$ and $M \setminus M'$ are not a partition of~$M$.
\end{proof}

Now we show that this is also the case for det.\ strong reset PA, presenting in parallel the strategy that we will finally adapt to show that inclusion is $\Pi_2^\P$-complete in the general case.
We show that complements of det.\ strong reset PA recognizable $\omega$-languages are reach-reg.\ PA recognizable, and given that we can complement the semi-linear set in polynomial time, we can construct such a reach-reg.\ PA in polynomial time. Subsequently, we show how to test intersection emptiness of a det.\ strong reset PA and a reach-reg.\ PA in $\coNP$. We begin by proving the first main ingredient.

\begin{lemma}
\label{lem:complSR}
    Let $\Amc = (Q, \Sigma, q_0, \Delta, F, C)$ be a det.\ strong reset PA. Then there is a (non-deterministic) reach-reg.\ PA $\Amc'$ recognizing $\overline{SR_\omega(\Amc)}$. Furthermore, if $C$ can be complemented in polynomial time, then we can compute $\Amc'$ in polynomial time.
\end{lemma}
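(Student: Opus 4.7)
My plan is to observe that an infinite word $\alpha$ lies in $\overline{SR_\omega(\Amc)}$ iff the unique run $r$ of $\Amc$ on $\alpha$ either (a) visits $F$ only finitely often, or (b) visits $F$ infinitely often but, at some $F$-visit, the Parikh image of the segment since the previous reset (the ``$0$-th reset'' being the start of the run) fails $C$. I would build a reach-reg.\ PA for each of these modes and take the union, using closure of reach-reg.\ PA under union, which is immediate from the characterization $\bigcup_i U_i V_i^\omega$ with $U_i$ PA-recognizable and $V_i$ regular.

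Mode (a) is a deterministic $\omega$-regular condition on the state sequence of $\Amc$, so a standard non-deterministic Büchi automaton $\Amc_a'$ --- guess a position after which $F$ is avoided, then continue in a sub-automaton where $F$-entering transitions are removed --- suffices, and can be viewed as a reach-reg.\ PA with trivial counters.

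For mode (b), I would build $\Amc_b'$ on three copies of $Q$ labeled \emph{pre}, \emph{during}, and \emph{post}. The automaton simulates $\Amc$ throughout but replaces every transition label by $\0$ in pre and post, keeping the original labels only in during. Non-deterministic pre-to-during switches are offered on any transition whose source is $q_0$ or lies in $F$, marking a segment start $k_{i-1}$ (including the edge case $k_0 = 0$); during-to-post switches are offered on any transition whose target is in $F$, marking the segment end $k_i$. The accepting states of $\Amc_b'$ are exactly the post-states, and its semi-linear set is $\overline{C}$ --- semi-linear by the standard closure over $\Nbb^d$ and, under the additional hypothesis, computable in polynomial time. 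Because only during-transitions contribute to the accumulated counter, its value at the first post-state equals $\rho(r_{k_{i-1}+1} \dots r_{k_i})$, so the reach part of the reach-reg.\ condition fires exactly when the guessed segment fails $C$; the regularity part is automatic since post is a sink component and $\Amc$ is complete.

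Both directions of correctness are then direct: a failing segment of $r$ guides the non-deterministic guesses of $\Amc_b'$, and conversely any accepting run of $\Amc_b'$ exhibits such a segment. The polynomial size bound is immediate from the construction. The main technical subtlety I anticipate is the clean union of $\Amc_a'$ and $\Amc_b'$ while keeping their counter dimensions and semi-linear sets compatible --- typically handled by an extra bookkeeping counter flagging which branch was taken --- but this is standard and can be folded into the general closure-under-union argument for reach-reg.\ PA.
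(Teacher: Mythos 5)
Your decomposition is exactly the one the paper uses: a word is rejected either because $F$ is visited only finitely often (handled by complementing the underlying deterministic B\"uchi automaton) or because some reset segment has Parikh image in $\overline{C}$ (handled by a reachability gadget with the complemented set), and the two parts are joined into a single reach-reg.\ PA. However, your mode-(b) gadget has a genuine soundness gap: you allow the \emph{during} copy to pass through accepting states of $\Amc$ without switching to \emph{post}. Then the accumulated vector at the switch is the Parikh image of a concatenation of \emph{several} consecutive reset segments, and a sum of vectors in $C$ need not lie in $C$. Concretely, take $\Amc$ with a single accepting state carrying an $a$-self-loop labelled $1$ and $C=\{1\}$; then $SR_\omega(\Amc)=\{a^\omega\}$, yet your $\Amc_b'$ accepts $a^\omega$ by guessing a during segment of length $2$, accumulating $2\in\overline{C}$. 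So as stated the construction accepts words that are \emph{in} $SR_\omega(\Amc)$, which breaks the complement.

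The repair is small and brings you in line with the paper's gadget $\Amc_{p\Rightarrow q}$: in the during copy, delete all outgoing transitions of states in $F$ (equivalently, make the during-to-post switch mandatory at the first $F$-visit), so that the checked infix is exactly one segment between consecutive resets. Everything else in your argument --- the non-deterministic guess of the segment start at $q_0$ or an $F$-visit, the use of $\overline{C}$ (polynomial-time computable under the extra hypothesis), the post copy as an accepting sink guaranteeing the regularity part, and the final union --- is sound and matches the paper; your version of the switch-in point is in fact slightly more liberal than the paper's (which hard-wires the switch at the first visit of a guessed state $p$ and ranges over all pairs $p,q\in F$), but both are covered by the union over non-deterministic choices.
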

\begin{proof}
    We assume that every state of $\Amc$ is reachable from the initial state $q_0$ (as we can safely remove unreachable states in polynomial time).
    Observe that $\Amc$ rejects an infinite word $\alpha$ whenever one of the following two conditions is met:
    \begin{enumerate}
        \item The unique run of $\Amc$ on $\alpha$ visits every accepting state just finitely often.
        \item The unique run of $\Amc$ on $\alpha$ visits an accepting state with bad counter values.
    \end{enumerate}

    The first condition is $\omega$-regular, while the second one can be tested with a reachability~PA. 
    We begin by clarifying the first point. Testing whether there is an infinite word rejected by $\Amc$ because its unique run visits every accepting only finitely often boils down to testing whether the complement $\omega$-language of the underlying Büchi automaton of $\Amc$ is non-empty.
    Hence, let $\Bmc$ be the Büchi automaton obtained from~$\Amc$ by forgetting all vectors and let $\overline{\Bmc}$ be a Büchi automaton recognizing the complement of~$L_\omega(\Bmc)$. As $\Bmc$ is deterministic, we can construct $\overline{\Bmc}$ in polynomial time~\cite{detBuchiCompl}. Then $\overline{\Bmc}$ accepts all words rejected by $\Amc$ due to the first condition.

    Second, we show how to construct a (non-deterministic) reachability PA accepting all infinite words that are rejected by $\Amc$ due the second condition. We re-use the following idea from~\cite[Lemma 30]{grobler2023remarks}. For any two states $p, q \in Q$ let $\Amc_{p \Rightarrow q} = (Q \cup \{q_0'\}, \Sigma, q_0', \Delta_{p \Rightarrow q}, \{q\}, C)$ where $\Delta_{p \Rightarrow q} = \{(q_1, a, \vbf, q_2) \mid (q_1, a, \vbf, q_2) \in \Delta, q_1 \notin F) \cup \{(q_0', a, \vbf, q_2) \mid (p, a, \vbf, q_2) \in \Delta\}$. 
    Hence, if we interpret $\Amc_{p\Rightarrow q}$ as a finite word~PA, then it accepts all words accepted by the finite word~PA $\Amc$ when starting in $p$, ending in $q$, and not visiting an accepting state in-between. In other words, if $p,q\in F$, then $\Amc_{p \Rightarrow q}$ accepts all finite infixes that the strong reset~PA $\Amc$ may read when the last reset was in $p$, and the next reset is in $q$. 

    For every pair of accepting states $p,q \in F$, we consider the PA $\overline{\Amc}_{p \Rightarrow q}$, which is defined as~$\Amc_{p \Rightarrow q}$ but with the complement semi-linear set $\overline{C}$ of $C$. 
    Observe that we can compute~$\overline{\Amc}_{p \Rightarrow q}$ in polynomial time if $C$ can be complemented in polynomial time. Hence, $\overline{\Amc}_{p \Rightarrow q}$ accepts all finite words collecting a vector not in $C$ when starting in $p$, ending in $q$, and not visiting other accepting states in-between.
    Now let $\Amc^\circ_{p \Rightarrow q}$ the PA obtained from $\Amc$ and $\overline{\Amc}_{p \Rightarrow q}$ as follows. We start with a copy of $\Amc$ but every state is not accepting and every vector is replaced by~$\0$. 
    Then, we identify the state $p$ in $\Amc^\circ_{p \Rightarrow q}$ with the initial state of $\overline{\Amc}_{p \Rightarrow q}$, that is, we remove every outgoing transition of $p$ in $\Amc^\circ_{p \Rightarrow q}$ and replace them with the outgoing transitions of the initial state of $\overline{\Amc}_{p \Rightarrow q}$. 
    Furthermore, the accepting state $q$ of $\overline{\Amc}_{p \Rightarrow q}$ is the only accepting state of $\Amc^\circ_{p \Rightarrow q}$. Observe that $q$ has no outgoing transitions by construction. 
    Finally, we add a trivial self-loop to $q$ and choose $\overline{C}$ to be the semi-linear set of $\Amc^\circ_{p \Rightarrow q}$.
    Hence, $\Amc^\circ_{p \Rightarrow q}$ accepts all infinite words of the form $uv\beta$ with $\beta \in \Sigma^\omega$ such that $\Amc$ is in state $p$ after reading $u$, then in state $q$ after further reading $v$, and collects a vector not in $C$ while reading~$v$ and hence rejects every infinite word with prefix $uv$.
    We compute the PA $\Amc^\circ_{p \Rightarrow q}$ for every $p, q \in F$ and let $\Amc^\circ$ be the reachability PA recognizing the union of all these $\Amc^\circ_{p \Rightarrow q}$ by taking the disjoint union and connecting them with a fresh initial state (see \cite[Lemma 6]{infiniteZimmermann} for details). 
    In contrast to an iterated product construction, the presented construction allows us to compute~$\Amc^\circ$ in polynomial time, albeit not preserving determinism.

    Finally, let $\Amc'$ be a reach-reg.\ PA accepting $L_\omega(\overline{\Bmc}) \cup R_\omega(\Amc^\circ)$. Note that $\Amc'$ can be computed in polynomial time in the sizes of $\overline{\Bmc}$ and $\Amc^\circ$ by turning both automata into reach-reg.\ PA and again taking their disjoint union with a fresh initial state. Now we have $RR_\omega(\Amc') = \overline{SR_\omega(\Amc)}$.
\end{proof}

Before we proceed, we show two auxiliary lemmas.
\begin{lemma}
    Let $\Amc_1 = (Q_1, \Sigma, p_I, \Delta_1, F_1, C)$ be a deterministic strong reset~PA and $\Bmc = (Q_2, \Sigma, q_I, \Delta_2, F_2)$ be a Büchi automaton. Then $SR_\omega(\Amc_1) \cap L_\omega(\Bmc)$ is ultimately periodic.
\end{lemma}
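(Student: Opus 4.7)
The plan is to run a standard pumping argument on the product of $\Amc_1$ and $\Bmc$, using the reset positions of $\Amc_1$ as synchronisation points. Assume the intersection is non-empty and fix $\alpha \in SR_\omega(\Amc_1) \cap L_\omega(\Bmc)$. Let $r = r_1 r_2 \dots$ be the unique run of $\Amc_1$ on $\alpha$ and let $0 = k_0 < k_1 < k_2 < \dots$ be its reset positions, so $\rho(r_{k_{i-1}+1}\dots r_{k_i}) \in C$ for every $i \geq 1$. Fix also some accepting run $s = s_1 s_2 \dots$ of $\Bmc$ on $\alpha$, and let $m_1 < m_2 < \dots$ be the positions at which $s$ visits a state in $F_2$. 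Denote by $p_n \in Q_1$ and $q_n \in Q_2$ the states reached by $r$ resp.\ $s$ after reading $\alpha[1:n]$.

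Next, extract a subsequence $k_{i_1} < k_{i_2} < \dots$ of reset positions such that between any two consecutive ones, the Büchi run $s$ visits some $m_j$. Such a subsequence is infinite: given any reset position $k_i$, pick some $m_j > k_i$, then let $k_{i'}$ be any reset position with $k_{i'} > m_j$ (which exists because there are infinitely many resets). Now apply the pigeonhole principle to the sequence of product states $(p_{k_{i_t}}, q_{k_{i_t}}) \in Q_1 \times Q_2$: there exist $s < t$ with $(p_{k_{i_s}}, q_{k_{i_s}}) = (p_{k_{i_t}}, q_{k_{i_t}})$. Set
\[
u = \alpha[1 : k_{i_s}], \qquad v = \alpha[k_{i_s}+1 : k_{i_t}],
\]
and claim $uv^\omega \in SR_\omega(\Amc_1) \cap L_\omega(\Bmc)$.

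To verify the claim, observe that by determinism of $\Amc_1$, reading $v$ from state $p_{k_{i_s}}$ retraces exactly the transitions $r_{k_{i_s}+1}\dots r_{k_{i_t}}$ and ends back in $p_{k_{i_s}}$. Along this segment, $\Amc_1$ passes through the accepting reset states at the positions $k_{i_s+1}, k_{i_s+2}, \dots, k_{i_t}$, and at each such position the Parikh image collected since the previous reset lies in $C$. Hence, in the run of $\Amc_1$ on $uv^\omega$, every iteration of $v$ produces exactly the same pattern of intermediate resets with counter values in $C$, so the strong reset condition is satisfied. For the Büchi side, we replay the transitions of $s$ on each copy of $v$: this is possible because after $u$ we are in state $q_{k_{i_s}} = q_{k_{i_t}}$ and $v$ takes $\Bmc$ from $q_{k_{i_s}}$ back to the same state, and by construction at least one position in $(k_{i_s}, k_{i_t}]$ belongs to some $m_j$, so an accepting state of $\Bmc$ is visited in every iteration of $v$.

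The only subtlety — and the step to be careful about — is the interplay between the subsequence $(k_{i_t})_t$ and the strong reset condition: the pumped infix $v$ must not merely take $\Amc_1$ from $p_{k_{i_s}}$ back to the same state, but must also respect the Parikh condition at every intermediate reset inside $v$. This is handled automatically because $v$ is copied verbatim from the original run $r$, which satisfies the strong reset condition everywhere; determinism then guarantees identical behaviour on every subsequent iteration.
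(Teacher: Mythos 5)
Your proof is correct and follows essentially the same route as the paper: run the two automata in product, locate two reset positions of $\Amc_1$ at which the product state repeats and between which $\Bmc$ visits an accepting state, and pump that infix, using determinism of $\Amc_1$ to guarantee that every intermediate reset still satisfies the Parikh condition. If anything, your pigeonhole on full product states $(p,q)$ at the (thinned) reset positions is slightly more careful than the paper's version, which anchors separately on a recurring $F_1$-state $p_f$ and a recurring $F_2$-state $q_f$.
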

\begin{proof}
    Assume $SR_\omega(\Amc_1) \cap L_\omega(\Bmc) \neq \varnothing$ and let $\alpha$ be an infinite word accepted by both automata, $\Amc_1$ and $\Bmc$.
    If $\alpha = uv^\omega$ for some $u, v \in \Sigma^*$, we are done. Hence assume that this is not the case.
    Let $\Amc = (Q_1 \times Q_2, \Sigma, (p_I, q_I), \Delta, F_1 \times Q_2, C)$ with $\Delta = \{((p, q), a, \vbf, (p', q')) \mid (p, a, \vbf, p') \in \Delta_1, (q, a, q') \in \Delta_2\}$ be the product automaton\footnote{We note that $\Amc$ interpreted as a strong reset~PA does not recognize $SR_\omega(\Amc_1) \cap L_\omega(\Bmc)$. Instead, it accepts all infinite words $\alpha \in SR_\omega(\Amc_1)$ such that $\Bmc$ has an infinite but not necessarily accepting run on~$\alpha$.} of $\Amc_1$ and $\Bmc$.
    As $\alpha \in SR_\omega(\Amc_1)$, the unique accepting run of $\Amc_1$ on $\alpha$, say $(p_0, \alpha_1, \vbf_1, p_1)$ $(p_1, \alpha_2, \vbf_2, p_2) \dots$ with $p_0 = p_I$, is accepting. Likewise, as $\alpha \in L_\omega(\Bmc)$, there is an accepting run $(q_0, \alpha_1, q_1) (q_1, \alpha_2, q_2) \dots$ with $q_0 = q_I$ of $\Bmc$ on $\alpha$.
    Hence, $r = ((p_0, q_0), \alpha_1, \vbf_1, (p_1, q_1)) ((p_1, q_1), \alpha_2, \vbf_2, (p_2, q_2)) \dots$ is a run of $\Amc$ on $\alpha$
    with the following properties:
    
    \begin{itemize}
        \item there is $p_f \in F_1$ such that for infinitely many $i$ we have $p_i = p_f$. Let $f_1, f_2, \dots$ denote the positions of all occurrences of a state of the form $(p_f, \cdot)$ in $r$.
        \item there is $q_f \in F_2$ such that for infinitely many $i$ we have $q_i = q_f$. 
    \end{itemize}
    
    Let $j \geq f_1$ be minimal such that $q_j = q_f$.
    Now let $k \leq j$ be maximal such that $k = f_\ell$ for some $\ell \geq 1$.
    Then $r[1:f_\ell]r[f_\ell+1 : f_{\ell+1}]^\omega$ is an accepting run of $\Amc$ on an ultimately periodic word, say $uv^\omega$, that visits an accepting state of $\Bmc$ infinitely often. Hence $uv^\omega \in SR_\omega(\Amc_1) \cap L_\omega(\Bmc)$.
\end{proof}
\begin{lemma}
\label{lem:interSRBukki}
Let $\Amc_1 = (Q_1, \Sigma, p_0, \Delta_1, F_1, C)$ be a deterministic strong reset~PA and let $\Bmc = (Q_2, \Sigma, q_0, \Delta_2, F_2)$ be a Büchi automaton. The question $SR_\omega(\Amc_1) \cap L_\omega(\Bmc) = \varnothing$ is \coNP-complete.
\end{lemma}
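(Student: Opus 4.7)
The plan is to prove $\coNP$-hardness by a direct reduction and $\coNP$-membership by exhibiting a polynomial-size certificate for non-emptiness.

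For hardness, we reduce from the emptiness problem for deterministic strong reset~PA, which is $\coNP$-complete by \Cref{lem:emptiness}. Given $\Amc_1$, we pair it with the trivial one-state Büchi automaton $\Bmc$ over $\Sigma$ whose unique state is initial, accepting, and carries a self-loop on every letter; then $L_\omega(\Bmc) = \Sigma^\omega$ and hence $SR_\omega(\Amc_1) \cap L_\omega(\Bmc) = SR_\omega(\Amc_1)$. The reduction is clearly polynomial, so intersection emptiness inherits the $\coNP$ lower bound.

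For the upper bound we show that non-emptiness is in $\NP$. The previous lemma already guarantees that any non-empty intersection contains an ultimately periodic witness; we promote this to a polynomial-size certificate via a lasso argument on a quotient of the product $\Amc_1 \times \Bmc$. Define the directed graph $H$ with vertex set $V = (F_1 \times Q_2) \cup \{(p_0, q_0)\}$ and an edge $(p, q) \to (p', q')$ with $p' \in F_1$ whenever there is a path in $\Amc_1 \times \Bmc$ from $(p, q)$ to $(p', q')$ that avoids $F_1$ in the first coordinate strictly between its endpoints and whose accumulated $\Amc_1$-Parikh image lies in $C$; call the edge \emph{Büchi-flagged} if its underlying path additionally visits $Q_1 \times F_2$. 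Adapting the pigeonhole construction of the previous lemma and exploiting that any accepting run of $\Bmc$ meets $F_2$ infinitely often, one shows that $SR_\omega(\Amc_1) \cap L_\omega(\Bmc) \neq \varnothing$ if and only if $H$ contains a lasso rooted at $(p_0, q_0)$ whose cyclic part traverses at least one Büchi-flagged edge. Since $|V| \leq |Q_1|\cdot|Q_2|+1$, such a lasso has only polynomially many edges.

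The main step is certifying the edges of the lasso in polynomial bit-length, given that the underlying segment-words may themselves be exponentially long due to large period vectors in $C$. For a candidate edge $(p, q) \to (p', q')$ we construct, analogously to the $\Amc_{p \Rightarrow q}$ from the proof of \Cref{lem:complSR}, an auxiliary finite-word~PA that synchronises the $F_1$-avoiding fragments of $\Amc_1$ from $p$ to $p'$ with the $\Bmc$-runs from $q$ to $q'$ (optionally required to traverse $F_2$ in the Büchi-flagged case) and retains the semi-linear set~$C$; non-emptiness of this PA is an $\NP$ question and admits a polynomial-size certificate by the standard $\NP$ upper bound for finite-word~PA non-emptiness~\cite{emptynp}. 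The full $\NP$ algorithm thus guesses the vertex sequence of the lasso together with a non-emptiness certificate for each of its edges, then verifies in polynomial time; the concatenation of the induced sub-runs, with the cyclic part repeated, yields an ultimately periodic witness in the intersection. Combined with the hardness reduction above, this yields $\coNP$-completeness.
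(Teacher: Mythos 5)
Your proposal is correct and follows essentially the same route as the paper: hardness by pairing with the universal one-state Büchi automaton, and membership in \NP{} by guessing a lasso of reset points in the product of $\Amc_1$ and $\Bmc$ (with one segment of the cycle forced to visit $F_2$) and certifying each segment through non-emptiness of an $\Amc_{p\Rightarrow q}$-style finite-word PA. Your ``Büchi-flagged edge'' device corresponds exactly to the paper's special automaton $\Amc_{p_{\ell-1}\Rightarrow q_\ell \Rightarrow p_{\ell+1}}$ handling the case where the $F_2$-visit falls strictly between two resets.
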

\begin{proof}
The lower bound follows immediately from the $\coNP$-hardness of emptiness; hence, we focus on the containment in $\coNP$ by showing that testing intersection non-emptiness is in \NP.
By the previous lemma it is sufficient to check the existence of an ultimately periodic word.
Recall the algorithm \mbox{in~\cite[Lemma 30]{grobler2023remarks}} that decides the non-emptiness problem for reset~PA in $\NP$ by exploiting that $\omega$-languages accepted by a strong reset~PA are ultimately periodic~\cite[Lemma 29]{grobler2023remarks}.
The algorithm guesses an accepting state $p_f$ of $\Amc_1$ that is seen infinitely often, and a sequence of distinct accepting states $p_1 \dots p_n$ such that $p_k = p_f$ for some $k \leq n$. 
Then, the algorithm tests for all $0 \leq i < n$ whether the finite word~PA $\Amc_{p_i \Rightarrow p_{i+1}}$ accepts at least one word, and whether $\Amc_{p_n \Rightarrow p_k}$ accept at least one word (where $\Amc_{p \Rightarrow q}$ is defined as in \Cref{lem:complSR}).
We modify the algorithm as follows.

First, let $\Amc$ be the product automaton of $\Amc_1$ and $\Bmc$ (as in the previous proof; however, the set of accepting states is not important for the algorithm).
We guess state $p_f \in F_1$ and $q_f \in F_2$ that we expect to be seen infinitely often to satisfy the acceptance conditions of~$\Amc_1$ resp.\ $\Bmc$. Furthermore, similar to the algorithm above we guess a sequence of distinct states $(p_1, q_1) (p_2, q_2) \dots (p_n, q_n)$ of $\Amc$ such that for some $\ell \leq n$ we have $q_\ell = q_f$, for some $k \leq \ell$ we have $p_k = p_f$, and for all $j \neq \ell$ we have $p_j \in F_1$.
If $p_\ell \in F_1$, we proceed exactly as in the original $\NP$-algorithm, that is, for every $0 \leq i < n$, we test the finite word PA $\Amc_{(p_i, q_i) \Rightarrow (p_{i+1}, q_{i+1})}$, as well as the finite word PA $\Amc_{(p_n, q_n) \Rightarrow (p_{k}, q_{k})}$ for non-emptiness. 

Now assume $p_\ell \notin F_1$.
Let $\Amc_{p_{\ell-1} \Rightarrow q_\ell \Rightarrow p_{\ell + 1}}$ be the finite word~PA that accepts all finite infixes accepted by the product automaton $\Amc$ when starting in $(p_{\ell-1}, q_{\ell-1})$, visiting $(p_\ell, q_\ell)$ at some point, and ending in $(p_{\ell+1}, q_{\ell+1})$ such that for all internal states $(p_i, q_i)$ we have $p_i \notin F_1$.
Two achieve this, we take two copies of $\Amc$, where all accepting states in the first copy are not reachable, all accepting states in the second copy have no outgoing transitions, and the second copy can only be reached from the first copy via $(p_\ell, q_\ell)$.

Formally, let 
$$\Amc_{p_{\ell-1} \Rightarrow q_\ell \Rightarrow p_{\ell + 1}} = (Q_1 \times Q_2 \times \{1,2\} \cup \{q_0'\}, \Sigma, q_0', \Delta', \{(p_{\ell+1}, q_{\ell+1})\}, C)$$ 
with 
\begin{align*}
    \Delta' =&\ \{((p,q,1), a, \vbf, (p', q', 1)) \mid ((p,q), a, \vbf, (p', q')) \in \Delta, p, p' \notin F_1\} \\
    \cup&\ \{((p,q,2), a, \vbf, (p', q', 2)) \mid ((p,q), a, \vbf, (p', q')) \in \Delta, p \notin F_1\} \\
    \cup&\ \{((p,q,1), a, \vbf, (p_\ell, q_\ell, 2) \mid ((p, q), a, \vbf, (p_\ell, q_\ell)) \in \Delta \} \\
    \cup&\ \{(q'_0, a, \vbf, (p', q', 1) \mid ((p_{\ell-1}, q_{\ell-1}), a, \vbf, (p', q' )) \in \Delta, p' \notin F_1 \} \\
    \cup&\ \{(q'_0, a, \vbf, (p_\ell, q_\ell, 2) \mid ((p_{\ell-1}, q_{\ell-1}), a, \vbf, (p_\ell, q_\ell)) \in \Delta \}.
\end{align*}

Now the algorithm is similar the first case with the addition that we also test this automaton for non-emptiness. Hence, for every $0 \leq j < \ell-1$ and $\ell+1 \leq j < n$ we test $\Amc_{(p_j, q_j) \Rightarrow (p_{j+1}, q_{j+1})}$ as well as $\Amc_{(p_n, q_n) \Rightarrow (p_k, q_k)}$ for non-emptiness. Furthermore, we test $\Amc_{p_{\ell-1} \Rightarrow q_\ell \Rightarrow p_{\ell + 1}}$ for non-emptiness. 

If all these tests succeed, we conclude $SR_\omega(\Amc_1) \cap L_\omega(\Bmc) \neq \varnothing$, as they witness the existence of an ultimately periodic word $\alpha$ accepted by $\Amc_1$ with the property that there is a run an accepting run of $\Bmc$ on $\alpha$ that visits $q_f$ infinitely often.
\end{proof}

Let $\Amc$ be a det.\ strong reset PA whose semi-linear set can by complemented in polynomial time. By \Cref{lem:complSR} we can compute a Büchi automaton $\overline{\Bmc}$ and a reach.\ PA $\Amc^\circ$ such that $\overline{SR_\omega}(\Amc) = L_\omega(\overline{\Bmc}) \cup R_\omega(\Amc^\circ)$.
By the previous lemma we can test $SR_\omega(\Amc) \cap L_\omega(\overline{\Bmc}) = \varnothing$ in \coNP.
Hence, it remains to show that testing intersection emptiness of a deterministic strong reset~PA and a reachability PA is decidable in $\coNP$. To achieve that, we use the \NP-algorithm in~\cite{zvassnz} deciding the reachability problem for $\Zbb$-$\mathsf{VASS}$ with $k$ nested zero-tests ($\Zbb$-$\mathsf{VASS}^\mathsf{nz}_k$).
A $\Zbb$-$\mathsf{VASS}^\mathsf{nz}_k$ (of dimension $d \geq 1$) is a tuple $V = (Q, Z, E)$ where $Q$ is a finite set of states, 
$Z \subseteq \{0, 1, \dots, d\}$ is its set of zero tests with $|Z \setminus \{0\}| = k$, and $E \subseteq Q \times \Zbb^d \times Z \times Q$ is a finite set of transitions. 
A configuration of $V$ is a pair $(p, \vbf) \in Q \times \Zbb^d$. Assume $\vbf = (v_1, \dots, v_d)$. We write $(p, \vbf) \vdash_V (p', \vbf')$ if there is a transition $(p, \ubf, \ell, p') \in E$ such that $\vbf' = \vbf + \ubf$ and $v_1 = \dots = v_\ell = 0$. Furthermore, we write $(p, \vbf) \vdash^*_V (p', \vbf')$ if there is a sequence $(p_1, \vbf_1) \vdash_V \dots \vdash_V (p_n, \vbf_n)$ for some $n \geq 1$ such that $(p, \vbf) = (p_1, \vbf_1)$ and $(p', \vbf') = (p_n, \vbf_n)$. 
Intuitively, a $\Zbb$-$\mathsf{VASS}^\mathsf{nz}_k$ is a counter machine with zero-tests that can only zero-test the top-most $\ell$ counters at once, for $k$ different values of $\ell$.

The \emph{reachability} problem for $\Zbb$-$\mathsf{VASS}^\mathsf{nz}_k$ is defined as follows: given a $\Zbb$-$\mathsf{VASS}^\mathsf{nz}_k$ $V$ and two configurations $(p, \vbf), (p', \vbf')$, does $(p, \vbf) \vdash_V^* (p', \vbf')$ hold? As shown in~\cite{zvassnz}, this problem is $\NP$-complete\footnote{We note that our definition of $\Zbb$-$\mathsf{VASS}^\mathsf{nz}_k$ differs slightly from the definition in~\cite{zvassnz}, as we allow $E\subseteq Q\times \Zbb^d\times Z\times Q$ instead of $E\subseteq Q\times \{-1,0,1\}^d\times Z \times Q$ only. However, this difference does not change the mentioned complexity for the reachability problem~\cite{zvassnz}, see also~\cite[Section A.1]{logspaceconversion}.} for any fixed~$k$.
We are now ready to show that universality and inclusion for det.\ strong reset PA are $\coNP$-complete if we can complement their semi-linear sets in polynomial time.

\begin{lemma}
Let $\Amc_1$ and $\Amc_2$ be deterministic strong reset PA with the guarantee that the semi-linear set of $\Amc_2$ can be complemented in polynomial time. Then the following questions are $\coNP$-complete.
\begin{enumerate}
    \item Is $SR_\omega(\Amc_2) = \Sigma^\omega$?
    \item Is $SR_\omega(\Amc_1) \subseteq SR_\omega(\Amc_2)$?
\end{enumerate}
\end{lemma}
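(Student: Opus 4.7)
My plan is to handle the lower and upper bounds separately, reusing machinery already built in the paper. For $\coNP$-hardness of universality (which implies hardness of inclusion), I would adapt the partition-based reduction from the proof of \Cref{lem:limitcoNP} to the strong-reset setting. Given a multiset $M = \{n_1,\dots,n_k\}$, I build a deterministic strong reset PA of dimension~$2$ over $\{a,b\}$ whose initial path reads a choice word in $\{a,b\}^k$, incrementing counter~$1$ by $n_i$ on $a$ and counter~$2$ by $n_i$ on $b$ at step~$i$, ending in a unique accepting sink that self-loops on both letters with a fixed non-zero vector in $C = \{(u,v)\mid u \neq v\}$, say $(1,2)$. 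Every reset after the first then carries value $(1,2) \in C$, while the first reset tests precisely the partition constraint, so universality holds iff $M$ admits no balanced partition. The semi-linear set is of constant size and trivially polynomial-time complementable (its complement is $C(\0,\{(1,1)\})$), so the instance falls under the stated assumption.

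For the $\coNP$ upper bound I reduce universality to inclusion (taking $\Amc_1$ to recognise $\Sigma^\omega$ via a trivial one-state strong reset PA) and inclusion to intersection emptiness via
\[
SR_\omega(\Amc_1) \subseteq SR_\omega(\Amc_2)\ \Leftrightarrow\ SR_\omega(\Amc_1) \cap \overline{SR_\omega(\Amc_2)} = \varnothing.
\]
By \Cref{lem:complSR}, the polynomial-time complementability assumption on $\Amc_2$ yields in polynomial time a deterministic Büchi automaton $\overline{\Bmc}$ and a reachability PA $\Amc^\circ$ with $\overline{SR_\omega(\Amc_2)} = L_\omega(\overline{\Bmc}) \cup R_\omega(\Amc^\circ)$. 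Hence the inclusion is equivalent to the conjunction of $SR_\omega(\Amc_1) \cap L_\omega(\overline{\Bmc}) = \varnothing$ and $SR_\omega(\Amc_1) \cap R_\omega(\Amc^\circ) = \varnothing$; the first conjunct lies in $\coNP$ by \Cref{lem:interSRBukki}.

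The main obstacle is to place the second conjunct in $\coNP$, i.e., to show that intersection non-emptiness of a deterministic strong reset PA and a non-deterministic reachability PA lies in $\NP$. I would adapt the $\NP$ algorithm of~\cite{grobler2023remarks} for strong-reset non-emptiness, which exploits that strong-reset languages are ultimately periodic and guesses an accepting state $p_f$ to recur, a sequence of distinct accepting states $p_1,\dots,p_n$ with some $p_k = p_f$, and then checks non-emptiness of the finite-word sub-PAs $\Amc_{p_i \Rightarrow p_{i+1}}$ and $\Amc_{p_n \Rightarrow p_k}$. On top of this I guess the segment index $j$ carrying the accepting hit of $\Amc^\circ$ together with the target state of that hit, and replace the $j$-th non-emptiness test by a product construction that synchronises $\Amc_1$'s deterministic transitions with $\Amc^\circ$'s non-deterministic ones and conjoins the two Parikh conditions. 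Each resulting non-emptiness test then reduces to reachability in a $\Zbb$-$\mathsf{VASS}^\mathsf{nz}_k$ for bounded $k$, which is in $\NP$ by~\cite{zvassnz}. With polynomially many polynomial-size guesses, intersection non-emptiness is in $\NP$, placing inclusion in $\coNP$ and matching the hardness bound.
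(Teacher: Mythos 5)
Your hardness argument and the first two reduction steps track the paper closely: the partition construction (with the self-loop vector changed from $\0$ to a fixed element of $C$ so that later resets succeed) is exactly the adaptation the paper has in mind, and the decomposition of non-inclusion into $SR_\omega(\Amc_1) \cap L_\omega(\overline{\Bmc}) \neq \varnothing$ or $SR_\omega(\Amc_1) \cap R_\omega(\Amc^\circ) \neq \varnothing$ via \Cref{lem:complSR} and \Cref{lem:interSRBukki} is the paper's route verbatim.

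The gap is in the last step. You propose to keep the segment-wise non-emptiness tests of the strong-reset algorithm and to ``replace the $j$-th non-emptiness test by a product construction that \ldots{} conjoins the two Parikh conditions,'' where $j$ is the segment containing the accepting hit of $\Amc^\circ$. This does not work, because the two Parikh conditions live over different spans of the run: $\Amc_1$'s condition is local to each segment (the counters reset at every accepting state), whereas $\Amc^\circ$'s condition constrains the \emph{total} extended Parikh image accumulated from position $1$ up to the hit, i.e., across segments $1,\dots,j$. Testing segments $1,\dots,j-1$ independently for non-emptiness discards both the vector contributions those segments make toward $\Amc^\circ$'s condition and (for a non-deterministic $\Amc^\circ$) the state $\Amc^\circ$ is in at each segment boundary; these cannot be recovered by guessing, since the achievable vectors of a segment form a semi-linear set, not a single value. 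The paper resolves exactly this tension with a single \emph{global} product $\Zbb$-$\mathsf{VASS}^\mathsf{nz}_2$ of dimension $d_1+d_2$: the first $d_1$ counters simulate $\Amc_1$ and are zero-tested (after subtracting base and period vectors of $C_1$) at every reset, while the last $d_2$ counters run uninterrupted until the guessed hit state $(p_f,q_f)$, where membership in $\Amc^\circ$'s set is checked; the nested zero-test structure (testing only the top-most block of counters) is precisely what lets the two conditions coexist. Your sketch gestures at $\Zbb$-$\mathsf{VASS}^\mathsf{nz}_k$ reachability but applies it per segment, which is where the argument breaks; repairing it essentially forces the paper's global construction.
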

\begin{proof}
Hardness follows again from a reduction from partition, very similar to the reduction in \Cref{lem:limitcoNP}. Hence, we focus on the containment in $\coNP$ of the second question.

As mentioned above, we can compute a Büchi automaton $\overline{\Bmc}$ and a reach.\ PA $\Amc^\circ$ such that $\overline{SR_\omega}(\Amc_2) = L_\omega(\overline{\Bmc}) \cup R_\omega(\Amc^\circ)$ by \Cref{lem:complSR}, and test $SR_\omega(\Amc_1) \cap L_\omega(\overline{\Bmc}) = \varnothing$ in $\coNP$ by \Cref{lem:interSRBukki}.

It remains to show how to solve $SR_\omega(\Amc_1) \cap R_\omega(\Amc^\circ) = \varnothing$ in $\coNP$.
In order to do so we use the algorithm in~\cite{zvassnz} solving reachability for $\Zbb$-$\mathsf{VASS}^\mathsf{nz}_2$ in $\NP$ to decide $SR_\omega(\Amc) \cap R_\omega(\Amc^\circ) \neq \varnothing$ in $\NP$. 
Let $\Amc_1 = (Q_1,\Sigma,p_0, \Delta_1, F_1, C_1)$ be of dimension $d_1$ and $\Amc^\circ = (Q_2,\Sigma, q_0, \Delta_2, F_2, C_2)$ be of dimension $d_2$.
Assume $SR_\omega(\Amc) \cap R_\omega(\Amc^\circ) \neq \varnothing$ and let $\alpha$ be an infinite word accepted by both automata. In particular, there is a finite prefix $u$ of $\alpha$ and an accepting run of $\Amc^\circ$ on $\alpha$ satisfying the Parikh condition after processing $u$, say in the accepting state $q_f \in F_2$. Recall that all outgoing transitions of every accepting state of $\Amc^\circ$ are self-loops, hence we may assume that $\Amc^\circ$ does not leave $q_f$ anymore after processing $u$.
Now let $p_f \in F_1$ be the first accepting state visited by the unique accepting run of $\Amc_1$ on $\alpha$ after processing $u$, say after processing the prefix $uv$. Note that $\Amc^\circ$ is still in $q_f$ after processing $uv$.

Our strategy is as follows.
First, we guess states $p_f$ and $q_f$ with the mentioned properties. We then verify these properties by building a product $\Zbb$-$\mathsf{VASS}^\mathsf{nz}_2$ with the property that $((p_0, q_0), \0) \vdash^*_V ((p_f, q_f), \0)$ if and only there is a finite prefix $uv$ as described above. Then we need to test whether $\Amc_1$ can continue a partial run from $p_f$ to an accepting run, that is, whether $\Amc_2$ with initial state $p_f$ accepts at least one infinite word, say $\beta$. If all these tests succeed, the infinite word $uv\beta$ witnesses non-emptiness.

We build a product $\Zbb$-$\mathsf{VASS}^\mathsf{nz}_2$ $V$ with $d_1 + d_2$ many counters. The idea is as follows. We use the first $d_1$ counters with to simulate $\Amc_1$ ensuring that every visit of an accepting state of $\Amc_1$ is with good counter values by zero-testing them.
Likewise, we use a second set of $d_2$ counters to simulate $\Amc^\circ$.
Let us give some more details on how to verify that every visit of an accepting state implies good counter values. Let $C_1 = C(\bbf_1, P_1) \cup \dots \cup C(\bbf_k, P_k)$ for some~$k \geq 1$. 
For every accepting state $f \in F_1$ we insert $k$ states, say $f^{(1)}, \dots, f^{(k)}$, and a copy of~$f$ itself. We connect~$f$ to $f^{(i)}$ with a transition subtracting $\bbf_i$ and no zero-test.
Then, for every period vector $\pbf_i \in P_i$, we insert a loop on~$f^{(i)}$ subtracting~$\pbf_i$. 
Finally, every outgoing transition of $f^{(i)}$ is equipped with a zero-test on the first $d_1$ counters. 
This construction allows us to test membership of the current counter values in $C_1$, while resetting the counters in parallel.
Finally, when reaching $(p_f, q_f)$, we use the same idea to check whether the vector induced by the last $d_2$ counters yields a vector in $C_2$. As $p_f$ is accepting, we expect all counters to be zero at this point, which we check by zero-testing them all.
Hence, if $((p_0, q_0), \0) \vdash^*_V ((p_f, q_f), \0)$ holds (where we assume that $(p_f, q_f)$ is the state of~$V$ reached after the described zero-tests), this implies that there is a finite prefix $uv$ such that~$\Amc_2$ rejects every infinite word with prefix $uv$ due to bad counter values, while the unique partial run of~$\Amc_1$ on $uv$ respects the strong reset acceptance condition.
Finally, we need to check whether the $\omega$-language of $\Amc_1$ with initial state $p_f$ is not-empty using the $\NP$-algorithm for testing emptiness for strong reset PA~\cite[Lemma 30]{grobler2023remarks}.
If these tests succeed, there is an infinite word $\beta$ such that $uv\beta \in SR_\omega(\Amc_1) \setminus SR_\omega(\Amc_2)$, witnessing non-inclusion.
\end{proof}

We are now ready to combine and generalize the results in the previous lemmas to all det.\ strong reset PA.

\begin{lemma}
Universality and inclusion for det.\ strong reset PA are $\Pi_2^\P$-complete.
\end{lemma}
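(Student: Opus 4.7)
The $\Pi_2^\P$-hardness is inherited directly from Corollary~\ref{cor:universalityLimit}, so the plan is to establish the $\Sigma_2^\P$ upper bound for the non-inclusion problem $SR_\omega(\Amc_1) \not\subseteq SR_\omega(\Amc_2)$; universality then follows by specialization. The strategy is to adapt the $\NP$-argument of the previous lemma by replacing the explicit polynomial-time complement of $C_2$, which is no longer available, with the Huynh-style guess-and-verify machinery used in the proof of Lemma~\ref{lem:uniFinite}.

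First I would apply the decomposition $\overline{SR_\omega(\Amc_2)} = L_\omega(\overline{\Bmc}) \cup L'$ from Lemma~\ref{lem:complSR}, where $\overline{\Bmc}$ is the complement of the underlying deterministic Büchi automaton of $\Amc_2$ (polynomially computable regardless of $C_2$) and $L'$ is the set of infinite words on which $\Amc_2$'s unique run visits some accepting state with counter values outside $C_2$. The intersection $SR_\omega(\Amc_1) \cap L_\omega(\overline{\Bmc})$ can be tested for non-emptiness in $\NP$ by Lemma~\ref{lem:interSRBukki}, which is already within $\Sigma_2^\P$. The remaining task is to decide $SR_\omega(\Amc_1) \cap L' \neq \varnothing$ in $\Sigma_2^\P$.

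For this, I would guess a state $p_f \in F_1$ (to be visited infinitely often on the continuation), together with states $p \in F_1$ and $q, q' \in F_2$ marking the last $C_2$-good accepting visit of $\Amc_2$ and the first bad one, and then build essentially the same product $\Zbb$-$\mathsf{VASS}^\mathsf{nz}_2$ $V$ as in the previous lemma, except that the final zero-test encoding $\overline{C_2}$ is omitted. The set $R$ of $\Amc_2$-counter projections reachable at the configuration of $V$ corresponding to arrival at $q'$ is semi-linear and should admit a polynomial-size existential Presburger description, obtained from the $\NP$-algorithm for $\Zbb$-$\mathsf{VASS}^\mathsf{nz}_2$-reachability in~\cite{zvassnz}. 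Then $SR_\omega(\Amc_1) \cap L' \neq \varnothing$ (for the guessed states) is equivalent to $R \not\subseteq C_2$ together with non-emptiness of the strong-reset language of $\Amc_1$ started from the corresponding successor state with $p_f$ as chosen recurrent accepting state, the latter being in $\NP$ by Lemma~30 of~\cite{grobler2023remarks}.

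Finally, Huynh's polynomial-witness property (as exploited in the proof of Lemma~\ref{lem:uniFinite}) guarantees that whenever $R \not\subseteq C_2$, there is a vector $\vbf \in R \setminus C_2$ of polynomial bit size. The $\Sigma_2^\P$ algorithm therefore existentially guesses all the state information, the vector $\vbf$, a Presburger witness for $\vbf \in R$, and a polynomial-size certificate for the continuation in $\Amc_1$; the single remaining check $\vbf \notin C_2$ is a $\coNP$ oracle query, placing the entire procedure in $\Sigma_2^\P = \NP^{\NP}$. The main obstacle will be showing that the semi-linear set $R$ admits an existential Presburger description of polynomial size that is tight enough for Huynh's bound to yield a polynomial-bit witness; this requires combining the small-linear-set decomposition of~\cite{zvassnz} with the Presburger encoding of $\Zbb$-$\mathsf{VASS}^\mathsf{nz}_2$-reachability, a step analogous to but technically more involved than the corresponding guess-and-verify step in the proof of Lemma~\ref{lem:uniFinite}.
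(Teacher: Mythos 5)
Your overall architecture matches the paper's: hardness from \Cref{cor:universalityLimit}, the decomposition of $\overline{SR_\omega(\Amc_2)}$ into an $\omega$-regular part handled by \Cref{lem:interSRBukki} and a ``bad reset'' part, a guess of the accepting states bracketing the bad segment, and a product $\Zbb$-$\mathsf{VASS}^\mathsf{nz}_2$ reachability query to stitch the prefix, the bad segment, and the continuation together. The one place where you genuinely diverge is how the bad vector is located, and that is exactly where your proposal accumulates unproven technical debt. You want to reason about the set $R$ of $\Amc_2$-counter values reachable in the $\Zbb$-$\mathsf{VASS}^\mathsf{nz}_2$ at the arrival in $q'$, and then run the Huynh-style argument ($R \not\subseteq C_2$, small witness, membership verification) directly on $R$. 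For that you need (a) a polynomial-size existential Presburger description of the reachability set of a $\Zbb$-$\mathsf{VASS}^\mathsf{nz}_2$ (the analogue of the Verma--Seidl--Schwentick formula used in \Cref{lem:uniFinite}, which is only stated for plain PA), and (b) a small-linear-set decomposition of that reachability set so that a candidate linear subset can be guessed. You correctly flag this as the main obstacle, but it remains open in your write-up; Huynh's bound is applied in the paper to explicitly given semi-linear sets, and transferring it to $\Zbb$-$\mathsf{VASS}^\mathsf{nz}$ reachability sets is not an off-the-shelf step.

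The paper avoids this entirely by decoupling the two roles of the counter machine: the bad vector is found by running the non-irrelevance algorithm of \Cref{lem:irrelevance}/\Cref{lem:uniFinite} on the \emph{plain finite-word} PA $\Amc'_{(p_2,q_2)\Rightarrow(p_3,q_3)}$ (restricted to the segment between the two guessed $\Amc_2$-resets, where the linear-time existential Presburger formula and the small decomposition are directly available), yielding a vector $\vbf \notin C_2$ of polynomial bit size; the $\Zbb$-$\mathsf{VASS}^\mathsf{nz}_2$ is then only asked a reachability question with $\vbf$ hard-coded as a final subtraction followed by a zero test. If you adopt that decoupling, your argument goes through. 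Two smaller points you should also make explicit: the continuation test must start from the first $\Amc_1$-reset \emph{after} the bad $\Amc_2$-segment (the paper's $(p_4,q_4)$), not from the successor of $q'$ itself, since the strong-reset non-emptiness algorithm assumes the $\Amc_1$-counters are zero at the start; and the prefix test must reach the last $\Amc_1$-reset \emph{before} the bad segment (the paper's $(p_1,q_1)$), which is why four product states are guessed rather than three.
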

\begin{proof}
    Hardness follows from~\Cref{cor:universalityLimit}; hence, we show that non-inclusion is in $\Sigma_2^\P$, yielding the desired result.

    Let $\Amc_1 = (Q_1, \Sigma, p_0, \Delta_1, F_1, C_1)$ and $\Amc_2 = (Q_2, \Sigma, q_0, \Delta_2, F_2, C_2)$ be det.\ strong reset~PA of dimensions $d_1$ and $d_2$, resp. 
    As in the previous lemma, we can compute a Büchi automaton~$\overline{\Bmc}$ accepting all infinite words that are rejected by $\Amc_2$ because every accepting state of $\Amc_2$ is visited only finitely often. 
    By \Cref{lem:interSRBukki}, we can test $SR_\omega(\Amc_1) \cap L_\omega(\overline{\Bmc}) \neq \varnothing$ in~$\NP$. 
    If the intersection is indeed non-empty, we conclude $SR_\omega(\Amc_1) \not\subseteq SR_\omega(\Amc_2)$.
    Otherwise, this non-inclusion might hold as there is an infinite word $\alpha \in SR_\omega(\Amc_1)$ rejected by $\Amc_2$ due to a reset with bad counter values. To test this case, we proceed as follows.
    
    We guess two accepting states $q_2$, $q_3$ of $\Amc_2$ and
    utilize the non-irrelevance algorithm in \Cref{lem:irrelevance} to test the existence of such a finite infix with bad counter values, \ie, an infix whose unique partial run of $\Amc_2$ yields a vector $\vbf \notin C_2$. We then construct a $\Zbb$-$\mathsf{VASS}^\mathsf{nz}_2$ to test whether there is an infinite word accepted by $\Amc_1$ that is rejected by $\Amc_2$ because the partial run between $q_2$ and $q_3$ yields the bad vector $\vbf$, ensuring that $\Amc_1$ respects all resets, similar to the proof of the previous lemma.

    Let $\Amc = (Q_1 \times, Q_2, \Sigma, (p_0, q_0), \Delta, F_1 \times Q_2, C_1)$ with \[\Delta = \{((p,q), a, \vbf, (p',q')) \mid (p,a,\vbf,p') \in \Delta_1, (q,a,\cdot,q') \in \Delta_2\}\] be the product automaton of $\Amc_1$ and $\Amc_2$ where we only keep the vectors and accepting states from $\Amc_1$. We guess two states $(p_2, q_2), (p_3, q_3) \in Q_1 \times F_2$ such that there is $\alpha \in SR_\omega(\Amc_1) \setminus SR_\omega(\Amc_2)$ with the property that $\Amc_2$ rejects $\alpha$ because the unique rejecting run of $\Amc_2$ on $\alpha$ visits $q_2$ at some position, say $f_2$ (and hence resets), the next reset is in $q_3$ at position $f_3$, and the vector collected in this time is not a member of $C_2$. Furthermore, $p_2$ and $p_3$ are the corresponding states of $\Amc_1$ at positions $f_2$ and $f_3$.
    Additionally, we guess two states $(p_1, q_1), (p_4, q_4) \in F_1 \times Q_2$ such that the unique accepting run of $\Amc_1$ on $\alpha$ resets the last time before reaching position $f_2$ say at position $f_1 \leq f_2$, namely in $p_1$; and $\Amc_1$ resets the first time after reaching position $f_3$ say at position $f_4 \geq f_3$, namely in $p_4$. Furthermore, $q_1$ and $q_4$ are the states of $\Amc_2$ at positions $f_1$ and $f_4$. Observe that $f_1 = f_2$ and $f_3 = f_4$ are possible.

    First, we test whether $(p_1, q_1)$ is reachable in $\Amc$ in the sense that there is a finite prefix $u$ of $\alpha$ such that $\Amc$ is in state $(p_1, q_1)$ after reading $v$ and $\Amc_1$ resets with good counter values whenever an accepting state of $\Amc_1$ is seen. In order to do so, we modify $\Amc$ such that $(p_1, q_1)$ is the only accepting state, and replace its outgoing transitions with trivial self-loops. Then we use the $\NP$-algorithm in~\cite[Lemma 30]{grobler2023remarks} to test non-emptiness of the resulting automaton.

    Second, we test whether it is possible to successfully continue the run from $(p_4, q_4)$ in the sense that there is an infinite word $\beta$ accepted by $\Amc$ when starting in $(p_4, q_4)$ (and hence by~$\Amc_1$ when starting in $p_4)$. In order to do so, we modify $\Amc$ such that $(p_4, q_4)$ is the initial state and test non-emptiness of the resulting automaton, again using the \NP-algorithm in~\cite[Lemma 30]{grobler2023remarks}.

    Let $\Amc'$ be defined as $\Amc$ but this time we only keep the vectors from $\Amc_2$ (instead of $\Amc_1$) and its semi-linear set is $C_2$ (instead of $C_1$).
    Third, we test whether there is indeed a finite prefix of $w$ of $\alpha$ such that $\Amc'$ is in state $(p_3, q_3)$ after processing $w$ when starting in $(p_2, q_2)$, and the vector collected by the unique partial run of $\Amc'$ (and hence $\Amc_2$) is not a member of~$C_2$.
    To achieve that we compute the PA $\Amc'_{(p_2, q_2) \Rightarrow (p_3, q_3)}$ and test for non-irrelevance using the algorithm in \Cref{lem:irrelevance} in $\Sigma_2^\P$ (observe that the construction of this PA preserves determinism up to completeness, and we can always complete the PA by adding a non-accepting sink). Recall that a part of this algorithm guesses a vector $\vbf$ not contained in~$C_2$ such that there is a run ending in the accepting state (here $(p_3, q_3)$) collecting $\vbf$. We need to keep this vector for the next step.

    Finally, we need to verify that there is a non-rejecting partial run of $\Amc$ on an infix~$vwx$, starting in $(p_1, q_1)$, visiting $(p_2, q_2)$ and $(p_3, q_3)$ in between, and ending in $(p_4, q_4)$ such that $w$ is processed between the visits of $(p_2, q_2)$ and $(p_3, q_3)$, witnessing that $\Amc_2$ is indeed rejecting. 
    To achieve that, we construct a product $\Zbb$-$\mathsf{VASS}^\mathsf{nz}_2$ $V$ of dimension $d_1 + d_2$ in a similar way as in the proof of the previous lemma. We give a high level description of $V$. The state set of~$V$ consists of three copies of the product $Q_1 \times Q_2$, and the transitions keep the vectors of the transitions of both automata, $\Amc_1$ and $\Amc_2$. 
    We can move from the first copy to the second copy upon reaching $(p_2, q_2)$, and from the second copy to the third copy  upon reaching~$(p_3, q_3)$. 
    The $d_2$ counters belonging to $\Amc_2$ are frozen (that is $\0$) in the first and third copy. Contrary, all counters are used in the second copy. Furthermore, we remove every state in $Q_1 \times F_2$ in the second copy besides $(p_2, q_2)$ and $(p_3, q_3)$.
    We verify that $\Amc_1$ always resets with good counter values using zero tests as in the proof of the previous lemma.
    Finally, upon reaching~$(p_4, v_4)$, we subtract $\0^{d_1} \cdot \vbf$ (where $\vbf$ is the vector guessed in the previous step) and test whether all counters are zero. As $p_4$ is accepting, we expect the first~$d_1$~counters to be zero. 
    Furthermore, as $\vbf$ is a vector witnessing bad counter values with respect to $\Amc_2$, all-zero counters imply that there is indeed such an infix $w$ of $\alpha$ breaking the run of $\Amc_2$. Let $(p_f, q_f)$ be the state of $V$ reached after this zero test.
    Then we use the $\NP$-algorithm in~\cite{zvassnz} to test $((p_1, q_1), \0) \vdash^*_V ((p_f, q_f), \0)$. If the answer is positive, we conclude that $SR_\omega(\Amc_1) \not\subseteq SR_\omega(\Amc_2)$, as witnessed by $\alpha = uvwxy\beta$.

    We conclude with a technical remark: the bit size of the vector $\vbf$ guessed in the third step might polynomially depend on $C_1$ and $C_2$. In particular, the bit size of (a suitable encoding of) $C_1$ might be arbitrary larger than the bit size of $C_2$. Hence, when calling the irrelevance algorithm with $\Amc'$ (where only $C_2$ is present) we need to take in account that the bit size of $\vbf$ might not be polynomial in $C_2$ but in $C_1$ and $C_2$.
\end{proof}

Finally, we study the intersection-emptiness problems, being the core of solving existential model checking. We begin with Büchi PA.
\begin{lemma}
    Intersection-emptiness for (deterministic and non-deterministic) Büchi PA is $\coNP$-complete.
\end{lemma}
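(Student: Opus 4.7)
The plan is to dispatch the hardness direction immediately and then concentrate on showing that intersection non-emptiness lies in $\NP$. For the lower bound, take the trivial one-state Büchi automaton $\Amc_2$ with $B_\omega(\Amc_2)=\Sigma^\omega$; then $B_\omega(\Amc_1)\cap B_\omega(\Amc_2)=\varnothing$ iff $B_\omega(\Amc_1)=\varnothing$, so intersection-emptiness inherits the $\coNP$-hardness of emptiness for Büchi~PA established in~\cite{infiniteZimmermann}.

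For the upper bound I would first form the product automaton $\Amc = \Amc_1\times\Amc_2$ of dimension $d_1+d_2$: the state space is $Q_1\times Q_2$, transitions synchronise on the input letter, and the counters of $\Amc_1$ and $\Amc_2$ live in disjoint coordinate blocks. An infinite word $\alpha$ lies in $B_\omega(\Amc_1)\cap B_\omega(\Amc_2)$ iff some run $r$ of $\Amc$ on $\alpha$ admits infinitely many \emph{type-1 hits} (prefixes $r[1:i]$ with $p_i\in F_1\times Q_2$ whose first $d_1$ Parikh-coordinates lie in $C_1$) and infinitely many \emph{type-2 hits} (the symmetric condition for $\Amc_2$). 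The difficulty compared to ordinary Büchi-PA emptiness is that these two Parikh conditions are witnessed at possibly different positions and must \emph{both} be satisfied infinitely often, so a single product Büchi condition does not suffice.

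To handle this I would express the above condition as a Ramsey-style statement over existential Presburger arithmetic and then invoke the efficient Ramsey-quantifier elimination of Bergsträßer et al.~\cite{ramsey}. Concretely, I encode a ``configuration'' as a tuple $(p,\vbf,t)$ with $p$ a state of $\Amc$, $\vbf\in\Nbb^{d_1+d_2}$ the accumulated counter vector, and $t\in\{1,2\}$ recording the type of the most recent witnessed hit. Using the existential-Presburger representation of Parikh images of NFA~\cite{schwentikHorn,klaedtkeruess}, I would write a formula $\varphi(c,c')$ of polynomial size asserting that $c'$ is reachable from $c$ by a non-empty finite run of $\Amc$ during which at least one hit of the type opposite to $t$ occurs and the corresponding Parikh condition is certified at that hit. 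Intersection non-emptiness then amounts to the existence of an infinite ascending chain $c_1, c_2, \ldots$ with $c_1$ reachable from the initial configuration and $\varphi(c_k, c_{k+1})$ for all $k$, which is exactly a Ramsey quantifier application.

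The main obstacle is the design of $\varphi$: it must be small, existential Presburger, and strong enough that every infinite Ramsey witness can be converted into a genuine accepting word for both automata, and conversely that every intersection witness yields such a chain. The alternation of hit types in the component $t$ is what forces both Büchi-PA conditions to fire infinitely often, and the polynomial-size Parikh-image construction of~\cite{schwentikHorn} keeps $\varphi$ within existential Presburger. Once $\varphi$ is in place, the elimination of~\cite{ramsey} produces an equivalent existential Presburger formula of polynomial size, whose satisfiability is in $\NP$, yielding the desired upper bound uniformly for both the deterministic and the non-deterministic variants.
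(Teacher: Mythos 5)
Your proposal matches the paper's argument: the lower bound is inherited from the $\coNP$-hardness of emptiness, and the upper bound reduces intersection non-emptiness to an existential Presburger formula with a Ramsey quantifier that is eliminated in polynomial time via Bergsträßer et al.~\cite{ramsey} (the paper likewise only sketches this step, combining the two Büchi conditions at the formula level where you combine them on the product automaton). One caveat: the Ramsey quantifier of~\cite{ramsey} asks for an infinite \emph{clique}, not merely a chain, so your relation $\varphi$ should be arranged to compose under concatenation of run segments (e.g.\ by requiring a certified hit of \emph{each} type in every segment) rather than relying on the alternating type bit, which is not preserved when two consecutive segments are composed.
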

\begin{proof}
    The lower bound for det.\ Büchi PA follows from their $\coNP$-complete emptiness problem.
    
    We give a proof sketch showing that intersection non-emptiness for Büchi PA is in $\NP$ by utilizing a recent result essentially stating that Ramsey-quantifiers in Presburger formulas can be eliminated in polynomial time~\cite{ramsey}. The authors show how to use the Ramsey-quantifier to check liveness properties for systems with counters. In particular, the existence of an accepting run of a Büchi~PA (answering the question whether the accepted $\omega$-language is non-empty) can be expressed with a Presburger formula with a Ramsey-quantifier.
    Hence, checking if the intersection of the two $\omega$-languages recognized by two Büchi~PA can be tested by intersecting two Presburger-formulas and moving the quantifiers to the front. We refer to Sections 4.1 and 8.2 in~\cite{ramsey} for more information.
\end{proof}

This settles also the case for limit PA and reach-reg.\ PA, as they are generalized by Büchi PA\footnote{In fact, the construction in~\cite{grobler2023remarks} turning limit PA into reach-reg.\ (which in turn are easily turned into Büchi PA) yields a reach-reg.\ PA whose size is exponential in the dimension of the limit PA. However, we can still utilize the presented \NP-algorithm testing intersection non-emptiness for Büchi PA by guessing a ``relevant'' part of polynomial size.}~\cite{grobler2023remarks}. The lower bound for their deterministic variants follows again from their $\coNP$-complete emptiness problems.
\begin{corollary}
    Intersection-emptiness for (deterministic and non-deterministic) limit PA and reach-reg.\ PA is $\coNP$-complete.
\end{corollary}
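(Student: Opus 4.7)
The plan is to reduce intersection-emptiness for reach-reg.\ PA and for limit PA to intersection-emptiness for Büchi PA and then invoke the previous lemma. For reach-reg.\ PA the reduction is polynomial and immediate: I would duplicate the state space of a reach-reg.\ PA $\Amc = (Q, \Sigma, q_0, \Delta, F, C)$ into a ``pre-hit'' copy~$1$ and a ``post-hit'' copy~$2$, keeping the original transitions (with no state accepting) in copy~$1$, replacing all vector labels by~$\0$ in copy~$2$ while marking $F \times \{2\}$ as accepting, and adding cross-transitions $((p,1), a, \vbf, (q,2))$ for every $(p,a,\vbf,q) \in \Delta$ with $p \in F$. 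Reusing $C$ as the semi-linear set yields a Büchi PA of size $\Omc(|\Amc|)$ whose Büchi $\omega$-language is $RR_\omega(\Amc)$: the cross-transition witnesses a single accepting hit with good Parikh image (the reachability condition), and since vectors in copy~$2$ are zero, the accumulated Parikh image stays in~$C$ at every subsequent accepting hit, supplying the regularity condition.

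For limit PA the translation to reach-reg.\ PA from~\cite{grobler2023remarks} blows up exponentially in the dimension, so I would not build it explicitly. Instead, I would design a direct $\NP$ certificate for intersection non-emptiness of two limit PA $\Amc_1, \Amc_2$ of dimensions $d_1, d_2$. For each $i \in \{1,2\}$, nondeterministically guess the set $D_i \subseteq \{1, \dots, d_i\}$ of coordinates that should diverge on the witnessing run, together with a linear piece $C(\bbf_i, P_i)$ of polynomial bit-size drawn from a small decomposition of the restriction of $\Amc_i$'s semi-linear set to the non-diverging coordinates, in the spirit of the small decomposition used in the proof of \Cref{lem:uniFinite}. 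Replace the limit condition on each~$\Amc_i$ by the conjunction of (a) a Büchi condition on accepting states, (b) for every $j \in D_i$, an additional Büchi condition forcing infinitely many transitions that strictly increment coordinate~$j$ (encoded by a tracking state that is entered whenever such a transition is taken, with generalized Büchi compiled down to plain Büchi), and (c) a Parikh condition demanding that the coordinates outside $D_i$ sum to a vector in $C(\bbf_i, P_i)$. This produces Büchi PA of polynomial size, after which the $\NP$ algorithm for intersection non-emptiness of Büchi PA from the previous lemma applies.

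For the deterministic variants the lower bound follows by a trivial reduction from emptiness, which is $\coNP$-complete by \Cref{lem:emptiness}: intersecting the given deterministic limit (resp.\ reach-reg.) PA with a one-state universal PA (semi-linear set $\{\0\}$) produces an instance with the same answer. The principal technical point lies in the limit PA upper bound: one must capture ``coordinate $j$ diverges'' through a Büchi acceptance condition rather than through an actual counter value, and must do so without materializing the exponential-sized reach-reg.\ PA of~\cite{grobler2023remarks}. Once the diverging coordinates are isolated by the nondeterministic guess, the remaining obligations split cleanly into a Büchi acceptance condition and a finite Parikh condition of polynomial size, both of which are handled by the previous lemma.
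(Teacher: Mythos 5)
Your overall route is the same as the paper's: reduce to the $\coNP$-completeness of intersection-emptiness for Büchi~PA, and for limit~PA avoid the exponential limit-to-reach-reg.\ translation by guessing a polynomial-size ``relevant part''. The paper leaves those details to a footnote, so making them explicit is welcome, but your limit~PA construction has a soundness gap. Realized inside a Büchi~PA, your condition~(c) only demands that at infinitely many accepting positions the \emph{prefix sum} of the coordinates outside $D_i$ lies in $C(\bbf_i,P_i)$; it does not force those coordinates to stop being incremented. A coordinate $j \notin D_i$ may therefore still diverge while its prefix sums hit $C(\bbf_i,P_i)$ infinitely often: take $C(\bbf_i,P_i)$ to be the even numbers and an accepting self-loop incrementing coordinate $j$ by $2$ — the constructed Büchi~PA accepts, although the limit~PA assigns $\infty$ to coordinate $j$ and rejects. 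This overapproximation makes the non-emptiness test unsound, i.e., it can report a witness even when $L_\omega(\Amc_1) \cap L_\omega(\Amc_2) = \varnothing$. The standard repair is to additionally guess a position after which the run uses only transitions whose label is $\0$ in every coordinate outside $D_i$ (a second copy of the automaton restricted to such transitions); then the non-$D_i$ prefix sums are frozen and condition~(c) genuinely expresses convergence to a value in $C(\bbf_i,P_i)$, while completeness is preserved because a convergent non-negative sum has only finitely many non-zero terms.

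Two smaller points. In the reach-reg.\ construction the cross-transitions should be attached to transitions \emph{entering} an accepting state ($q \in F$) rather than leaving one ($p \in F$): the reachability hit is checked on $\rho(r_1 \dots r_i)$ upon arrival at $p_i \in F$, whereas your crossing adds the vector of the $(i{+}1)$st transition before the check, an off-by-one that changes the recognized language. The lower bound via \Cref{lem:emptiness} and intersection with a one-state universal PA is fine and matches the paper.
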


We conclude by showing that intersection emptiness is already undecidable for det.\ strong reset PA and hence also for det.\ weak reset PA.
The result relies on the fact that the intersection of two such languages can encode non-terminating computations of two-counter machines.
A two-counter machine $\Mmc$ is a finite sequence of instructions
\[(1 : \Isf_1)(2 : \Isf_2) \dots (k-1 : \Isf_{k-1})(k : \mathsf{STOP})\]
where the first component of a pair $(\ell, \Isf_\ell)$ is the line number, and the second component is the instruction in line $\ell$.
An instruction is of one of the following forms:

\begin{itemize}
    \item $\mathsf{Inc}(Z_i)$, where $i = 0$ or $i = 1$.
    \item $\mathsf{Dec}(Z_i)$, where $i = 0$ or $i = 1$.
    \item $\mathsf{If}\ Z_i = 0\ \mathsf{goto}\ \ell'\ \mathsf{else}\ \ell''$, where $i = 0$ or $i = 1$, and $\ell', \ell'' \leq k$.
\end{itemize}

Instructions of the first or second form are called increments resp.\ decrements, while instructions of the latter form are called zero-tests.
A configuration of $\Mmc$ is a tuple $c = (\ell, z_0, z_1)$, where $\ell \leq k$ is the current line number, and $z_0, z_1 \in \Nbb$ are the current counter values of $Z_0$ and $Z_1$ respectively. We say $c$ \emph{derives} into its unique successor configuration $c'$, written $c \vdash c'$, as follows.

\begin{itemize}
    \item If $\Isf_\ell = \mathsf{Inc}(Z_0)$, then $c' = (\ell + 1, z_0 + 1, z_1)$.
    \item If $\Isf_\ell = \mathsf{Inc}(Z_1)$, then $c' = (\ell + 1, z_0, z_1 + 1)$.
    \item If $\Isf_\ell = \mathsf{Dec}(Z_0)$, then $c' = (\ell + 1, \max\{z_0 - 1, 0\}, z_1)$.
    \item If $\Isf_\ell = \mathsf{Dec}(Z_1)$, then $c' = (\ell + 1, z_0, \max\{z_1 - 1, 0\})$.
    \item If $\Isf_\ell = \mathsf{If}\ Z_0 = 0\ \mathsf{goto}\ \ell'\ \mathsf{else}\ \ell''$, then $c' = (\ell', z_0, z_1)$ if $z_0 = 0$, and $c' = (\ell'', z_0, z_1)$ if $z_0 > 0$.
    \item If $\Isf_\ell = \mathsf{If}\ Z_1 = 0\ \mathsf{goto}\ \ell'\ \mathsf{else}\ \ell''$, then $c' = (\ell', z_0, z_1)$ if $z_1 = 0$, and $c' = (\ell'', z_0, z_1)$ if $z_1 > 0$.
    \item If $\Isf_\ell = \mathsf{STOP}$, then $c$ has no successor configuration.
\end{itemize}

The unique computation of $\Mmc$ is a finite or infinite sequence of configurations $c_0 c_1 c_2 \dots$ such that $c_0 = (1, 0, 0)$ and $c_i \vdash c_{i+1}$ for all $i \geq 0$.
Observe that the computation is finite if and only if the instruction $(k : \mathsf{STOP})$ is reached. If this is the case, we say $\Mmc$ terminates.
Given a two-counter machine $\Mmc$, it is undecidable to decide whether $\Mmc$ terminates~\cite{minsky}.

In the following we assume w.l.o.g.\ that our two-counter machines satisfy the guarded-decrement property~\cite{infiniteZimmermann}, which guarantees that every decrement does indeed change a counter value: every decrement $(\ell : \mathsf{Dec}(Z_i))$ is preceded by a zero-test of the form $(\ell - 1, \mathsf{If}\ Z_i = 0\ \mathsf{goto}\ \ell+1\ \mathsf{else}\ \ell)$.
Note that this modification does not change the termination behavior of a two-counter machine, as decrementing a counter whose value is already zero does not have an effect.

\begin{lemma}
    The intersection emptiness problem for deterministic strong reset~PA is undecidable.
\end{lemma}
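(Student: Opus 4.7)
The plan is to reduce from the non-termination problem for two-counter machines with the guarded-decrement property, which is undecidable. Given such a machine $\Mmc$, I will construct two deterministic strong reset~PA $\Amc_1$ and $\Amc_2$ such that $SR_\omega(\Amc_1) \cap SR_\omega(\Amc_2) \neq \varnothing$ if and only if $\Mmc$ has an infinite computation.

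I encode a configuration $(\ell, z_0, z_1)$ as the finite word $\langle \ell, z_0, z_1\rangle = \ell \cdot a^{z_0} \cdot b^{z_1} \cdot \#$ over the alphabet $\Sigma = \{1, \dots, k, a, b, \#\}$, and an infinite computation $c_0 c_1 c_2 \dots$ as the concatenation $\langle c_0\rangle \langle c_1\rangle \langle c_2\rangle \dots$. Call such a word a \emph{valid encoding} if $c_0 = (1,0,0)$, no $c_i$ has line number $k$ (the $\mathsf{STOP}$ line), and $c_i \vdash c_{i+1}$ for all $i \geq 0$. Clearly $\Mmc$ does not terminate iff a valid encoding exists, and in that case it is unique. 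It therefore suffices to design $\Amc_1,\Amc_2$ whose intersection is exactly the set of valid encodings.

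The automaton $\Amc_1$ resets immediately after processing $\langle c_{2i+1}\rangle$ for every $i \geq 0$ (i.e.\ after the $\#$ closing an odd-indexed configuration) and verifies, between two consecutive resets, that the pair $(c_{2i}, c_{2i+1})$ respects $\vdash$; the very first segment additionally verifies $c_0 = (1,0,0)$. Symmetrically, $\Amc_2$ reads $\langle c_0\rangle$ once without resetting and then resets after each $\langle c_{2i}\rangle$ for $i \geq 1$, verifying $(c_{2i-1}, c_{2i}) \mathop{\vdash}$-pairs. Both automata have a non-accepting sink to absorb syntactically malformed inputs and any occurrence of line~$k$, so that such words are rejected. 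Since the two automata are complete and deterministic, their runs are uniquely determined, and accepting states (i.e.\ reset positions) occur only at the intended $\#$-positions. Determinism is not broken by zero-tests $\mathsf{If}\ Z_i = 0\ \mathsf{goto}\ \ell' \ \mathsf{else}\ \ell''$, because after reading the left configuration of a pair the next-line symbol read from the input (either $\ell'$ or $\ell''$) dictates the branch to be taken.

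To verify the numerical relations per pair, each $\Amc_j$ uses counters $x_0, x_0', x_1, x_1'$ that accumulate $z_0, z_0', z_1, z_1'$ respectively during the current pair-check, plus, for every (instruction, branch) pair a fresh signaling counter that is set to $1$ precisely when this branch fires. The semi-linear set $C$ is the union, over all (instruction, branch) pairs, of a linear component enforcing (i) the corresponding signaling counter equals $1$ and all others equal $0$, and (ii) the arithmetic relation appropriate for that branch, e.g.\ $x_0' = x_0 + 1$ and $x_1' = x_1$ for $\mathsf{Inc}(Z_0)$, or $x_0 = 0$ and $x_0' = 0$ and $x_1' = x_1$ for the ``then''-branch of a $Z_0$-zero-test; the guarded-decrement property makes each decrement relation (which is otherwise non-linear because of the $\max$) equivalent to $x_i' = x_i - 1$ and hence linear.

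The main obstacle is matching the strong-reset semantics: the Parikh image must lie in $C$ at \emph{every} accepting visit, and each such visit must depend only on the most recent pair. This is resolved by making accepting states reachable only at the end of a complete pair-check, so that the counters accumulated since the previous reset reflect exactly one pair, and by routing the state control so that only one signaling counter is ever incremented per segment, which in turn forces exactly one linear component of $C$ to be applicable. Combining this with the initial-configuration and no-$\mathsf{STOP}$ checks, an infinite word is accepted by both $\Amc_1$ and $\Amc_2$ iff it is a valid encoding, which yields $SR_\omega(\Amc_1) \cap SR_\omega(\Amc_2) \neq \varnothing$ iff $\Mmc$ does not terminate, and hence undecidability of intersection emptiness.
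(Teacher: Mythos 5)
Your proposal is correct and follows essentially the same strategy as the paper's proof: reduce from non-termination of two-counter machines, encode configurations as words with unary counter values, and split the consecutive-pair correctness checks between two deterministic strong reset PA (one checking even-odd pairs, the other odd-even pairs) that reset after each pair and verify the arithmetic in the semi-linear set. The differences (a \texttt{\#}-separator and next-line-number lookahead instead of the paper's explicit instruction symbols, and signaling counters instead of a modular instruction counter) are implementation details that do not change the argument.
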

\begin{proof}
We can encode infinite computations of two-counter machines as infinite words over $\Sigma = \{a,b, 1, 2, \dots, k\} \cup \Sigma_\Isf$, where $\Sigma_\Isf = \{I_a, I_b, D_a, D_b, Z_a, Z_b, \bar{Z}_a, \bar{Z}_b\}$. The idea is as follows. Let $c = (\ell, z_0, z_1)$ be a configuration of $\Mmc$. We encode $c$ as a finite word $w_c = \ell u x \in \Sigma^*$, where $\ell \in \{1,2, \dots, k\}$ encodes the current line number, $u \in \{a,b\}^*$ with $|u|_a = z_0$ and $|u|_b = z_1$ encodes the current counter values, and $x \in \Sigma_\Isf$ encodes the instruction $\Isf_\ell$ of line $\ell$ as follows: 

\begin{itemize}
    \item If $\Isf_\ell = \mathsf{Inc}(Z_0)$, then $x = I_a$, and if $\Isf_\ell = \mathsf{Inc}(Z_1)$, then $x = I_b$.
    \item If $\Isf_\ell = \mathsf{Dec}(Z_0)$, then $x = D_a$, and if $\Isf_\ell = \mathsf{Dec}(Z_1)$, then $x = D_b$.
    \item If $\Isf_\ell = \mathsf{If}\ Z_0 = 0\ \mathsf{goto}\ \ell'\ \mathsf{else}\ \ell''$, and the line number of the unique successor configuration of $c$ is $\ell'$, then $x = Z_a$ (that is, the zero-test is successful). Analogously with $x = Z_b$.
    \item If $\Isf_\ell = \mathsf{If}\ Z_0 = 0\ \mathsf{goto}\ \ell'\ \mathsf{else}\ \ell''$, and the line number of the unique successor configuration of $c$ is $\ell''$, then $x = \bar{Z}_a$ (that is, the zero-test fails). Analogously with $x = \bar{Z}_b$.
\end{itemize}

Let $w_c, w_{c'} \in \Sigma^*$ be two words encoding two configurations of~$\Mmc$. We call $w_c \cdot w_{c'}$ \emph{correct} if $c \vdash c'$. Hence, we can encode a unique infinite computations $c_0 c_1 c_2 \dots$ as an infinite word $w_{c_0} w_{c_1} w_{c_2} \dots$. We show that the $\omega$-language $L = \{w_{c_0} w_{c_1} w_{c_2} \dots \}$ can be written as the intersection of two deterministic strong reset~PA $\omega$-languages. Let
\begin{align*}
L_1 &= \{w_{c_0} w_{c_1} w_{c_2} \dots \mid w_{c_{2i}} w_{c_{2i+1}}\text{ is correct for every } i \geq 0\}, \text{ and} \\
L_2 &= \{w_{c_0} w_{c_1} w_{c_2} \dots \mid w_{c_{2i+1}} w_{c_{2i+2}} \text{ is correct for every } i \geq 0\}. 
\end{align*}

Observe that $L_1 \cap L_2 = L$, and $L$ is empty if and only if the unique computation of $\Mmc$ terminates.
Hence, it remains to show that $L_1$ and $L_2$ are recognized by deterministic strong reset~PA.
We argue for~$L_1$; the argument for $L_2$ is very similar.
The idea is as follows: We construct a deterministic strong reset~PA $\Amc_1$ with five counters that tests the correctness of two consecutive encodings of configurations, say $w_{c_{2i}} \cdot w_{c_{2i + 1}} = \ell_1 u_1 x_1 \cdot \ell_2 u_2 x_2$ with $\ell_1, \ell_2 \in \{1, 2, \dots, k\}$, $u_1 u_2 \in \{a,b\}^*$ and $x_1, x_2 \in \Sigma_\Isf$.
First observe that checking whether $\ell_2$ is indeed the correct line number (that is, the correct successor of~$\ell_1$) can be hard-coded into the state space of $\Amc_1$: if $x_1$ encodes an increment or decrement, we expect $\ell_2 = \ell_1 + 1$, and if $x_1$ encodes a successful or failing zero-test $\mathsf{If}\ Z_i = 0\ \mathsf{goto}\ \ell'\ \mathsf{else}\ \ell''$, we expect $\ell_2 = \ell'$ or $\ell_2 = \ell''$, respectively.
Four counters of $\Amc_1$ are used to count the numbers of $a$'s and $b$'s in $u_1$ and $u_2$, respectively. Then, if $x_1 = I_a$, we expect $|u_2|_a = |u_1|_a + 1$ and $|u_2|_b = |u_1|_b$, and so on.
To be able to perform the correct check, we also encode $x_1$ into the state space as well as the fifth counter by counting modulo~$|\Sigma_\Isf|$.
Observe that the guarded-decrement property ensures that decrements are handled correctly.
Hence, $\Amc_1$ has two sets of states counting the numbers of $a$'s and $b$'s of $u_1$ and $u_2$, accordingly, as well as a set of accepting states that is used to check the counter values. 

After such a check, $\Amc_1$ resets, and continues with the next two (encodings of) configurations. The automaton for $L_2$ works in the same way, but skips the first configuration. 
\end{proof}

\section{Conclusion}
\label{sec:conclusion}
We have studied the expressiveness, closure properties and classical decision problems of the deterministic variants of weak reset~PA, strong reset~PA, reach-reg.~PA, and most notably limit~PA. The latter are closed under the Boolean operations and hence all common decision problems are decidable for them, including the classical model checking problems. 
Closely related to model checking problems are synthesis problems. Here, the problem is to generate a model from a system specification (which is correct by construction). Gale-Stewart games play a key role in solving such synthesis problems~\cite{GaleStewart}. However, these games are undecidable when winning conditions are specified by automata whose emptiness or universality problem is undecidable. 
Our decidability results for deterministic limit PA raise the interesting and important question whether Gale-Stewart games can be solved when their winning condition is expressed by these automata. 

In future work we further plan to study the regular separability problem for these models, that is, given two $\omega$-languages $L_1, L_2$ recognized by~PA operating on infinite words, is there an $\omega$-regular language~$L$ with $L_1 \subseteq L$ and $L_2 \cap L = \varnothing$. 
Solving this problem can be used as an alternative approach to solving existential PA model checking, as (regular) separability implies intersection emptiness. It has already been studied for some related models, \eg PA on finite words~\cite{parikhsep} and Büchi $\mathsf{VASS}$~\cite{buechiVASSsep}.
Furthermore, it remains to classify the intersections of all incomparable models and thereby to provide a fine grained ``map of the universe'' for Parikh recognizable $\omega$-languages.

\bibliography{ref}

\end{document}